\newcommand{\subf}[2]{%
  {\small\begin{tabular}[t]{@{}c@{}}
  #1\\#2
  \end{tabular}}%
  }
\newtheorem{theorem}{Theorem}[section]
\newtheorem{theo}[theorem]{Theorem}
\newtheorem{lem}[theorem]{Lemma}
\newtheorem{prop}[theorem]{Proposition}
\newtheorem{Ex}[theorem]{Example}
\newcommand{\exit}{{\mbox{\, \vspace{3mm}}} \hfill\mbox{$\square$}}
\numberwithin{equation}{section}
\title{Optimal valuation of American callable credit default swaps under drawdown of L\'evy insurance risk process}
\author{Z. Palmowski
\thanks{Faculty of Pure and Applied Mathematics, Wroc\l aw University of Science and Technology, ul. Wyb. Wyspia\'nskiego 27, 50-370 Wroc\l aw, Poland,
email: zbigniew.palmowski@gmail.com}
\thanks{This work is partially supported by Polish National Science Centre Grant
No. 2016/23/B/HS4/00566 (2017-2020).}
, B.A. Surya
\thanks{School of Mathematics and Statistics, Victoria University of Wellington, New Zealand, email: budhi.surya@msor.vuw.ac.nz}
\thanks{The author acknowledges financial support through PBRF research grant No. 220859} }
\date{14 March 2020}
\begin{document}
\maketitle \pagestyle{myheadings} \markboth{Z. Palmowski, B.A. Surya} {American callable credit default swaps under drawdown}
\begin{abstract}
This paper discusses the valuation of credit default swaps, where default is announced when the reference asset price has gone below certain level from the last record maximum, also known as the high-water mark or drawdown. We assume that the protection buyer pays premium at fixed rate when the asset price is above a pre-specified level and continuously pays whenever the price increases. This payment scheme is in favour of the buyer as she only pays the premium when the market is in good condition for the protection against financial downturn. Under this framework, we look at an embedded option which gives the issuer an opportunity to call back the contract to a new one with reduced premium payment rate and slightly lower default coverage subject to paying a certain cost. We assume that the buyer is risk neutral investor trying to maximize the expected monetary value of the option over a class of stopping time. We discuss optimal solution to the stopping problem when the source of uncertainty of the asset price is modelled by L\'evy process with only downward jumps. Using recent development in excursion theory of L\'evy process, the results are given explicitly in terms of scale function of the L\'evy process. Furthermore, the value function of the stopping problem is shown to satisfy continuous and smooth pasting conditions regardless of regularity of the sample paths of the L\'evy process. Optimality and uniqueness of the solution are established using martingale approach for drawdown process and convexity of the scale function under Esscher transform of measure.  Some numerical examples are discussed to illustrate the main results.

\medskip

\textbf{Keywords}: L\'evy process; drawdown; credit risk; credit default swaps

\end{abstract}

\section{Credit default swaps}

Credit default swaps is one of the financial instruments that provide an insurance which may be used to offset financial loss due to a credit event experienced by a borrower. In the credit event, the borrower may not be able to fully meet its obligation to payback the required interest on the debt or principal on time.

Over the past decades, some discussions have been developed on risk protection mechanism against financial asset's outperformance over its last record maximum, or high-water mark also known as the drawdown, which may affect towards fund managers' compensation; see, among others, Agarwal et al. \cite{Agarwal} and Goetzmann et al. \cite{Goetzmann} for details. Unlike the classical Gerber-Shiu ruin theory, see for e.g. Kyprianou \cite{Kyprianou2013} and the literature therein, where default is announced when the underlying asset price process crosses below a threshold, default under drawdown is triggered when the price process has gone below a certain level from its previous maximum. Analysis of the Parisian type of default with reference to the last record maximum is given in Surya \cite{Surya}. Such risk may be protected against using an insurance. In their recent works, Zhang et al. \cite{Zhang} and Palmowski and Tumilewicz \cite{Palmowski2018} discussed fair valuation and design of such insurance. It is worth noticing that in both papers, the insurance premium paid by the protection buyer is not contingent on the insured underlying assets.

When the credit event is due to financial restructuring, which occurs when the financial liabilities of the borrower are changed, the restructuring could change the debt contract's subordination, reducing its debt priority in the event of default, and affect the pricing of credit default swaps in the market. See Altman et al. \cite{Altman} for details. In their empirical work, Berndt \cite{Berndt} found that when default swap rates without restructuring increase, the increase in restructuring premium for CDS is higher for low-credit-quality firms than for high-credit-quality firms. That is restructuring CDS premium depends on firm's specific balance-sheet. 

In the framework of default by drawdown, we extend the credit default swaps problem considered in e.g. Leung and Yamazaki \cite{Leung}, by allowing the protection buyer to pay the premium whenever the price of reference asset is above certain threshold and continues to pay when the price is increasing, and pays nothing otherwise. This payment scheme is in favours of the protection buyer whereby she/he only pays when economy is doing good and stop paying in reverse condition. In this regard and by work of \cite{Berndt}, we let default premium depend on the market condition. To our knowledge, this premium payment scheme is relatively new.

The source of uncertainty in the reference asset is modeled by exponential L\'evy process. For this purpose, let $X=\{X_t:t\geq 0\}$ be a L\'evy process with downward jumps defined on filtered probability space $(\Omega, \mathcal{F}, \{\mathcal{F}_t:t\geq 0\},\mathbb{P})$, where $\mathcal{F}_t$ is the natural filtration of $X$ satisfying the usual conditions of right-continuity and completeness. Denote by $\{\mathbb{P}_x, x\in\mathbb{R}\}$ the family of probability measure corresponding to a translation of $X$ such that $X_0=x$, with $\mathbb{P}=\mathbb{P}_0$.
We assume that $0$ is irregular for $(-\infty, 0)$. In particular, we exclude in this paper a downward subordinator case.
The classical example that we have in mind is a Cram\'er-Lundberg risk process with positive drift $c>0$.

 The L\'evy-It\^o sample paths decomposition of the L\'evy process is given by
\begin{equation}\label{eq:LevyIto}
\begin{split}
X_t=\mu t + \sigma B_t &+\int_0^t\int_{\{x<-1\}} x\nu(dx,ds) \\
&+ \int_0^t\int_{\{-1\leq x <0\}} x\big(\nu(dx,ds)-\Pi(dx)ds\big),
\end{split}
\end{equation}
where $\mu\in\mathbb{R}$, $\sigma\geq0$ and $(B_t)_{t\geq0}$ is standard Brownian motion, whilst $\nu(dx,dt)$ denotes the Poisson random measure associated with the jumps process $\Delta X_t:=X_t-X_{t-}$ of $X$. This Poisson random measure has compensator given by $\Pi(dx)dt$, where $\Pi$ is the L\'evy measure satisfying the integrability condition:
\begin{equation}\label{eq:intcond}
\int_{-\infty}^0 (1\wedge x^2)\Pi(dx)<\infty.
\end{equation}

Due to the absence of positive jumps, it is therefore sensible to define
\begin{equation}\label{eq:exponent}
\psi(\lambda)=\frac{1}{t}\log\mathbb{E}\big\{e^{\lambda
X_{t}}\big\}=\mu\lambda
+\frac{1}{2}\sigma^{2}\lambda^{2}+\int_{(-\infty,0)}\big(e^{\lambda
x}-1-\lambda x\mathbf{1}_{\{x>-1\}}\big)\Pi(dx),
\end{equation}
the Laplace exponent of $X$, which is analytic on ($\mathfrak{Im}(\lambda)\leq 0$). It is easily
shown that $\psi$ is zero at the origin, tends to infinity at
infinity and is strictly convex.
In the case of Cram\'er-Lundberg model
$c=\mu- \int_{(-\infty,0)}x\mathbf{1}_{\{x>-1\}}\Pi(dx)$
and $\Pi(dx)=\beta F(dx)$ for Poisson arrival intensity $\beta>0$ and distribution function
of downward jumps $F$.
We denote by
$\Phi:[0,\infty)\rightarrow [0,\infty)$ the right continuous inverse
of $\psi(\lambda)$, so that
\begin{equation*}
\Phi(\theta)=\sup\{p>0:\psi(p)=\theta\} \quad \textrm{and} \quad
\psi(\Phi(\lambda))=\lambda \quad \text{for all} \quad \lambda \geq
0.
\end{equation*}
We refer to Ch. VI in Bertoin \cite{Bertoin} or Ch. 2 in Kyprianou \cite{Kyprianou} for details.

Next, we denote by $\overline{X}_t=\sup_{0\leq s \leq t} X_s$ the running maximum of $X$ up to time $t$ and assume that  from some arbitrary prior point of reference in time $X$ has the current maximum $y\geq x$. Define $S_t=\overline{X}_t\vee y$, where $a\vee b=\max\{a,b\}$ and the reflected process $Y_t=S_t-X_t$. Recall that the process $Y=\{Y_t:t\geq 0\}$ possesses strong Markov property. Furthermore, we alter slightly our notation for the probability measure $\mathbb{P}_{x,y}$ under which at time zero $X$ has the current maximum $y\geq x$ and position $x\in\mathbb{R}$, and we simply write $\mathbb{P}_{\vert y}:=\mathbb{P}_{0,y}$ to denote the law of $Y$ under which $Y_0=y$, and use the notation $\mathbb{E}_x$,  $\mathbb{E}_{x,y}$ and $\mathbb{E}_{\vert y}$ to define the corresponding expectation operator to the above probability measures.

Following \cite{Palmowski2018, Surya, Zhang}, default is announced as soon as the underlying L\'evy process has gone below a fixed level $b>0$ from its last record maximum,
\begin{equation*}
\tau_b^+=\inf\{t>0: Y_t>b\}.
\end{equation*}
We assume that an optimal default level $b$ has been chosen endogenously by optimizing the CDS issuing firm's equity/capital structure such as discussed, among others, in Leland and Toft \cite{Leland} and Kyprianou and Surya \cite{Kyprianou2007}. Under a $T-$year CDS on a unit face value, the protection buyer continuously pays
with rate $pS_t$ for fixed $p>0$ whenever the reference asset price is above current maximum and increasing over time until default occurs or maturity $T$, whichever is sooner.
For the Cram\'er-Lundberg risk process with the drift $c>0$ on account that $c\mathbf{1}_{\{Y_t=0\}}dt\stackrel{d}{=}dS_t$ under $\mathbb{P}_{\vert 0}$ we have that
$pc$ is a fixed CDS premium.
If default occurs prior to $T$, the buyer will receive the default payment $\alpha:=1-R$ at default time $\tau_b^+$, where $R$ is the assumed constant recovery rate (typically $40\%$). From the buyer's perspective, the expected discounted payoff of CDS is
\begin{equation}\label{eq:payoff}
\overline{C}_T(x,y,b; p,\alpha)=\mathbb{E}_{x, y}\Big[-\int_0^{\tau_b^+ \wedge T} e^{-rt} pdS_t + \alpha e^{-r\tau_b^+}\mathbf{1}_{\{\tau_b^+\leq T\}}\Big],
\end{equation}
where $r>0$ is a fixed risk-free interest rate. The quantity $\overline{C}_T(x,y,b; p,\alpha)$ can be viewed as the market price for the buyer to enter (or long) a CDS contract with agreed premium $p>0$, promised payment upon default $\alpha$ and the maturity $T$.

On the opposite side of the trade, the protection seller's expected cash flow is $-\overline{C}_T(x,y,b; p,\alpha)=\overline{C}_T(x,y,b; -p,-\alpha).$ In practice, the CDS premium $\overline{p}$ is set as such that $\overline{C}_T(x,y,b; \overline{p},\alpha) =0$, yielding zero expected cash flows for both parties. Following (\ref{eq:payoff}) it is straightforward to show that the credit spreads $\overline{p}$ is

\begin{equation}\label{eq:spread}
\overline{p}=\frac{\alpha \mathbb{E}_{x,y}\big[e^{-r\tau_b^+}\mathbf{1}_{\{\tau_b^+\leq T\}}\big]}{
\mathbb{E}_{x,y}\big[\int_0^{\tau_b^+ \wedge T} e^{-rt} dS_t\big]}.
\end{equation}
Recall that the two quantities $\mathbb{E}_{x,y}\big[\int_0^{\tau_b^+ \wedge T} e^{-rt} dS_t\big]$
and $\mathbb{E}_{x,y}\big[e^{-r\tau_b^+}\mathbf{1}_{\{\tau_b^+\leq T\}}\big]$ are not in general available in explicit form. However, following Theorem 1 in Avram et al. \cite{Avram2004} their Laplace transforms on maturity $T$ are given in terms of the so-called scale function $W^{(u)}(x)$ whose Laplace transform is defined by
\begin{equation}\label{eq:scale}
\int_0^{\infty} e^{-\lambda x} W^{(u)}(x) dx =\frac{1}{\psi(\lambda)-u}, \quad \textrm{for $\lambda>\Phi(u)$, $u>0$,}
\end{equation}
with $W^{(u)}(x)=0$ for $x<0$, which is increasing and is continuously differentiable for $x>0$ when $X$ has paths of unbounded variation and bounded variation with the L\'evy measure $\Pi$ has no atoms which is assumed from now on.
See \cite{Kyprianou} for details. We write $W(x)=W^{(0)}(x)$.

Define the function $Z^{(u)}(x):=1+u\int_{0}^x W^{(u)}(z) dz.$ Then, following \cite{Avram2004},
\begin{equation}\label{eq:identity0}
\mathbb{E}_{x,y}\big[e^{-\lambda \tau_b^+}\big]=Z^{(\lambda)}(b-z)-\lambda\frac{W^{(\lambda)}(b)}{W^{(\lambda)\prime}(b)}W^{(\lambda)}(b-z),
\end{equation}
where $z=y-x\geq 0$.

Similarly, the Laplace transform of the discounted payoff $\int_0^\infty e^{-\beta T}\mathbb{E}_{x,y}\big[\int_0^{\tau_b^+ \wedge T} e^{-rt}  dS_t\big]dT$ could be derived from the identity \eqref{eq:identity1} proved in \cite{Avram2007} and applied for the exponentially killed L\'evy process $X$ with intensity $\beta$, that is by taking $q=r+\beta$ there.
Recall that due to spatial homogeneity of the sample paths of the L\'evy process $X$, it is therefore sufficient to consider the valuation (\ref{eq:payoff}) under the measure $\mathbb{P}_{\vert y}.$ To simplify analysis from now on
we focus only on the perpetual counterpart of (\ref{eq:payoff}):
\begin{equation}\label{eq:perpayoff}
\overline{C}_{\infty}(y,b; p,\alpha)=\mathbb{E}_{\vert y}\Big[-\int_0^{\tau_b^+} e^{-rt} p dS_t + \alpha e^{-r\tau_b^+}\mathbf{1}_{\{\tau_b^+ < \infty\}}\Big].
\end{equation}
The main goal of this work is pricing American callable credit default swaps.

The organization of this paper is as follows. Section 2 discusses American callable credit default swaps in further details and presents some preliminary results required to solve the corresponding  optimal stopping problem. Solution of the stopping problem is presented in Section 3. Optimality and uniqueness of the solution are discussed in Section 4. Some numerical examples are given in Section 5 to exemplify the main results. Section 6 concludes this paper.

\section{American callable credit default swaps}

Following \cite{Leung} we consider a credit default swap contract under drawdown which allows the issuer to call/replace the CDS contract once with reduced premium payment rate and lower default coverage. By doing so, the issuer is subject to paying a certain amount of fee $\gamma$. For convenient, we consider perpetual case.

At any time prior to default, the buyer can select a stopping time $\theta$ to switch to a new contract with a new premium payment rate $\widehat{p}<p$ and default coverage $\widehat{\alpha}<\alpha$ subject to a fee payment $\gamma$.  The default payment then changes from $\alpha$ to $\widehat{\alpha}=q\alpha$, $q<1$, after the exercise time $\theta$ of the new contract. Given that the buyer is risk neutral, she/he is interested in finding optimal stopping time $\theta$ to switch the CDS contract so as to maximizes the net expected cash flow:
\begin{align}
V_b(y; p, \widehat{p},\alpha,\widehat{\alpha},\gamma)
&=\sup_{\theta\in\mathcal{T}_{[0,\infty)}}\mathbb{E}_{\vert y}\Big[-\int_0^{\theta\wedge \tau_b^+} e^{-rt}pdS_t + e^{-r \tau_b^+}\big(\widehat{\alpha} \mathbf{1}_{\{\theta \leq \tau_b^+\}} + \alpha  \mathbf{1}_{\{\theta > \tau_b^+\}}\big) \nonumber \\
&\hspace{2cm}-\mathbf{1}_{\{\theta \leq \tau_b^+\}}\Big(\int_{\theta}^{\tau_b^+} e^{-rt} \widehat{p} dS_t + e^{-r\theta}\gamma \Big)\Big], \label{eq:perpayoff2}
\end{align}
where $\mathcal{T}_{[a,b)}$ denotes the class of $\mathcal{F}_t-$stopping times taking values in $[a,b), a\geq0.$ Notice that the pay-off structure (\ref{eq:perpayoff2}) is slightly different from that of \cite{Leung}.
\begin{prop}\label{representation}
Define $\widetilde{p}=\widehat{p}-p$ and $\widetilde{\alpha}=\widehat{\alpha}-\alpha.$ Then, following (\ref{eq:perpayoff2}),
\begin{equation}\label{eq:perpost1}
V_b(y; p, \widehat{p},\alpha,\widehat{\alpha},\gamma)=\overline{C}_{\infty}(y,b; p,\alpha) + \mathcal{V}_b(y;\widetilde{p},\widetilde{\alpha},\gamma),
\end{equation}
where $\mathcal{V}_b(y;\widetilde{p},\widetilde{\alpha},\gamma)$, for $y\in(0,b)$, is the value function of the stopping problem,
\begin{equation}\label{eq:perpost2}
\mathcal{V}_b(y;\widetilde{p},\widetilde{\alpha},\gamma)=\sup_{\theta\in\mathcal{T}_{[0,\infty)}}\mathbb{E}_{\vert y}\Big[\mathbf{1}_{\{\theta \leq \tau_b^+\}}\Big(-\int_{\theta}^{\tau_b^+}e^{-rt}\widetilde{p} dS_t  + e^{-r\tau_b^+}\widetilde{\alpha} - e^{-r\theta} \gamma\Big)\Big].
\end{equation}
\end{prop}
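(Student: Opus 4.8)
The plan is to prove \eqref{eq:perpost1} by a purely algebraic rearrangement of the random payoff inside the supremum defining $V_b$ in \eqref{eq:perpayoff2}: I will split it into a $\theta$-independent piece that reproduces the base value $\overline{C}_{\infty}(y,b;p,\alpha)$ and a $\theta$-dependent piece that matches the reward in \eqref{eq:perpost2}. Since the first piece does not involve $\theta$, its expectation factors out of the supremum, giving the additive decomposition at once.

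First I would split the premium integral over the random horizon $\theta\wedge\tau_b^+$. Writing $\int_0^{\theta\wedge\tau_b^+}=\int_0^{\tau_b^+}-\int_{\theta\wedge\tau_b^+}^{\tau_b^+}$ and noting that the last integral is null on $\{\theta>\tau_b^+\}$ and equals $\int_{\theta}^{\tau_b^+}$ on $\{\theta\le\tau_b^+\}$ yields
\begin{equation*}
-\int_0^{\theta\wedge\tau_b^+} e^{-rt}p\,dS_t = -\int_0^{\tau_b^+} e^{-rt}p\,dS_t + \mathbf{1}_{\{\theta\le\tau_b^+\}}\int_{\theta}^{\tau_b^+} e^{-rt}p\,dS_t.
\end{equation*}
Next I would rewrite the coverage term using $\mathbf{1}_{\{\theta\le\tau_b^+\}}+\mathbf{1}_{\{\theta>\tau_b^+\}}=1$ and $\widetilde{\alpha}=\widehat{\alpha}-\alpha$, so that $\widehat{\alpha}\mathbf{1}_{\{\theta\le\tau_b^+\}}+\alpha\mathbf{1}_{\{\theta>\tau_b^+\}}=\alpha+\widetilde{\alpha}\mathbf{1}_{\{\theta\le\tau_b^+\}}$.

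Collecting the terms free of $\theta$ gives $-\int_0^{\tau_b^+}e^{-rt}p\,dS_t+\alpha e^{-r\tau_b^+}$, i.e.\ exactly the integrand of $\overline{C}_{\infty}(y,b;p,\alpha)$ in \eqref{eq:perpayoff} (with $e^{-r\tau_b^+}=0$ on $\{\tau_b^+=\infty\}$, so the indicator $\mathbf{1}_{\{\tau_b^+<\infty\}}$ is immaterial for $r>0$). The remaining terms each carry $\mathbf{1}_{\{\theta\le\tau_b^+\}}$; merging the two premium integrals on $[\theta,\tau_b^+]$ through $p-\widehat{p}=-\widetilde{p}$ collapses them to
\begin{equation*}
\mathbf{1}_{\{\theta\le\tau_b^+\}}\Big(-\int_{\theta}^{\tau_b^+} e^{-rt}\widetilde{p}\,dS_t + e^{-r\tau_b^+}\widetilde{\alpha} - e^{-r\theta}\gamma\Big),
\end{equation*}
which is precisely the reward inside the supremum in \eqref{eq:perpost2}. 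Taking $\mathbb{E}_{\vert y}$ and pulling the constant $\overline{C}_{\infty}(y,b;p,\alpha)$ out of $\sup_{\theta}$ then delivers \eqref{eq:perpost1}.

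I expect the argument to be essentially routine; the only point demanding care is the bookkeeping of the complementary events $\{\theta\le\tau_b^+\}$ and $\{\theta>\tau_b^+\}$ when splitting the stochastic integral against $dS_t$, together with a check that the expectations are finite so that decomposing the supremum is justified (this holds since $r>0$ and $S$ is nondecreasing, controlling $\int_0^{\tau_b^+}e^{-rt}\,dS_t$). There is no genuine analytic obstacle: the substance of the proposition is simply that the embedded call option enters additively, decoupling the plain CDS value from the optional switching term $\mathcal{V}_b$.
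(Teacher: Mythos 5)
Your proposal is correct and follows essentially the same route as the paper's proof: a pathwise algebraic rearrangement of the payoff in (\ref{eq:perpayoff2}) into the $\theta$-free term $-\int_0^{\tau_b^+}e^{-rt}p\,dS_t+\alpha e^{-r\tau_b^+}$ plus the $\theta$-dependent term carried by $\mathbf{1}_{\{\theta\leq\tau_b^+\}}$, after which linearity of expectation and the supremum give (\ref{eq:perpost1}). Your added remarks on the convention $e^{-r\tau_b^+}=0$ on $\{\tau_b^+=\infty\}$ and on integrability (so the constant term can be pulled out of the supremum) are points the paper leaves implicit, but they do not change the argument.
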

{\it Proof} By rearranging the payoff structure of the value function (\ref{eq:perpayoff2}),
\begin{eqnarray*}
&&-\int_0^{\theta\wedge \tau_b^+} e^{-rt}p dS_t + e^{-r \tau_b^+}\big(\widehat{\alpha} \mathbf{1}_{\{\theta \leq \tau_b^+\}} + \alpha  \mathbf{1}_{\{\theta > \tau_b^+\}}\big) \notag \\
&&\hspace{4.5cm}-\mathbf{1}_{\{\theta \leq \tau_b^+\}}\Big(\int_{\theta}^{\tau_b^+} e^{-rt} \widehat{p} dS_t + e^{-r\theta}\gamma \Big) \notag\\
&=&\mathbf{1}_{\{\theta \leq \tau_b^+\}}\Big[-\int_{\theta}^{\tau_b^+}e^{-rt}\widetilde{p}dS_t + e^{-r\tau_b^+}\widetilde{\alpha} - e^{-r\theta} \gamma \Big] \notag\\
&&\hspace{4.5cm}-\int_0^{\tau_b^+}e^{-rt} p dS_t + e^{-r\tau_b^+}\alpha. \quad \exit
\end{eqnarray*}
This produces the assertion of the proposition.
\exit
%\end{proof}

We observe following (\ref{eq:perpost1}) that the introduction of the embedded option increases the buyer's expected monetary value by the value function $\mathcal{V}_b(y;\widetilde{p},\widetilde{\alpha},\gamma)$.

We will show in the next section that an optimal stopping to the problem (\ref{eq:perpost2}) takes the form of first passage below a level of the drawdown process $Y$
\begin{equation}\label{eq:fptbelow}
\tau_h^-=\inf\{t>0: Y_t <h\}, \; \textrm{with $h>0$, under $\mathbb{P}_{\vert y}$}.
\end{equation}
To this end, we denote by $h^{\star}\in(0,b)$ the largest root, whenever it exists, of
\begin{equation}\label{eq:bigroot}
\widetilde{\alpha} Z^{(r)}(b-h)-r \widetilde{\alpha}\frac{\big(W^{(r)}(b-h)\big)^2}{W^{(r)\prime}(b-h)}-\gamma=0.
\end{equation}
In the section below we give some preliminary results to solve (\ref{eq:perpost2}).

\begin{figure}[t!]
\centering
\begin{tabular}{cc}
%\hline
\subf{\includegraphics[width=72.5mm]{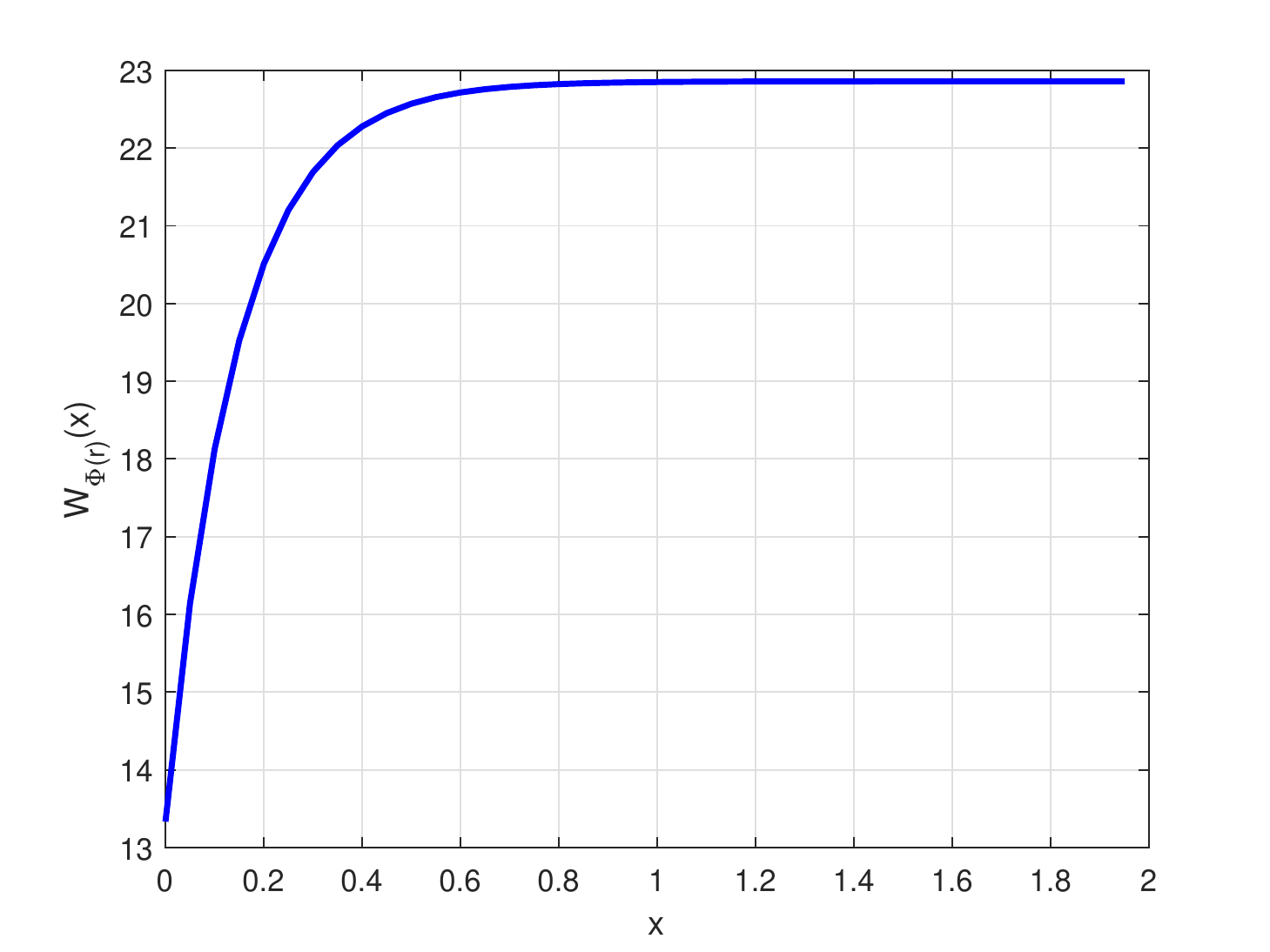}}
       {the case $\mu=0.075$ and $\sigma=0$.}

&
\subf{\includegraphics[width=72.5mm]{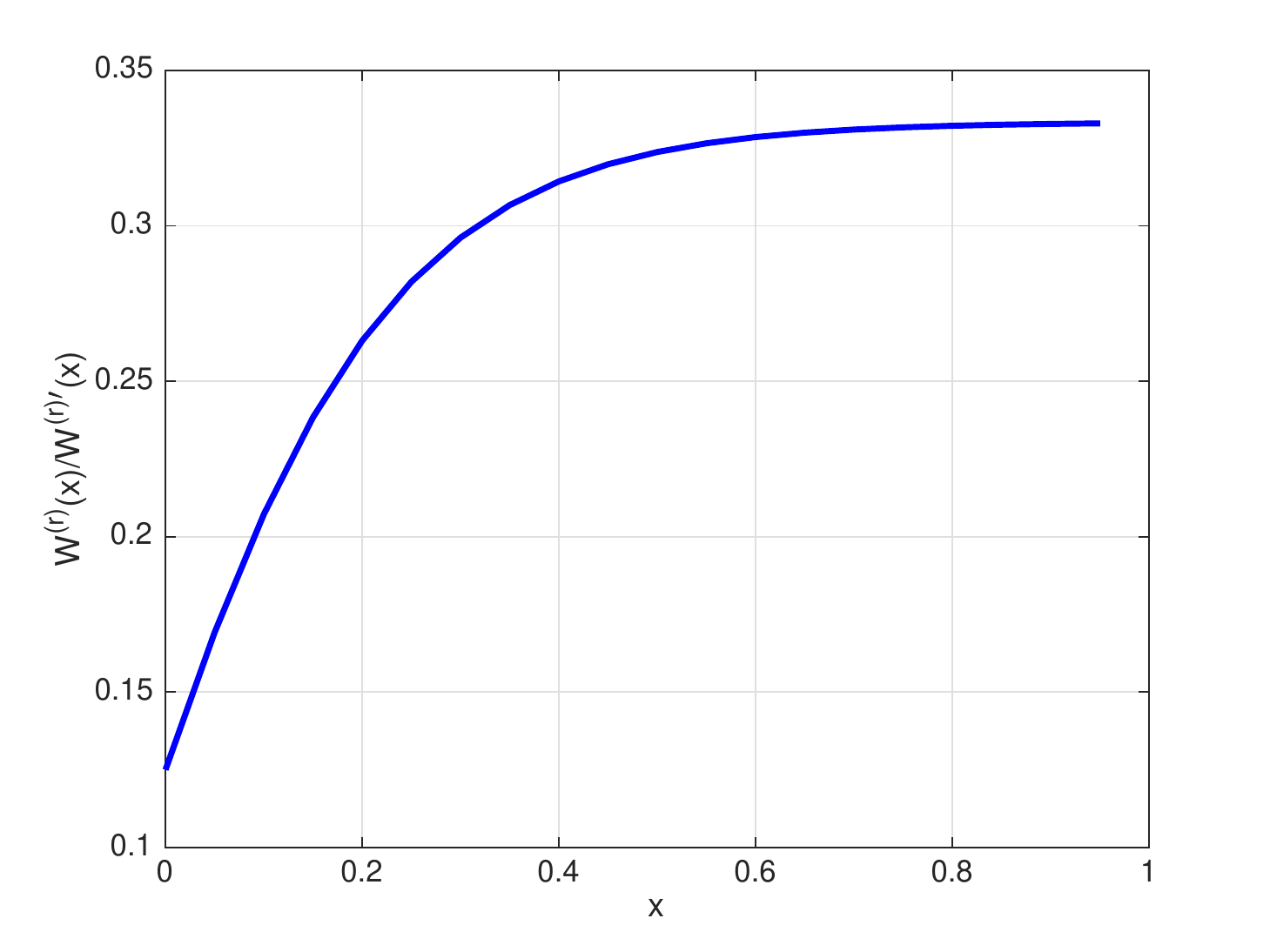}}
     {the case  $\mu=0.075$ and $\sigma=0$.}
\\
%\hline
\subf{\includegraphics[width=72.5mm]{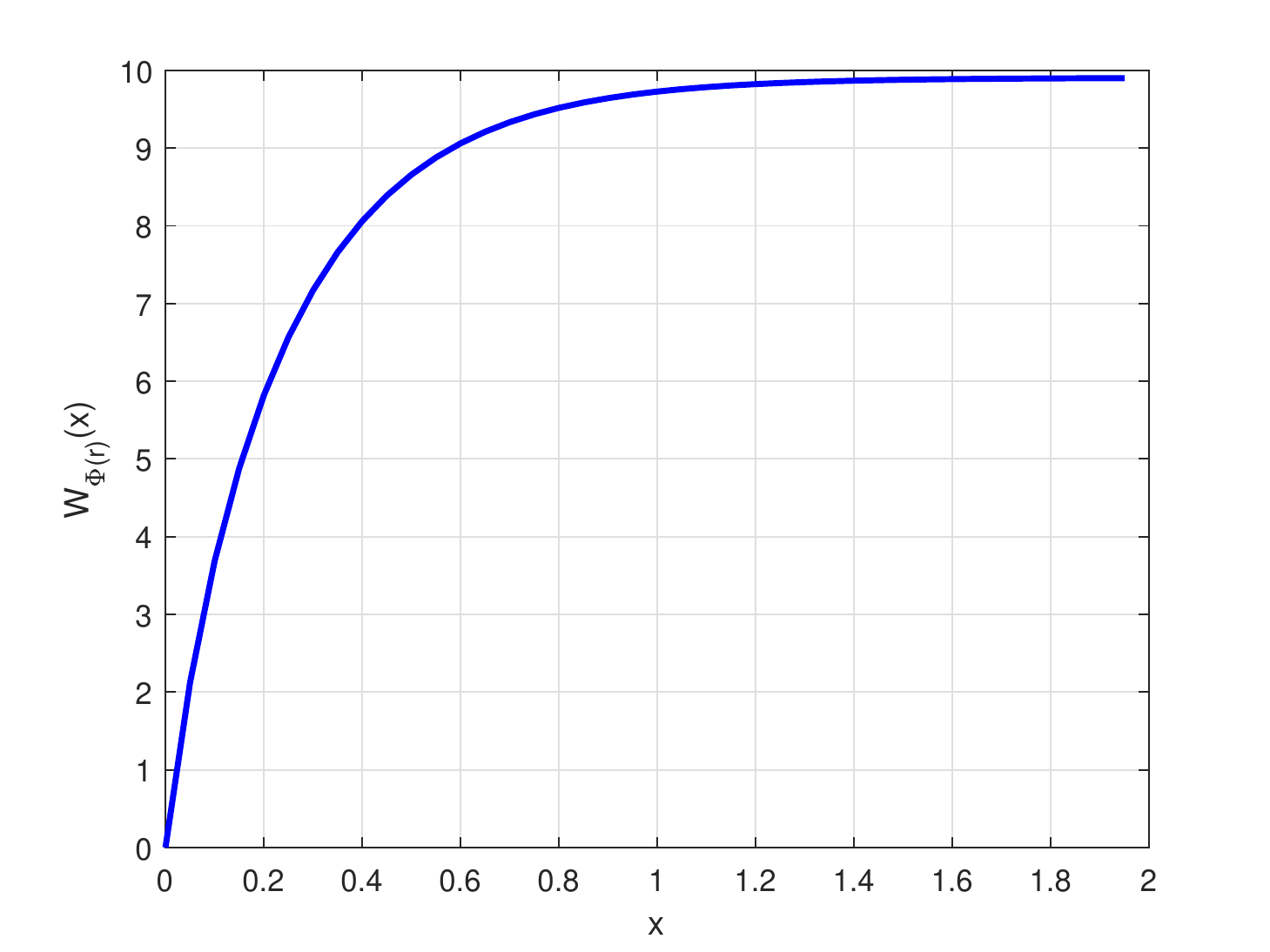}}
       {the case  $\mu=0.075$ and $\sigma=0.2$.}
&
\subf{\includegraphics[width=72.5mm]{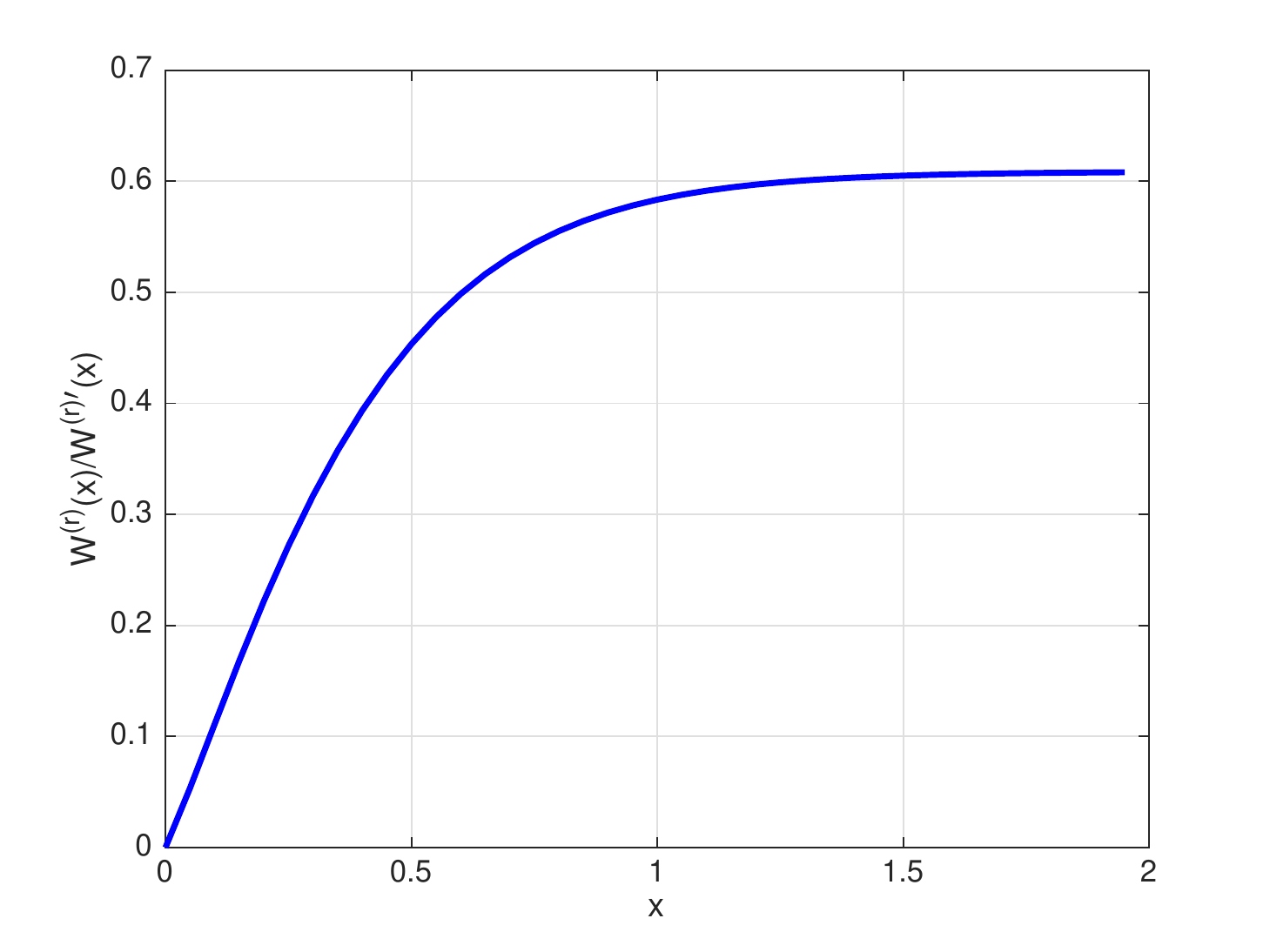}}
     {the case  $\mu=0.075$ and $\sigma=0.2$.}
\\
%\hline
\end{tabular}
\caption{Plots of $W_{\Phi(r)}(x)$ (\ref{eq:ScaleF2}) and $W^{(r)}(x)/W^{(r)\prime}(x)$, with $r=0.1$, for jump-diffusion process with $\psi(\lambda)=\mu\lambda + \frac{\sigma^2}{2}\lambda^2 -\frac{a\lambda}{\lambda+c}$ for $a=0.5$ and $c=9$ .}\label{fig:scale}
\end{figure}

\subsection{Preliminaries}
The decomposition of the value function $V_b(y; p, \widehat{p},\alpha,\widehat{\alpha},\gamma)$(\ref{eq:perpost1})-(\ref{eq:perpost2}) can be represented in terms of the scale function $W^{(u)}(x)$ (\ref{eq:scale}). The scale function $W^{(u)}(x)$ plays an important role in getting semi-explicit solution to the two-sided exit problem of L\'evy process $X$ as shown by the following identity. Let $T_b^+$ and $T_b^-$ be respectively the first entrance time of $X$ into $(b,\infty)$ and $(-\infty,-b)$, for $b>0$, defined by the $\mathcal{F}_t-$stopping times $T_{-b}^-=\inf\{t\geq 0: X_t<-b\}$ and $T_b^+=\inf\{t\geq 0: X_t>b\}$. Under the measure $\mathbb{P}_x$, the identity concerning the first exit of $X$ above level $b\geq x$ before first passage of $X$ below zero is given by
\begin{equation}\label{eq:fpt}
\mathbb{E}_x\big[e^{-uT_b^+} \mathbf{1}_{\{T_b^+<T_0^-\}}\big]=\frac{W^{(u)}(x)}{W^{(u)}(b)}.
\end{equation}

The key to obtaining solution to (\ref{eq:perpost1})-(\ref{eq:perpost2}) is given by the following identities.
We remind that we consider only the case $X$ has either paths of unbounded or bounded variations with absolutely continuous jumps.
The last case is equivalent to Cram\'er-Lundberg risk process $X$ with positive drift $c>0$ (since we excluded downward subordinator from our considerations).
\begin{prop}\label{prop:prop1}
For given $q,b>0$ the following identities hold $\forall y\in(0,b)$:
\begin{align}
\mathbb{E}_{\vert y}\Big[\int_0^{\tau_b^+}e^{-qt}dS_t\Big]&=\frac{W^{(q)}(b-y)}{W^{(q)\prime}(b)},\label{eq:identity1}\\
\mathbb{E}_{\vert y}\big[e^{-q\tau_b^+}\big]&= Z^{(q)}(b-y)-q\frac{W^{(q)}(b)}{W^{(q)\prime}(b)}  W^{(q)}(b-y). \label{eq:identity2}
\end{align}

\end{prop}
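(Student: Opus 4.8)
The plan is to treat the two identities separately: \eqref{eq:identity2} is essentially a restatement of a formula already recorded, while \eqref{eq:identity1} needs a genuine computation.

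For \eqref{eq:identity2} I would simply note that under $\mathbb{P}_{\vert y}=\mathbb{P}_{0,y}$ we have $x=0$, hence $z=y-x=y$, so the claim is exactly the Avram et al.\ identity \eqref{eq:identity0} evaluated at $x=0$. By the spatial homogeneity of $X$ (already invoked in the excerpt to reduce valuations to the measure $\mathbb{P}_{\vert y}$) nothing further is required.

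For \eqref{eq:identity1} write $g(y):=\mathbb{E}_{\vert y}\big[\int_0^{\tau_b^+}e^{-qt}\,dS_t\big]$. Two structural facts drive the argument: $S$ increases only on $\{Y=0\}$ (i.e.\ when $X$ sits at a fresh maximum), and before $Y$ first returns to $0$ the maximum $S$ is frozen at its initial value $y$. Let $\tau_0:=\inf\{t>0:Y_t=0\}$. On $\{\tau_b^+<\tau_0\}$ we have $S\equiv y$ on $[0,\tau_b^+]$, so the integral vanishes; on $\{\tau_0<\tau_b^+\}$ the part over $[0,\tau_0]$ vanishes, and since $X$ has no positive jumps it creeps up to level $y$, so $Y_{\tau_0}=0$ exactly and the strong Markov property of $Y$ at $\tau_0$ yields the factorisation
\[
g(y)=\mathbb{E}_{\vert y}\big[e^{-q\tau_0}\mathbf{1}_{\{\tau_0<\tau_b^+\}}\big]\,g(0).
\]
The prefactor is a two-sided exit problem for $X$: before $\tau_0$ we have $S_t\equiv y$, hence $Y_t=y-X_t$, so $\tau_0=T_y^+$ and $\{Y>b\}=\{X<y-b\}$. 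Shifting the interval $[y-b,y]$ onto $[0,b]$ and applying \eqref{eq:fpt} gives $\mathbb{E}_{\vert y}[e^{-q\tau_0}\mathbf{1}_{\{\tau_0<\tau_b^+\}}]=W^{(q)}(b-y)/W^{(q)}(b)$.

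It remains to compute $g(0)$, and I expect this to be the main obstacle. Here I would use excursion theory of $Y$ away from $0$: identifying $S=\overline{X}$ as local time of $Y$ at $0$, the inverse local time $L^{-1}_\ell=T_\ell^+$ is a subordinator (with exponent $\Phi(q)$) whose jumps are the durations of the excursions, which form a Poisson point process under the excursion measure $n$. Passing the integral to the local-time clock and using that the first excursion of height exceeding $b$ realises $\tau_b^+$, the exponential formula gives
\[
g(0)=\int_0^\infty \mathbb{E}\big[e^{-qL^{-1}_\ell};\ \overline{\epsilon}_s\le b\ \forall\, s\le\ell\big]\,d\ell=\frac{1}{\phi_b(q)},\qquad \phi_b(q)=\mathtt{d}\,q+\int\big(1-e^{-q\zeta(\epsilon)}\mathbf{1}_{\{\overline{\epsilon}\le b\}}\big)\,n(d\epsilon),
\]
where $\mathtt{d}\ge0$ is the drift of $L^{-1}$, and $\zeta,\overline{\epsilon}$ denote excursion length and height. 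The crux is then the scale-function identification $\phi_b(q)=W^{(q)\prime}(b)/W^{(q)}(b)$, which must be checked uniformly across the unbounded-variation and the (absolutely continuous) bounded-variation cases; this is the delicate point, since it encodes the excursion-measure functional in terms of $W^{(q)}$. Granting it, $g(0)=W^{(q)}(b)/W^{(q)\prime}(b)$, and combining with Steps~1--2 yields $g(y)=W^{(q)}(b-y)/W^{(q)\prime}(b)$, proving \eqref{eq:identity1}. As a shortcut that sidesteps the excursion-measure identification, I would alternatively recognise $g$ as the de Finetti barrier-dividend value of the reflected process, via the change of variables $\widetilde{Y}:=b-(X-L)$, so that $g$ equals the expected discounted local time paid at the barrier; this quantity is precisely the one established in Avram et al.\ \cite{Avram2007}, giving the stated formula directly.
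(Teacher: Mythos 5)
Your handling of \eqref{eq:identity2} is exactly the paper's: the second identity is \eqref{eq:identity0}, cited from Avram et al.\ \cite{Avram2004}, read under $\mathbb{P}_{\vert y}=\mathbb{P}_{0,y}$ so that $z=y-x=y$; nothing more is said in the paper either.

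For \eqref{eq:identity1} your main argument is correct but genuinely different from the paper's, which contains no computation at all: it disposes of \eqref{eq:identity1} in the remark following the proposition, identifying it as the dual of identity (3.12) of Avram et al.\ \cite{Avram2007} for the dividend-controlled process $U_t=-Y_t$ reflected at a barrier --- precisely the shortcut you offer in your last sentence. Your primary route is sound at every step it makes explicit: the strong Markov factorisation $g(y)=\mathbb{E}_{\vert y}\big[e^{-q\tau_0}\mathbf{1}_{\{\tau_0<\tau_b^+\}}\big]g(0)$ is valid (before $\tau_0$ one has $S\equiv y$, hence $Y_t=y-X_t$, and creeping upward gives $Y_{\tau_0}=0$), and the prefactor $W^{(q)}(b-y)/W^{(q)}(b)$ obtained from \eqref{eq:fpt} agrees with the paper's own \eqref{eq:fptbelow2} at $a=0$. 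The one step you leave as a black box, $\phi_b(q)=W^{(q)\prime}(b)/W^{(q)}(b)$, is true and can in fact be closed with ingredients the paper already supplies: $\mathtt{d}q+n(1-e^{-q\zeta})=\Phi(q)$ since the inverse local time is $T^+_\ell$ with Laplace exponent $\Phi$; over a completed excursion the Esscher density $e^{\Phi(q)(X_{\mathrm{end}}-X_{\mathrm{start}})-q\zeta}$ reduces to $e^{-q\zeta}$ because the excursion starts and ends at the same running maximum, so $n\big(e^{-q\zeta};\overline{\epsilon}>b\big)=n^{\Phi(q)}(\overline{\epsilon}>b)=W_{\Phi(q)}^{\prime}(b)/W_{\Phi(q)}(b)$ by Lemma 8.2 of \cite{Kyprianou} (quoted in the paper), and the factorisation $W^{(q)}(x)=e^{\Phi(q)x}W_{\Phi(q)}(x)$ assembles these into $\phi_b(q)=\Phi(q)+W_{\Phi(q)}^{\prime}(b)/W_{\Phi(q)}(b)=W^{(q)\prime}(b)/W^{(q)}(b)$. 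What each approach buys: your excursion derivation is self-contained and explains \emph{why} $W^{(q)\prime}(b)$ appears (it is the killing rate of the marked excursion point process), at the cost of this excursion-measure lemma, which must indeed be checked uniformly in the bounded- and unbounded-variation cases as you note; the paper's duality citation is a one-liner but imports the identity wholesale from the de Finetti dividend literature.
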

Note that the identity (\ref{eq:identity1}) is the dual version of the identity (3.12) of  \cite{Avram2007} under dividend controlled risk process $U_t$, which is equal in distribution under $\mathbb{P}_{\vert 0}$ to the dual of the drawdown process $Y_t$, i.e., $U_t=-Y_t$.

It is also worth noting that under a new change of measure $\mathbb{P}_x^{\nu}$ defined by the Esscher transform $d\mathbb{P}_x^{\nu}/d\mathbb{P}_x=e^{\nu(X_t-x)-\psi(\nu)t}$, $(X,\mathbb{P}_x^{\nu})$ is a spectrally negative L\'evy process. Under the new measure, it is straightforward to check by taking Laplace transform on both sides that $W^{(u)}(x)=e^{\Phi(u)x}W_{\Phi(u)}(x)$, where $W_{\Phi(u)}(x)=W^{(0)}_{\Phi(u)}(x)$ is the scale function under $\mathbb{P}^{\Phi(u)}$.
Now from Lemma 8.2 of \cite{Kyprianou} it follows that $\frac{W_{\Phi(u)}(x)}{W^{\prime}_{\Phi(u)}(x)}$ is
monotone increasing function in $x$ as it is a reciprocal of the rate of excursions larger than $x$. Indeed, the identity (\ref{eq:identity1}) justifies this assertion under $\mathbb{P}_{\vert 0}$.
In this way we have the following result that will be used later to establish optimality and uniqueness of the solution to the stopping problem (\ref{eq:perpost2}).
\begin{lem}
For $u\geq 0$, $W^{(u)}(x)/W^{(u)\prime}(x)$ is monotone increasing in $x$, i.e.,
\begin{align}\label{eq:monotone}
\frac{d}{dx} \Big(\frac{W^{(u)}(x)}{W^{(u)\prime}(x)}\Big)\; > \;0, \; \forall \; x\geq 0, \; \textrm{and is bounded above by $1/\Phi(u)$.}
\end{align}
\end{lem}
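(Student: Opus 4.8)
The plan is to peel off the exponential factor via the Esscher transform already set up before the lemma, reducing the claim to a monotonicity statement for the driftless transformed scale function, and then to read that statement off the excursion-rate interpretation. Write $\Phi:=\Phi(u)$ and recall $W^{(u)}(x)=e^{\Phi x}W_{\Phi}(x)$. Differentiating gives $W^{(u)\prime}(x)=e^{\Phi x}\big(\Phi W_{\Phi}(x)+W_{\Phi}^{\prime}(x)\big)$, so the exponential cancels and
\[
\frac{W^{(u)}(x)}{W^{(u)\prime}(x)}=\frac{W_{\Phi}(x)}{\Phi W_{\Phi}(x)+W_{\Phi}^{\prime}(x)}=\frac{g(x)}{\Phi\,g(x)+1},\qquad g(x):=\frac{W_{\Phi}(x)}{W_{\Phi}^{\prime}(x)} .
\]
The standing hypotheses (unbounded variation, or bounded variation with an atomless L\'evy measure) guarantee that $W_{\Phi}$ is differentiable for $x>0$, so this manipulation is legitimate.

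Next I would differentiate the right-hand side directly. With $h(x):=W^{(u)}(x)/W^{(u)\prime}(x)=g(x)/(\Phi g(x)+1)$ one finds, after cancellation,
\[
h^{\prime}(x)=\frac{g^{\prime}(x)}{\big(\Phi g(x)+1\big)^{2}} ,
\]
so that the sign of $h^{\prime}$ is exactly the sign of $g^{\prime}$, and the whole monotonicity assertion (\ref{eq:monotone}) collapses to proving $g^{\prime}(x)>0$. Since $g=W_{\Phi}/W_{\Phi}^{\prime}=1/(\log W_{\Phi})^{\prime}$, this is equivalent to the strict log-concavity $(\log W_{\Phi})^{\prime\prime}<0$, i.e. to the statement that the excursion rate $(\log W_{\Phi})^{\prime}(x)=W_{\Phi}^{\prime}(x)/W_{\Phi}(x)$ is strictly decreasing in $x$.

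This last point is where I would invoke the excursion-theoretic input quoted from Lemma 8.2 of \cite{Kyprianou}. Under $\mathbb{P}^{\Phi}$ the transformed process is spectrally negative and drifts to $+\infty$, and identity (\ref{eq:identity1}) with $q=0$, $y=0$ identifies $g(b)=\mathbb{E}_{\vert 0}\big[\int_0^{\tau_b^+}dS_t\big]$ as the mean local time accumulated at the supremum before the first excursion of height exceeding $b$, hence as the reciprocal of the rate $n(\overline{\epsilon}>b)$ of excursions taller than $b$. Because the excursion measure charges excursions of arbitrarily large height, $x\mapsto n(\overline{\epsilon}>x)$ is strictly decreasing, so $g$ is strictly increasing and $g^{\prime}>0$, giving $h^{\prime}(x)>0$ for all $x\ge 0$.

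Finally, for the upper bound I would use $h(x)=F(g(x))$ with $F(t)=t/(\Phi t+1)=(\Phi+1/t)^{-1}$; since $g(x)\ge 0$ and $F(t)\uparrow 1/\Phi$ as $t\to\infty$, we get $h(x)\le 1/\Phi=1/\Phi(u)$. The degenerate case $\Phi(u)=0$, possible only for $u=0$ when $X$ does not drift to $-\infty$, is covered trivially: then $F(t)=t$, the bound is $+\infty$, and the monotonicity is just that of $g$. The only genuinely non-algebraic step, and hence the main obstacle, is the strict monotonicity of the excursion rate $n(\overline{\epsilon}>\cdot)$ (equivalently strict log-concavity of $W_{\Phi}$); everything else is the cancellation of the Esscher exponential and the elementary monotonicity of $F$.
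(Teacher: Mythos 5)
Your proposal is correct and follows essentially the same route as the paper: the Esscher-transform factorization $W^{(u)}(x)=e^{\Phi(u)x}W_{\Phi(u)}(x)$, the identification via Lemma 8.2 of \cite{Kyprianou} (and identity (\ref{eq:identity1}) under $\mathbb{P}_{\vert 0}$) of $W^{\prime}_{\Phi(u)}/W_{\Phi(u)}$ as the rate of excursions exceeding a given height, and the resulting monotonicity; the paper leaves the algebra reducing $W^{(u)}/W^{(u)\prime}$ to $g=W_{\Phi}/W_{\Phi}^{\prime}$ and the bound $1/\Phi(u)$ implicit, whereas you spell it out, which is a welcome addition rather than a deviation.
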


\begin{Ex}\label{ex:Ex1}\rm
Consider one-sided jump-diffusion process $X$ with $\psi(\lambda)=\mu\lambda+\frac{\sigma^2}{2}\lambda^2-\frac{a\lambda}{\lambda+c}$ for all $\lambda\in\mathbb{R}$ s.t. $\lambda\neq -c$. It is known, see e.g. \cite{Kyprianou}, that for $u>0$,
\begin{align}\label{eq:ScaleF}
W^{(u)}(x)=\frac{e^{-\xi_2 x}}{\psi^{\prime}(-\xi_2)} + \frac{e^{-\xi_1 x}}{\psi^{\prime}(-\xi_1)} + \frac{e^{\Phi(u)x}}{\psi^{\prime}(\Phi(u))}, \quad \forall x\geq 0,
\end{align}
where $-\xi_1$, $-\xi_2$, and $\Phi(u)$ denotes three roots of $\psi(\lambda)=u$ s.t. $-\xi_2<-c<-\xi_1<0<\Phi(u)$. It is straightforward to check that $W_{\Phi(u)}(x)$ is given by
\begin{align}\label{eq:ScaleF2}
W_{\Phi(u)}(x)=\frac{e^{-(\xi_2+\Phi(u)) x}}{\psi^{\prime}(-\xi_2)} + \frac{e^{-(\xi_1+\Phi(u)) x}}{\psi^{\prime}(-\xi_1)} + \frac{1}{\psi^{\prime}(\Phi(u))}, \quad \forall x\geq 0.
\end{align}
The convexity of $\psi(\lambda)$ implies $\psi^{\prime}(-\xi_2)<0$, $\psi^{\prime}(-\xi_1)<0$ and $\psi^{\prime}(\Phi(u))>0$. Hence, $W_{\Phi(u)}(x)$ is increasing, concave and bounded from above by $1/\psi^{\prime}(\Phi(u))$.
\end{Ex}

The scale functions $W_{\Phi(r)}(x)$ (\ref{eq:ScaleF2}) and $W^{(r)}(x)/W^{(r)\prime}(x)$ are displayed in Figure \ref{fig:scale}. Both functions are increasing in  $x$ and have non-zero and zero values at $x=0$ when $\sigma=0$ ($X$ has bounded variation) and $\sigma\neq 0$ ($X$ has unbounded variation) respectively. Notice that $W^{(r)}(x)/W^{(r)\prime}(x)$ is bounded above by $1/\Phi(r)$.

 \begin{prop}\label{prop:fptbelow2}
For a given $0<a<b$ and $q\geq 0$, we have for all $y\in[a,b]$,
\begin{align}
\mathbb{E}_{\vert y}\big[e^{-q\tau_a^-}\mathbf{1}_{\{\tau_a^-\leq \tau_b^+\}}\big]=&\frac{W^{(q)}(b-y)}{W^{(q)}(b-a)}, \label{eq:fptbelow2}\\
\mathbb{E}_{\vert y}\big[e^{-q\tau_b^+}\mathbf{1}_{\{\tau_b^+\leq \tau_a^-\}}\big]=&Z^{(q)}(b-y)-\frac{Z^{(q)}(b-a)}{W^{(q)}(b-a)}W^{(q)}(b-y).\label{eq:fptabove}
\end{align}
\end{prop}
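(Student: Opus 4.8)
The plan is to reduce the two-sided exit problem for the drawdown process $Y$ to the classical two-sided exit problem for the spectrally negative L\'evy process $X$, for which the scale-function identities are already available. The starting observation is that under $\mathbb{P}_{\vert y}$ we have $X_0=0$ and $S_0=y$, so that as long as $X$ has not yet climbed back to its initial record $y$, the supremum stays frozen at $S_t=y$ and hence $Y_t=y-X_t$. In this regime the two boundaries of $Y$ translate into fixed levels for $X$: the event $\{Y_t<a\}$ becomes $\{X_t>y-a\}$ and the event $\{Y_t>b\}$ becomes $\{X_t<y-b\}$, with $y-b\le 0\le y-a<y$ since $0<a\le y\le b$.

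Next I would argue that the first exit of $Y$ from $[a,b]$ necessarily occurs within this frozen regime, and consequently
\[
\tau_a^-=T_{y-a}^+\ \text{ on }\{\tau_a^-\le\tau_b^+\},\qquad \tau_b^+=T_{y-b}^-\ \text{ on }\{\tau_b^+\le\tau_a^-\},
\]
where $T_{y-a}^+$ and $T_{y-b}^-$ are the first passage times of $X$ introduced before (\ref{eq:fpt}). The crucial point is that because $X$ is spectrally negative it can reach a new maximum only by moving up continuously; to create a new record at level $y$ it must first pass through $y-a<y$, which forces $Y$ to decrease to $a$ beforehand. Hence $S$ cannot increase before $Y$ has left $[a,b]$, and moreover $\{\tau_a^-\le\tau_b^+\}=\{T_{y-a}^+<T_{y-b}^-\}$, i.e. $Y$ exits downward exactly when $X$ exits the interval $(y-b,y-a)$ upward. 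The downward crossing of $a$ by $Y$ is continuous (creeping), matching the continuous upward passage of $X$, whereas the upward crossing of $b$ by $Y$ may occur by a jump, matching the possibly overshooting downward passage of $X$.

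Having established this pathwise identification, I would invoke the spatial homogeneity of $X$ to shift by $y-b$, so that the problem becomes the exit of a process started at $b-y\in[0,b-a]$ from the interval $[0,b-a]$. Identity (\ref{eq:fptbelow2}) then follows immediately from (\ref{eq:fpt}) with starting point $b-y$ and upper boundary $b-a$, yielding $W^{(q)}(b-y)/W^{(q)}(b-a)$. For (\ref{eq:fptabove}) I would apply the companion two-sided exit identity for spectrally negative L\'evy processes, namely $\mathbb{E}_x[e^{-qT_0^-}\mathbf{1}_{\{T_0^-<T_s^+\}}]=Z^{(q)}(x)-Z^{(q)}(s)W^{(q)}(x)/W^{(q)}(s)$ (see Kyprianou \cite{Kyprianou}), again with $x=b-y$ and $s=b-a$, which reproduces the claimed expression after recalling the definition of $Z^{(q)}$.

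The main obstacle, and the only genuinely non-routine step, is the pathwise identification in the second paragraph: justifying that the first exit of the reflected process $Y$ precedes any increase of its reflecting barrier $S$, and that the two boundary events correspond exactly to upward, respectively downward, exit of $X$ from $(y-b,y-a)$. This rests entirely on the absence of upward jumps, and it is precisely where the regularity of the paths enters, through the continuous crossing of $a$ by $Y$ versus the possible jump crossing of $b$. Once this correspondence is secured, the remainder is a direct substitution into the known scale-function identities.
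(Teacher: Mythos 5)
Your proposal is correct, and for (\ref{eq:fptbelow2}) it is essentially the paper's own argument: the paper likewise notes that, since $a>0$ and $Y$ has no downward jumps, $\tau_a^-<\tau_{\{0\}}$ a.s., so the supremum stays frozen at $y$ up to the first exit and $\{Y_t,\, t<\tau_{\{0\}}\}$ under $\mathbb{P}_{\vert y}$ is the process $\{-X_t,\, t<T_0^+\}$ under $\mathbb{P}_{-y}$, after which (\ref{eq:fpt}) yields the ratio of scale functions. Where you genuinely diverge is (\ref{eq:fptabove}). You push the same pathwise identification through to the downward exit of $X$ from $(y-b,y-a)$ and quote the classical companion identity $\mathbb{E}_x\big[e^{-qT_0^-}\mathbf{1}_{\{T_0^-<T_s^+\}}\big]=Z^{(q)}(x)-Z^{(q)}(s)W^{(q)}(x)/W^{(q)}(s)$ from \cite{Kyprianou}. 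The paper instead stays at the level of the drawdown process: it applies the strong Markov property of $Y$ at $\tau_a^-$, using the creeping property $Y_{\tau_a^-}=a$ (again a consequence of the absence of positive jumps of $X$), to write
\begin{align*}
\mathbb{E}_{\vert y}\big[e^{-q\tau_b^+}\mathbf{1}_{\{\tau_b^+\leq \tau_a^-\}}\big]
=\mathbb{E}_{\vert y}\big[e^{-q\tau_b^+}\big]
-\mathbb{E}_{\vert y}\big[e^{-q\tau_a^-}\mathbf{1}_{\{\tau_a^-<\tau_b^+\}}\big]\,
\mathbb{E}_{\vert a}\big[e^{-q\tau_b^+}\big],
\end{align*}
and then substitutes (\ref{eq:fptbelow2}) together with the one-sided drawdown identity (\ref{eq:identity2}) of Proposition \ref{prop:prop1}; the terms involving $q\,W^{(q)}(b)/W^{(q)\prime}(b)$ cancel and (\ref{eq:fptabove}) drops out. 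The trade-off: your route is more self-contained and symmetric, in that both formulas follow from a single reduction to the two-sided exit problem for $X$, with no need of Proposition \ref{prop:prop1} or of the derivative $W^{(q)\prime}(b)$; the paper's route avoids importing the $Z^{(q)}$ exit identity for $X$ and instead recycles the drawdown identities it has already established (the same strong-Markov-at-the-creeping-level device reappears in its later martingale arguments). Your key step, that $S$ cannot increase before $Y$ leaves $[a,b]$ because $Y$ must creep down through $a$ before reaching $0$, is exactly the paper's observation $\tau_a^-<\tau_{\{0\}}$, so nothing essential is missing; at most you should remark that the strict and non-strict inequalities in $\{\tau_a^-\leq\tau_b^+\}$ versus $\{T_{y-a}^+<T_{y-b}^-\}$ differ only on a null set, since the two passage times cannot coincide when $a<b$.
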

\begin{proof}
The proof of (\ref{eq:fptbelow2}) follows from the observation that $\tau_a^-<\tau_{\{0\}}$ a.s. and the equivalent between the two events $\{Y_t, t<\tau_{\{0\}}, \mathbb{P}_{\vert y}\}$ and $\{-X_t, t<T_0^+,\mathbb{P}_{-y}\}$. The identity (\ref{eq:fptabove}) is established using the strong Markov property of $Y$, (\ref{eq:fptbelow2}), (\ref{eq:identity2}) along with applying the tower property of conditional expectation, i.e.
\begin{align*}
\mathbb{E}_{\vert y}\big[e^{-q\tau_b^+}\mathbf{1}_{\{\tau_b^+\leq \tau_a^-\}}\big]=&
\mathbb{E}_{\vert y}\big[e^{-q\tau_b^+}\big]-\mathbb{E}_{\vert y}\big[e^{-q\tau_a^-}\mathbf{1}_{\{\tau_a^- < \tau_b^+\}}\big]\mathbb{E}_{\vert a}\big[e^{-q\tau_b^+}\big]. \quad \exit
\end{align*}
 \end{proof}
We define $\mathcal{G}_b(y; p,\alpha,\gamma) = \overline{C}_{\infty}(y,b; p,\alpha) - \gamma $.
\begin{prop}
Following (\ref{eq:perpayoff}),
\begin{eqnarray}
\overline{C}_{\infty}(y,b; \widetilde{p},\widetilde{\alpha})&=& \widetilde{\alpha} Z^{(r)}(b-y) - \frac{\big(\widetilde{p}+r\widetilde{\alpha} W^{(r)}(b)\big)}{W^{(r)\prime}(b)}W^{(r)}(b-y). \label{eq:eq1}
\end{eqnarray}
Moreover, (\ref{eq:perpost2}) becomes
\begin{eqnarray}\label{eq:OSP}
\mathcal{V}_b(y;\widetilde{p},\widetilde{\alpha},\gamma)&=& \sup_{\theta\in\mathcal{T}_{[0,\infty)}} \mathbb{E}_{\vert y}\big[e^{-r\theta}\mathcal{G}_b(Y_{\theta};\widetilde{p},\widetilde{\alpha},\gamma) ; \theta \leq \tau_b^+\big]. \label{eq:eq2}
\end{eqnarray}
Note that we have used the notational convention: $\mathbb{E}[\cdot ; A]=\mathbb{E}[\cdot \mathbf{1}_A]$.
\end{prop}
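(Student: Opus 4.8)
The statement splits into two independent claims: the closed form (\ref{eq:eq1}) for $\overline{C}_\infty$ and the reformulation (\ref{eq:eq2}) of the stopping problem. I would establish the first by direct substitution and the second by a strong Markov disintegration at the candidate exercise time $\theta$.

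For (\ref{eq:eq1}), I would expand the definition (\ref{eq:perpayoff}) of $\overline{C}_\infty$ with $(p,\alpha)$ replaced by $(\widetilde{p},\widetilde{\alpha})$ and use linearity of $\mathbb{E}_{\vert y}$ to split it into
\[
\overline{C}_\infty(y,b;\widetilde{p},\widetilde{\alpha})=-\widetilde{p}\,\mathbb{E}_{\vert y}\Big[\int_0^{\tau_b^+}e^{-rt}\,dS_t\Big]+\widetilde{\alpha}\,\mathbb{E}_{\vert y}\big[e^{-r\tau_b^+}\mathbf{1}_{\{\tau_b^+<\infty\}}\big].
\]
Since $r>0$, the factor $\mathbf{1}_{\{\tau_b^+<\infty\}}$ is absorbed into $e^{-r\tau_b^+}$ under the convention $e^{-r\infty}=0$, so the two expectations are exactly the quantities evaluated in Proposition \ref{prop:prop1}. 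Substituting (\ref{eq:identity1}) and (\ref{eq:identity2}) with $q=r$ and collecting the coefficient of $W^{(r)}(b-y)$ produces (\ref{eq:eq1}); this step is routine algebra.

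For (\ref{eq:eq2}) I would condition on $\mathcal{F}_\theta$. Observe first that $\{\theta\leq\tau_b^+\}\in\mathcal{F}_\theta$. On this event I would split the time integral at $\theta$ and pull out the discount factor by a time shift, so that both $\int_\theta^{\tau_b^+}e^{-rt}\widetilde{p}\,dS_t$ and $e^{-r\tau_b^+}$ become $e^{-r\theta}$ multiplied by the corresponding functionals evaluated along the post-$\theta$ path. Because $Y$ is strong Markov (and the increments of $S$ are carried by the local time of $Y$ at $0$), conditionally on $\mathcal{F}_\theta$ the shifted process $\{Y_{\theta+s}:s\geq0\}$ has law $\mathbb{P}_{\vert Y_\theta}$ and $\tau_b^+-\theta$ becomes the first passage of that process above $b$. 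Hence
\[
\mathbb{E}_{\vert y}\Big[-\!\int_\theta^{\tau_b^+}\!e^{-rt}\widetilde{p}\,dS_t+e^{-r\tau_b^+}\widetilde{\alpha}\,\Big|\,\mathcal{F}_\theta\Big]\mathbf{1}_{\{\theta\leq\tau_b^+\}}=e^{-r\theta}\,\overline{C}_\infty(Y_\theta,b;\widetilde{p},\widetilde{\alpha})\,\mathbf{1}_{\{\theta\leq\tau_b^+\}}.
\]
Subtracting the deterministic term $e^{-r\theta}\gamma$ and recalling $\mathcal{G}_b=\overline{C}_\infty-\gamma$, the integrand in (\ref{eq:perpost2}) equals $e^{-r\theta}\mathcal{G}_b(Y_\theta;\widetilde{p},\widetilde{\alpha},\gamma)\mathbf{1}_{\{\theta\leq\tau_b^+\}}$ after taking expectation; taking the supremum over $\theta\in\mathcal{T}_{[0,\infty)}$ yields (\ref{eq:eq2}).

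The delicate point is the strong Markov step rather than any algebra. I would need to justify carefully that the post-$\theta$ increments of the record process $S$ depend on the past only through $Y_\theta$, which is where the Markov property of the drawdown $Y$ (and the representation of $dS$ via the reflection at $0$) is essential, and that the rewriting is valid for an arbitrary stopping time $\theta$, not merely a hitting time. I would also check the boundary case $Y_\theta\geq b$ on $\{\theta=\tau_b^+\}$: there the post-$\theta$ first passage time is $0$, so $\overline{C}_\infty(Y_\theta,b;\widetilde{p},\widetilde{\alpha})=\widetilde{\alpha}$, which matches the left-hand side and shows the identification extends consistently past $y=b$.
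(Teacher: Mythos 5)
Your proposal is correct and follows essentially the same route as the paper: \eqref{eq:eq1} by substituting the identities \eqref{eq:identity1} and \eqref{eq:identity2} of Proposition \ref{prop:prop1} with $q=r$ into \eqref{eq:perpayoff}, and \eqref{eq:eq2} by conditioning on $\mathcal{F}_\theta$ and using the strong Markov property of $Y$ to identify the post-$\theta$ payoff as $e^{-r\theta}\overline{C}_\infty(Y_\theta,b;\widetilde{p},\widetilde{\alpha})$ before subtracting $e^{-r\theta}\gamma$. Your explicit check of the boundary event $\{\theta=\tau_b^+\}$ (where the post-$\theta$ passage time vanishes and $\overline{C}_\infty(Y_\theta,b;\widetilde{p},\widetilde{\alpha})=\widetilde{\alpha}$) is a sound refinement of the same argument, since the paper's displayed computations use $\{\theta<\tau_b^+\}$ while the statement allows $\theta\leq\tau_b^+$.
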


\begin{proof}
The proof of (\ref{eq:eq1}) follows from applying Proposition \ref{prop:prop1} to (\ref{eq:perpayoff}). The expression for $\mathcal{V}_b$ is obtained by the strong Markov property. First, recall that
\begin{eqnarray*}
\mathbb{E}_{\vert y}\big[e^{-r\tau_b^+}\mathbf{1}_{\{\theta < \tau_b^+\}}\big]=\mathbb{E}_{\vert y}\Big[ \mathbb{E}\big[e^{-r\tau_b^+}\mathbf{1}_{\{\theta<\tau_b^+\}}\big\vert \mathcal{F}_{\theta}\big]\Big]=\mathbb{E}_{\vert y}\Big[ e^{-r\theta} \mathbf{1}_{\{\theta<\tau_b^+\}} \mathbb{E}_{\vert Y_{\theta}}\big[e^{-r\tau_b^+}\big]\Big],
\end{eqnarray*}
where the inner expectation $\mathbb{E}_{\vert Y_{\theta}}\big[e^{-r\tau_b^+}\big]$ is given using (\ref{eq:identity2}) by
\begin{equation}\label{eq:derv2}
\mathbb{E}_{\vert Y_{\theta}}\big[e^{-r\tau_b^+}\big]=Z^{(r)}(b-Y_{\theta})-r\frac{W^{(r)}(b)}{W^{(r)\prime}(b)}W^{(r)}(b-Y_{\theta}).
\end{equation}
Again, by iterated law of conditional expectation and strong Markov property,
\begin{eqnarray*}
\mathbb{E}_{\vert y}\Big[\mathbf{1}_{\{\theta <\tau_b^+\}} \int_{\theta}^{\tau_b^+} e^{-rt}dS_t\Big]=\mathbb{E}_{\vert y}\Big[e^{-r\theta}\mathbf{1}_{\{\theta<\tau_b^+\}}\mathbb{E}_{\vert Y_{\theta}}\Big[\int_0^{\tau_b^+}e^{-rt}dS_t\Big]\Big].
\end{eqnarray*}
Following the identity (\ref{eq:identity1}), the inner expectation is given by
\begin{equation}\label{eq:derv3}
\mathbb{E}_{\vert Y_{\theta}}\Big[\int_0^{\tau_b^+}e^{-rt}dS_t\Big]=\frac{W^{(r)}(b-Y_{\theta})}{W^{(r)\prime}(b)}.
\end{equation}
Putting the two pieces (\ref{eq:derv3}) and  (\ref{eq:derv2}) together leads to $\mathcal{G}_b(Y_{\theta};\widetilde{p},\widetilde{\alpha},\gamma).$
\exit
\end{proof}

In the sequel below we use the shorthand notation $\mathcal{V}_b(y)$ and $\mathcal{G}_b(y)$ for the value function $\mathcal{V}_b(y;\widetilde{p},\widetilde{\alpha},\gamma)$ and the payoff function $\mathcal{G}_b(y;\widetilde{p},\widetilde{\alpha},\gamma)$, respectively.

\section{Solution of the stopping problem (\ref{eq:OSP})}

In this section we discuss method of solution to the stopping problem (\ref{eq:eq2}). We will show that the stopping problem can be reduced to first-passage of drawdown process $Y$ below a fixed level. Our approach is similar to that of proposed by van Moerbeke in \cite{Moerbeke}. Denote by $\mathcal{L}_Y$ an infinitesimal generator of reflected process $Y=S-X$ defined by
\begin{equation}\label{eq:generator}
\begin{split}
\lefteqn{\mathcal{L}_YF(z)=-\mu F^{\prime}(z) +\frac{\sigma^2}{2}F^{\prime \prime}(z)}\\ &+ \int_{-\infty}^0\big[F(z-w)-F(z)+ w\mathbf{1}_{\{-1\leq w<0\}}F^{\prime}(z)\big]\Pi(dw),
\end{split}
\end{equation}
for bounded continuous function $F$, which is twice continuously differentiable, i.e., $F\in C_b^0(\mathbb{R}_+)\cap C^2(\mathbb{R}_+)$, where $F^{\prime}$ and $F^{\prime \prime}$ denote the first and second derivative of $F$. Note that the above generator corresponds to the case where $X$ has paths of unbounded variation with $\sigma>0$. However, when $X$ has paths of bounded variation, we set $\sigma=0$ and the total jumps in the integral is replaced by $ \int_{\{w<-1\}}\big[F(z-w)-F(z)\big]\Pi(dw)$ followed by adjusting the drift of $X$ (\ref{eq:LevyIto}).

To solve the problem (\ref{eq:OSP}), we reduce the optimal stopping rule to the first-passage below a level of drawdown process $Y$. That is, we will show the value function of the optimal stopping (\ref{eq:OSP}) coincides with the function
\begin{equation}\label{eq:optimsol}
\widetilde{\mathcal{V}}_b(y)=
\begin{cases}
\mathcal{G}_b(y), & \textrm{for $y\in[0,h^{\star}]$}\\
\mathcal{G}_b(h^{\star})\frac{W^{(r)}(b-y)}{W^{(r)}(b-h^{\star})}, & \textrm{for $y\in[h^{\star},b]$},
\end{cases}
\end{equation}
where $0<h^{\star} < b$ is defined as the largest root, when exists, of equation (\ref{eq:bigroot}).

The result below gives a condition on the switching cost $\gamma$ for which  the equation (\ref{eq:bigroot}) has a unique positively valued solution $h^{\star}<b$.

\begin{prop}\label{prop:unique}
There exists a unique solution $h^{\star}$ to the equation (\ref{eq:bigroot}) for
\begin{align}\label{eq:ineq1a}
\widetilde{\alpha}\Big(Z^{(r)}(b)-r \frac{W^{(r)}(b)^2}{W^{(r)\prime}(b)}\Big) > \gamma > \widetilde{\alpha}\Big(1-r \frac{W^{(r)}(0)^2}{W^{(r)\prime}(0)}\Big).
\end{align}
\end{prop}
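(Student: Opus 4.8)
The plan is to reduce the claim to the strict monotonicity of a single auxiliary function via the change of variable $x=b-h$. Define, for $x\in[0,b]$,
\begin{equation*}
G(x)=Z^{(r)}(x)-r\frac{\big(W^{(r)}(x)\big)^2}{W^{(r)\prime}(x)},
\end{equation*}
so that, after substituting $x=b-h$, the defining equation (\ref{eq:bigroot}) becomes $\widetilde{\alpha}\,G(x)=\gamma$. First I would record the two boundary values. Since $Z^{(r)}(0)=1$ by definition of $Z^{(r)}$, we have $\widetilde{\alpha}\,G(0)=\widetilde{\alpha}\big(1-r W^{(r)}(0)^2/W^{(r)\prime}(0)\big)$ and $\widetilde{\alpha}\,G(b)=\widetilde{\alpha}\big(Z^{(r)}(b)-r W^{(r)}(b)^2/W^{(r)\prime}(b)\big)$; these are exactly the right- and left-hand sides of the bracketing inequality (\ref{eq:ineq1a}). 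Thus (\ref{eq:ineq1a}) asserts precisely that $\gamma$ lies strictly between $\widetilde{\alpha}\,G(0)$ and $\widetilde{\alpha}\,G(b)$.

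The key step is to show that $x\mapsto\widetilde{\alpha}\,G(x)$ is strictly monotone on $(0,b)$. I would differentiate $G$, writing $\phi(x):=W^{(r)}(x)/W^{(r)\prime}(x)$ so that $(W^{(r)})^2/W^{(r)\prime}=W^{(r)}\phi$. Using $Z^{(r)\prime}=rW^{(r)}$ together with the identity $W^{(r)\prime}\phi=W^{(r)}$, the product-rule terms telescope and yield the clean expression
\begin{equation*}
G'(x)=rW^{(r)}(x)-r\big(W^{(r)}(x)+W^{(r)}(x)\phi'(x)\big)=-rW^{(r)}(x)\,\phi'(x).
\end{equation*}
This is the cleanest route because it replaces the awkward $W^{(r)\prime\prime}$ that appears in a naive quotient-rule computation by $\phi'$, for which the sign is already supplied by (\ref{eq:monotone}), namely $\phi'>0$. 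Since $W^{(r)}(x)>0$ and $r>0$ on $(0,b)$, we conclude $G'(x)<0$, i.e. $G$ is strictly decreasing. Because $\widetilde{\alpha}=\widehat{\alpha}-\alpha<0$ (the new coverage satisfies $\widehat{\alpha}<\alpha$), the map $x\mapsto\widetilde{\alpha}\,G(x)$ is therefore strictly increasing and continuous on $[0,b]$.

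Finally I would conclude by the intermediate value theorem: strict monotonicity together with $\widetilde{\alpha}\,G(0)<\gamma<\widetilde{\alpha}\,G(b)$, supplied by (\ref{eq:ineq1a}), yields a unique $x^{\star}\in(0,b)$ with $\widetilde{\alpha}\,G(x^{\star})=\gamma$, whence $h^{\star}=b-x^{\star}\in(0,b)$ is the unique root of (\ref{eq:bigroot}). The point requiring the most care — and the main obstacle — is the sign bookkeeping: one must track that $\widetilde{\alpha}<0$ turns the monotone \emph{decrease} of $G$ into a monotone \emph{increase} of $\widetilde{\alpha}\,G$, and correctly match the two boundary evaluations $G(0),G(b)$ to the lower/upper ends of (\ref{eq:ineq1a}). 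A secondary technical point is the differentiability of $W^{(r)}$, and hence of $\phi$, required for the computation above; this is guaranteed by the standing regularity assumption on $X$ (unbounded variation, or bounded variation with atomless L\'evy measure $\Pi$) recorded after (\ref{eq:scale}).
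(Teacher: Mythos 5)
Your proposal is correct and follows essentially the same route as the paper's own proof: the paper likewise reduces (\ref{eq:bigroot}) to strict monotonicity of the auxiliary function (its $f(h)$, your $\widetilde{\alpha}\,G(b-h)-\gamma$), computes the very same derivative $-rW^{(r)}\,\frac{d}{dx}\big(W^{(r)}/W^{(r)\prime}\big)$ whose sign comes from (\ref{eq:monotone}), and concludes via the bracketing $f(0)>0$, $f(b)<0$ supplied by (\ref{eq:ineq1a}); your substitution $x=b-h$ is purely cosmetic. The only structural difference is that the paper separately proves, via the function $f_r(b)$ in (\ref{eq:FrB}), that the interval in (\ref{eq:ineq1a}) is nonempty, but your strict monotonicity of $\widetilde{\alpha}\,G$ delivers that fact as an immediate byproduct, so nothing is missing.
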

See Lemma 4.3 and Lemma 4.4 in \cite{Kyprianou2007} for the values of $W^{(r)}(0)$ and $W^{(r)\prime}(0)$.

\begin{proof}
The proof is established in two parts. First, we show for a given $r>0$,
\begin{align}\label{eq:ineq1b}
Z^{(r)}(b)-r \frac{W^{(r)}(b)^2}{W^{(r)\prime}(b)} < 1-r \frac{W^{(r)}(0)^2}{W^{(r)\prime}(0)}, \quad \forall \; b\geq 0,
\end{align}
leading to the existence of such $\gamma$. For this purpose, consider the function
\begin{align}\label{eq:FrB}
f_r(b):= Z^{(r)}(b)-r \Big(\frac{W^{(r)}(b)^2}{W^{(r)\prime}(b)} -  \frac{W^{(r)}(0)^2}{W^{(r)\prime}(0)}\Big) -1.
\end{align}

Taking derivative w.r.t $b$ of $f_r(b)$, we obtain after some calculations that
\begin{align*}
\frac{d}{db}f_r(b)=-rW^{(r)}(b)\frac{d}{db}\Big(\frac{W^{(r)}(b)}{W^{(r)\prime}(b)}\Big)<0,
\end{align*}
where the inequality follows on account of (\ref{eq:monotone}), which in turn leading to (\ref{eq:ineq1b}) given that $f_r(0)=0$ and subsequently to (\ref{eq:ineq1a}) as $\widetilde{\alpha}<0$. Next, from (\ref{eq:bigroot}),
\begin{align}\label{eq:funfh}
&\frac{d}{dh}\Big(f(h):=\widetilde{\alpha} Z^{(r)}(b-h)-r \widetilde{\alpha}\frac{\big(W^{(r)}(b-h)\big)^2}{W^{(r)\prime}(b-h)}-\gamma\Big)\\
&\hspace{2cm}=r\widetilde{\alpha}W^{(r)}(b-h)\frac{d}{dx}\Big(\frac{W^{(r)}(x)}{W^{(r)\prime}(x)}\Big)\Big\vert_{x=b-h}<0, \quad \textrm{for} \; 0\leq h\leq b \nonumber
\end{align}
where the inequality sign is due to $\widetilde{\alpha}<0$ and (\ref{eq:monotone}). Uniqueness of solution to (\ref{eq:bigroot}) follows on account of (\ref{eq:ineq1a}) by which we have $f(0)>0$ and $f(b)<0$.  \exit

\end{proof}

\begin{prop}\label{prop:fptbelow}
Let $\tau_h^-$, with $h>0$, be the stopping time (\ref{eq:fptbelow}). Then,
\begin{equation}\label{eq:suph}
\widetilde{\mathcal{V}}_b(y)=\sup_h
\mathbb{E}_{\vert y}\big[e^{-r\tau_{h}^-}\mathcal{G}_b\big(Y_{\tau_{h}^-}\big); \tau_{h}^- \leq \tau_b^+\big].
\end{equation}
\end{prop}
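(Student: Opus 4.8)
The plan is to compute the functional $J_h(y):=\mathbb{E}_{\vert y}\big[e^{-r\tau_h^-}\mathcal{G}_b(Y_{\tau_h^-});\,\tau_h^-\leq\tau_b^+\big]$ in closed form for every level $h$, and then to show that the one–parameter family $\{J_h(\cdot)\}_{h>0}$ is maximised, pointwise in $y$, exactly at $h=h^\star$. Since $\widetilde{\mathcal{V}}_b$ in (\ref{eq:optimsol}) is by construction nothing but $J_{h^\star}$, this yields the claim $\widetilde{\mathcal{V}}_b(y)=\sup_h J_h(y)$.

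First I would record the crucial path property of the reflected process. Because $X$ is spectrally negative and $S$ is continuous and nondecreasing, every jump of $Y=S-X$ satisfies $\Delta Y_t=-\Delta X_t\geq 0$; thus $Y$ has no negative jumps and must cross any level from above continuously. Consequently, on $\{\tau_h^-<\infty\}$ with $y>h$ we have $Y_{\tau_h^-}=h$, so $\mathcal{G}_b(Y_{\tau_h^-})=\mathcal{G}_b(h)$ is deterministic and factors out of the expectation. Applying Proposition \ref{prop:fptbelow2} with $q=r$ and $a=h$ then gives, for $y\in[h,b]$,
\[
J_h(y)=\mathcal{G}_b(h)\,\frac{W^{(r)}(b-y)}{W^{(r)}(b-h)}.
\]
For $y<h$ one has $\tau_h^-=0$ and hence $J_h(y)=\mathcal{G}_b(y)$; note that the two expressions agree at $h=y$, so $h\mapsto J_h(y)$ is continuous across the switch.

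Next I would carry out the optimisation over $h$. For fixed $y$ and $h\leq y$, since $W^{(r)}(b-y)>0$ does not depend on $h$, maximising $J_h(y)$ is equivalent to maximising $g(h):=\mathcal{G}_b(h)/W^{(r)}(b-h)$. Substituting (\ref{eq:eq1}) into $\mathcal{G}_b=\overline{C}_\infty-\gamma$ shows
\[
g(h)=\frac{\widetilde{\alpha}Z^{(r)}(b-h)-\gamma}{W^{(r)}(b-h)}-\frac{\widetilde{p}+r\widetilde{\alpha}W^{(r)}(b)}{W^{(r)\prime}(b)},
\]
where the last term is constant in $h$. A direct differentiation, using $(Z^{(r)})'=rW^{(r)}$ and the quotient rule, yields
\[
g'(h)=\frac{W^{(r)\prime}(b-h)}{W^{(r)}(b-h)^2}\,f(h),
\]
with $f$ the function defined in (\ref{eq:funfh}). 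Since $W^{(r)},W^{(r)\prime}>0$, the sign of $g'$ coincides with that of $f$; and from Proposition \ref{prop:unique}, $f$ is strictly decreasing on $[0,b]$ with its unique zero at $h^\star$. Hence $g$ is strictly increasing on $(0,h^\star)$ and strictly decreasing on $(h^\star,b)$, so $g$ attains its global maximum at $h^\star$.

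Finally I would assemble the two regimes. If $y\geq h^\star$, then $h^\star\in(0,y]$ is admissible and $\sup_h J_h(y)=J_{h^\star}(y)=\mathcal{G}_b(h^\star)W^{(r)}(b-y)/W^{(r)}(b-h^\star)$, while the values $J_h(y)=\mathcal{G}_b(y)$ obtained for $h\geq y$ are dominated because $g(y)\leq g(h^\star)$. If $y\leq h^\star$, then $g$ is increasing throughout $(0,y]$, so the best admissible choice is $h=y$, giving $\sup_h J_h(y)=\mathcal{G}_b(y)$. Comparing with (\ref{eq:optimsol}) shows $\sup_h J_h(y)=\widetilde{\mathcal{V}}_b(y)$ in both cases, which is the assertion. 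The step I expect to require the most care is the no–overshoot claim $Y_{\tau_h^-}=h$: it is what makes $\mathcal{G}_b(Y_{\tau_h^-})$ deterministic and lets Proposition \ref{prop:fptbelow2} apply verbatim, and it rests squarely on the absence of downward jumps of the drawdown process. The only other delicate point is the bookkeeping that matches the optimiser to the continuation region $\{y\geq h^\star\}$ versus the stopping region $\{y\leq h^\star\}$, which is what recovers the piecewise form of $\widetilde{\mathcal{V}}_b$.
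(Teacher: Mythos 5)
Your proof is correct and follows essentially the same route as the paper's: the no-overshoot property $Y_{\tau_h^-}=h$ a.s., the closed form $J_h(y)=\mathcal{G}_b(h)\,W^{(r)}(b-y)/W^{(r)}(b-h)$ from Proposition~\ref{prop:fptbelow2}, and optimisation over $h$ via the function $f$ of (\ref{eq:funfh}) together with Proposition~\ref{prop:unique}. If anything, your factorisation $J_h(y)=g(h)W^{(r)}(b-y)$ with $g^{\prime}(h)=\frac{W^{(r)\prime}(b-h)}{W^{(r)}(b-h)^2}f(h)$ is a slight tightening: it yields \emph{global} maximality at $h^{\star}$ directly from the sign of $f$ and treats the regime $h\geq y$ explicitly, whereas the paper's first-order Euler condition plus second-derivative check at $h^{\star}$ only certifies a local maximum (global optimality there being implicit from uniqueness of the critical point).
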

\begin{proof}
Recall that in the absence of positive jumps,  $Y_{\tau_h^-}=h$ a.s. under $\mathbb{P}_{\vert y}$. Thus, on account of Proposition \ref{prop:fptbelow2}, we have
\begin{equation}\label{hy}
J_h(y):=\mathbb{E}_{\vert y} \big[ e^{-r\tau_h^-} \mathcal{G}_b(Y_{\tau_h^-});\tau_h^-\leq \tau_b^+\big]=\mathcal{G}_b(h)\frac{W^{(r)}(b-y)}{W^{(r)}(b-h)}.
\end{equation}
By applying first order Euler condition to the function $h\rightarrow J_h(y)$, we have
\begin{align*}
0&=\frac{\partial}{\partial h} J_h(y)=W^{(r)}(b-y)\frac{\big(\mathcal{G}_b^{\prime}(h)W^{(r)}(b-h)+\mathcal{G}_b(h)W^{(r)\prime}(b-h)\big)}{\big[W^{(r)}(b-h)\big]^2}\\
&=W^{(r)}(b-y) \frac{\big[ -r\widetilde{\alpha} \big[W^{(r)}(b-h)\big]^2 + \widetilde{\alpha} Z^{(r)}(b-h)W^{(r)\prime}(b-h) -\gamma W^{(r)\prime}(b-h) \big]}{\big[W^{(r)}(b-h)\big]^2},
\end{align*}
from which we deduce following Proposition \ref{prop:unique} that $h^{\star}$ uniquely solves the equation (\ref{eq:bigroot}). Further calculation shows that
\begin{align*}
\frac{\partial^2}{\partial h^2} J_{h}(y)\Big\vert_{h=h^{\star}}=& r\widetilde{\alpha}W^{(r)}(b-h^{\star})W^{(r)\prime}(b-h^{\star})\\
&\times\frac{\big( [W^{(r)\prime}(b-h^{\star})]^2 -W^{(r)}(b-h^{\star})W^{(r)\prime\prime}(b-h^{\star})\big)}{(W^{(r)\prime}(b-h^{\star}))^2}\\
=& r\widetilde{\alpha}W^{(r)}(b-h^{\star})W^{(r)\prime}(b-h^{\star}) \frac{d}{dx}\Big(\frac{W^{(r)}(x)}{W^{(r)\prime}(x)}\Big)\Big\vert_{x=b-h^{\star}},
\end{align*}
which by (\ref{eq:monotone}) confirming that $h^{\star}$ maximizes the function $h\rightarrow J_h(y)$.

Furthermore, for $0\leq y\leq h^{\star}$, $\tau_{h^{\star}}^-=0$ a.s. under $\mathbb{P}_{\vert y}$ leading to $\widetilde{\mathcal{V}}_b(y)=\mathcal{G}_b(y)$ on account of $\mathbb{P}_{\vert y}\{\tau_b^+\geq 0\}=1,$ which in turn establishes (\ref{eq:suph}) and (\ref{eq:optimsol}). \exit
\end{proof}

\medskip

We prove the main result on account of the following fact. Necessarily, we assume throughout the remaining that $\gamma\leq 0$ satisfying the constraint (\ref{eq:ineq1a}).
\begin{prop}\label{prop:ass1}
The payoff function $\mathcal{G}_b(y)$ of (\ref{eq:OSP}) satisfies the equation:
\begin{equation}\label{eq:ass1}
\big(\mathcal{L}_Y-r\big)\mathcal{G}_b(y) =r\gamma, \quad \textrm{for all \;$0\leq y\leq b$}.
\end{equation}
\end{prop}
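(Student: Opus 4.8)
The plan is to reduce the claim to the standard fact that the $r$-scale functions are $r$-harmonic for the generator of the driving spectrally negative process. First I would peel off the constant: since $\mathcal{G}_b(y)=\overline{C}_{\infty}(y,b;\widetilde{p},\widetilde{\alpha})-\gamma$ by definition, and since $\mathcal{L}_Y$ annihilates constants (for constant $c$ one has $c'=c''=0$ and $c(z-w)-c(z)=0$ in \eqref{eq:generator}), I get $(\mathcal{L}_Y-r)\mathcal{G}_b(y)=(\mathcal{L}_Y-r)\overline{C}_{\infty}(y)+r\gamma$. Hence it suffices to prove $(\mathcal{L}_Y-r)\overline{C}_{\infty}(y,b;\widetilde{p},\widetilde{\alpha})=0$ on $[0,b]$. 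I would then feed in the explicit representation \eqref{eq:eq1}, writing $\overline{C}_{\infty}(y)=\widetilde{\alpha}\,Z^{(r)}(b-y)-c_0\,W^{(r)}(b-y)$ with the constant $c_0=(\widetilde{p}+r\widetilde{\alpha}W^{(r)}(b))/W^{(r)\prime}(b)$, so that by linearity of $\mathcal{L}_Y-r$ it remains only to treat the two building blocks $y\mapsto Z^{(r)}(b-y)$ and $y\mapsto W^{(r)}(b-y)$.

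The key step is a reflection identity: for $g\in\{W^{(r)},Z^{(r)}\}$ and $F(y):=g(b-y)$, I claim $\mathcal{L}_YF(y)=(\mathcal{L}_Xg)(b-y)$, where $\mathcal{L}_X$ is the generator of $X$. This is a direct term-by-term substitution into \eqref{eq:generator}: one has $F'(y)=-g'(b-y)$, $F''(y)=g''(b-y)$ and $F(y-w)=g(b-y+w)$, so the sign flip in $-\mu F'$ restores $+\mu g'(b-y)$, the compensator $+w\mathbf{1}_{\{-1\le w<0\}}F'$ restores $-w\mathbf{1}_{\{-1\le w<0\}}g'(b-y)$, and the remaining terms match the drift, diffusion and jump parts of $\mathcal{L}_Xg$ evaluated at $x=b-y$. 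In the jump integral one only ever evaluates $g$ at $b-y+w$ with $w<0$, i.e.\ possibly at negative arguments, where one uses the standard extensions $W^{(r)}\equiv0$ and $Z^{(r)}\equiv1$ on $(-\infty,0)$. Consequently $(\mathcal{L}_Y-r)[g(b-\cdot)](y)=[(\mathcal{L}_X-r)g](b-y)$.

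Finally I would invoke the classical fact (see \cite{Kyprianou}, already cited) that $e^{-r(t\wedge T_0^-)}W^{(r)}(X_{t\wedge T_0^-})$ and $e^{-r(t\wedge T_0^-)}Z^{(r)}(X_{t\wedge T_0^-})$ are $\mathbb{P}_x$-martingales, which gives $(\mathcal{L}_X-r)W^{(r)}(x)=(\mathcal{L}_X-r)Z^{(r)}(x)=0$ for every $x>0$. Since $y\in[0,b)$ corresponds to $x=b-y\in(0,b]$, both blocks are killed and $(\mathcal{L}_Y-r)\overline{C}_{\infty}(y)=0$ there; combined with the first paragraph this yields $(\mathcal{L}_Y-r)\mathcal{G}_b(y)=r\gamma$. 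The endpoint $y=b$, i.e.\ $x=0$, is then handled by continuity/right-hand limits of the generator action.

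I expect the main obstacles to be two regularity points rather than the algebra. First, one must justify that $\overline{C}_{\infty}$ actually lies in the class on which \eqref{eq:generator} literally acts: this is exactly where the paper's standing assumptions enter, namely that $\Pi$ has no atoms so $W^{(r)}\in C^1$, and that $\sigma>0$ in the unbounded-variation case delivers the $C^2$-regularity needed for the second-order term. Second, the bounded-variation (Cram\'er--Lundberg) case requires replacing $\mathcal{L}_Y$ by the first-order integro-differential form noted immediately after \eqref{eq:generator}; the same substitution argument goes through verbatim with that generator, but this must be stated explicitly. The extension of the scale-function identities from the open half-line to the boundary point $x=0$ is the only place where a pure limiting argument, rather than the martingale identity, is invoked.
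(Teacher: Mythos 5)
Your proof is correct, but it takes a genuinely different route from the paper's. You verify the identity analytically: peel off the constant $\gamma$ (giving the $r\gamma$ term since $\mathcal{L}_Y$ kills constants), plug in the explicit form \eqref{eq:eq1}, and reduce everything via the reflection identity $\mathcal{L}_Y\big[g(b-\cdot)\big](y)=(\mathcal{L}_X g)(b-y)$ to the classical harmonicity $(\mathcal{L}_X-r)W^{(r)}=(\mathcal{L}_X-r)Z^{(r)}=0$ on $(0,\infty)$; your term-by-term check of the reflection identity is right, because \eqref{eq:generator} is precisely the generator of $-X$, the reflection at zero entering only through the $dS$ term of \eqref{eq:Ito}, which the proposition never touches. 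The paper instead argues probabilistically: by Proposition \ref{prop:martingale2} (built on Lemma \ref{lem:martingale} via the strong Markov property and the exit identities) the process $e^{-r(t\wedge\tau_h^-\wedge\tau_b^+)}\overline{C}_{\infty}(Y_{t\wedge\tau_h^-\wedge\tau_b^+},b;\widetilde{p},\widetilde{\alpha})$ is a martingale, and since $dS_t$ carries no mass on $\{t<\tau_h^-\wedge\tau_b^+\}$, the absolutely continuous drift $\int e^{-ru}(\mathcal{L}_Y-r)\overline{C}_{\infty}(Y_u)\,du$ in the decomposition \eqref{eq:Ito} must vanish, giving the equation on $[h,b]$ for arbitrary $h>0$. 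The trade-off: the paper's argument reuses machinery already needed for Theorem \ref{theo:main} and Proposition \ref{prop:CR} and never has to import harmonicity of the scale functions as an external fact; yours is more self-contained at the operator level and makes the regularity requirements (where $C^1$ vs.\ $C^2$ smoothness of $W^{(r)}$ enters, and the conventions $W^{(r)}\equiv 0$, $Z^{(r)}\equiv 1$ on $(-\infty,0)$ in the jump integral) fully explicit. Two small caveats on your side: the classical fact is cleanest as the martingale property of the doubly stopped process $e^{-r(t\wedge T_0^-\wedge T_a^+)}W^{(r)}(X_{t\wedge T_0^-\wedge T_a^+})$ (without the upper barrier one only has a local martingale, though harmonicity on $(0,a)$ for every $a$ suffices), and deriving the generator equation from that martingale property itself requires the same It\^o/Dynkin step the paper performs in \eqref{eq:Ito} unless you verify harmonicity by Laplace transforms; also note the boundary points left to continuity are opposite in the two proofs --- you must patch $y=b$ (i.e.\ $x=0$, where $W^{(r)}$ is genuinely irregular in the bounded-variation case), whereas the paper's $h\downarrow 0$ argument leaves only $y=0$, where the scale functions are evaluated at the innocuous argument $b>0$. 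Neither gap is serious, but yours sits at the more delicate endpoint.
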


Note that the left-hand side of inequality \eqref{eq:ass1} is well-defined by \eqref{eq:eq1} and the smoothness of the scale function, which is $C^{1}(\mathbb{R}_+)$ when $X$ has paths of finite variation and the L\'evy measure has no atom, and is $C^2(\mathbb{R}_+)$ if $X$ has paths of unbounded variation with $\sigma>0$.
\begin{theo}\label{theo:main}
The value function $\mathcal{V}_b(y;\widetilde{p},\widetilde{\alpha},\gamma)$ of the optimal stopping problem (\ref{eq:OSP}) is given by (\ref{eq:optimsol}) and is obtained at $\tau_{h^{\star}}^-:=\inf\{t>0:Y_t<h^{\star}\}$, i.e.,
\begin{equation}
\mathcal{V}_b(y;\widetilde{p},\widetilde{\alpha},\gamma)=\mathbb{E}_{\vert y}\big[e^{-r\tau_{h^{\star}}^-}\mathcal{G}_b\big(Y_{\tau_{h^{\star}}^-}; \widetilde{p},\widetilde{\alpha},\gamma\big); \tau_{h^{\star}}^- \leq \tau_b^+\big].
\end{equation}
Furthermore, regardless of the regularity of the sample paths of $X$, the value function satisfies both continuous and smooth pasting conditions at the boundary,
\begin{align*}
\mathcal{V}_b(y;\widetilde{p},\widetilde{\alpha},\gamma) &=\mathcal{G}_b(y;\widetilde{p},\widetilde{\alpha},\gamma)  \quad \textrm{at $y=h^{\star}$,} \\
\mathcal{V}_b^{\prime}(y;\widetilde{p},\widetilde{\alpha},\gamma) &= \mathcal{G}_b^{\prime}(y;\widetilde{p},\widetilde{\alpha},\gamma) \quad \textrm{at $y=h^{\star}$}.
\end{align*}
\end{theo}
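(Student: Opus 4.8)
The plan is to prove the theorem by a verification argument that fuses the representation already obtained in Proposition \ref{prop:fptbelow} with a supermartingale property of the candidate function. Throughout write $\widetilde{\mathcal{V}}_b$ for the function in \eqref{eq:optimsol}. Since the first-passage times $\tau_h^-$ form a subclass of $\mathcal{T}_{[0,\infty)}$, the identity \eqref{eq:suph} gives at once the lower bound $\mathcal{V}_b(y)\ge\widetilde{\mathcal{V}}_b(y)$ for all $y\in(0,b)$ and exhibits $\tau_{h^{\star}}^-$ as the candidate optimizer. It therefore remains to establish the reverse inequality $\mathcal{V}_b(y)\le\widetilde{\mathcal{V}}_b(y)$, after which equality and attainment at $\tau_{h^{\star}}^-$ follow.

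For the reverse inequality I would first record the domination $\widetilde{\mathcal{V}}_b(y)\ge\mathcal{G}_b(y)$ on $[0,b]$: taking $h=y$ in \eqref{eq:suph} yields $\mathcal{G}_b(y)$ because $\tau_y^-=0$ under $\mathbb{P}_{\vert y}$, so $\widetilde{\mathcal{V}}_b(y)=\sup_h J_h(y)\ge\mathcal{G}_b(y)$. The core step is to show that the stopped discounted process $M_t:=e^{-r(t\wedge\tau_b^+)}\widetilde{\mathcal{V}}_b(Y_{t\wedge\tau_b^+})$ is a $\mathbb{P}_{\vert y}$-supermartingale. On the continuation region $(h^{\star},b)$ the candidate is a constant multiple of $W^{(r)}(b-\cdot)$, which is $r$-harmonic for $Y$ killed at $\tau_b^+$ (precisely what makes $J_{h^{\star}}$ in \eqref{hy} the correct expectation), so $(\mathcal{L}_Y-r)\widetilde{\mathcal{V}}_b=0$ there; on the stopping region $(0,h^{\star})$ we have $\widetilde{\mathcal{V}}_b=\mathcal{G}_b$ and Proposition \ref{prop:ass1} gives $(\mathcal{L}_Y-r)\mathcal{G}_b=r\gamma\le0$ by the standing assumption $\gamma\le0$. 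An It\^o--Dynkin expansion of $M_t$, together with $(\mathcal{L}_Y-r)\widetilde{\mathcal{V}}_b\le0$ on $(0,b)$, then delivers the supermartingale property. Since $W^{(r)}(b-Y_{\tau_b^+})=0$ in every case — by creeping to $b$ when $X$ has unbounded variation (where $W^{(r)}(0)=0$) and by an overshooting upward jump strictly across $b$ when $X$ has bounded variation — the boundary contribution at $\tau_b^+$ vanishes, so optional sampling yields $\widetilde{\mathcal{V}}_b(y)\ge\mathbb{E}_{\vert y}[e^{-r\theta}\widetilde{\mathcal{V}}_b(Y_\theta);\theta\le\tau_b^+]\ge\mathbb{E}_{\vert y}[e^{-r\theta}\mathcal{G}_b(Y_\theta);\theta\le\tau_b^+]$ for every $\theta\in\mathcal{T}_{[0,\infty)}$; taking the supremum produces $\widetilde{\mathcal{V}}_b(y)\ge\mathcal{V}_b(y)$ and hence equality.

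The main obstacle I expect is making the supermartingale step fully rigorous at the two interfaces: the kink where the branches of $\widetilde{\mathcal{V}}_b$ meet at $h^{\star}$ must not generate a wrong-signed singular term in the Meyer--It\^o formula, and the local time of $Y$ at the reflecting boundary $0$ must contribute nonpositively. The latter is clean: the local-time term at $0$ carries the coefficient $\widetilde{\mathcal{V}}_b'(0)=\mathcal{G}_b'(0)$, which by differentiating \eqref{eq:eq1} equals $\widetilde{p}=\widehat{p}-p<0$, so it pushes $M_t$ downward. The interface at $h^{\star}$ is controlled exactly by the pasting conditions, which I would verify last but which are in fact already encoded in the construction: continuous pasting $\widetilde{\mathcal{V}}_b(h^{\star})=\mathcal{G}_b(h^{\star})$ is immediate from \eqref{eq:optimsol}, while smooth pasting $\widetilde{\mathcal{V}}_b'(h^{\star})=\mathcal{G}_b'(h^{\star})$ is equivalent to $\mathcal{G}_b'(h^{\star})W^{(r)}(b-h^{\star})+\mathcal{G}_b(h^{\star})W^{(r)\prime}(b-h^{\star})=0$, i.e.\ the first-order Euler condition $\partial_h J_h(y)\vert_{h=h^{\star}}=0$ solved in the proof of Proposition \ref{prop:fptbelow}, namely \eqref{eq:bigroot}. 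Because $h^{\star}$ is determined through this smooth-fit equation and $W^{(r)}$ is $C^1$ on $(0,\infty)$ under the no-atom assumption, the $C^1$ fit holds irrespective of whether $0$ is regular for $(0,\infty)$; this is the reason both pasting conditions hold regardless of path regularity. With smooth fit in place, the second-order term at $h^{\star}$ reduces to a quantity whose sign is governed by the monotonicity \eqref{eq:monotone} of $W^{(r)}/W^{(r)\prime}$ — equivalently by $\partial_h^2 J_h\vert_{h^{\star}}<0$ computed in Proposition \ref{prop:fptbelow} — confirming the supermartingale inequality and closing the argument.
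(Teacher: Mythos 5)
Your proposal is correct and takes essentially the same route as the paper's own proof: the lower bound from the threshold family via Proposition \ref{prop:fptbelow}, the supermartingale property of $e^{-r(t\wedge\tau_b^+)}\widetilde{\mathcal{V}}_b(Y_{t\wedge\tau_b^+})$ obtained from $(\mathcal{L}_Y-r)\widetilde{\mathcal{V}}_b=0$ on $[h^{\star},b]$, $(\mathcal{L}_Y-r)\mathcal{G}_b=r\gamma\leq 0$ on $[0,h^{\star}]$ and the nonpositive local-time coefficient at $0$, followed by optional sampling and the majorant property, with continuous and smooth fit at $h^{\star}$ read off from the root equation \eqref{eq:bigroot} exactly as in Propositions \ref{prop:propofv}, \ref{prop:CR} and the paper's Section 4.2 argument. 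Your only departures are cosmetic and sound: you obtain $\widetilde{\mathcal{V}}_b\geq\mathcal{G}_b$ by taking $h=y$ in \eqref{eq:suph} (using upward regularity, so $\tau_y^-=0$ under $\mathbb{P}_{\vert y}$) instead of the derivative comparison in Proposition \ref{prop:propofv}(iii), and you compute the boundary coefficient explicitly as $\mathcal{G}_b^{\prime}(0)=\widetilde{p}<0$ rather than invoking \eqref{eq:dervG}.
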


\begin{prop}\label{prop:variational}
The function $\mathcal{V}_b(y)$ solves uniquely the variational inequality
\begin{align}\label{eq:varineq}
\max\big\{\mathcal{G}_b(y)-\mathcal{V}_b(y),\big(\mathcal{L}_Y-r\big)\mathcal{V}_b(y)\big\}=0, \quad \textrm{for $0\leq y\leq b$.}
\end{align}
\end{prop}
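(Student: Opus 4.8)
The plan is to run a verification argument, leaning on the fact that Theorem \ref{theo:main} has already identified $\mathcal{V}_b$ with the explicit function (\ref{eq:optimsol}) and its optimal rule $\tau_{h^\star}^-$. I would read the variational inequality as three simultaneous requirements, namely $\mathcal{V}_b\geq\mathcal{G}_b$, $(\mathcal{L}_Y-r)\mathcal{V}_b\leq 0$, and the complementarity that at every $y$ one of the two holds with equality, and check them on the two regions: the exercise set $[0,h^\star]$, where $\mathcal{V}_b=\mathcal{G}_b$, and the continuation set $(h^\star,b]$, where $\mathcal{V}_b(y)=\mathcal{G}_b(h^\star)W^{(r)}(b-y)/W^{(r)}(b-h^\star)$. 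The inequality $\mathcal{V}_b\geq\mathcal{G}_b$ is disposed of first: it is an equality on $[0,h^\star]$, and on $(h^\star,b]$ it follows from Proposition \ref{prop:fptbelow}, since $\mathcal{V}_b(y)=\sup_h J_h(y)\geq J_y(y)=\mathcal{G}_b(y)$. This already shows the first entry of the maximum is $\leq 0$ throughout and vanishes exactly on $[0,h^\star]$.

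Next, on the continuation region I would establish $(\mathcal{L}_Y-r)\mathcal{V}_b=0$. The key input is that $y\mapsto W^{(r)}(b-y)$ is $r$-harmonic for $\mathcal{L}_Y$ on $(0,b)$ under killing at $\tau_b^+$, which is precisely the martingale carried by identity (\ref{eq:fptbelow2}) of Proposition \ref{prop:fptbelow2}: the process $e^{-rt}W^{(r)}(b-Y_t)\mathbf{1}_{\{t<\tau_b^+\}}$ is a martingale, and because $W^{(r)}$ vanishes on the negative half-line the zero-extension above $b$ is automatic and consistent with the killing. Applying $\mathcal{L}_Y-r$ to the constant multiple of $W^{(r)}(b-\cdot)$ then returns $0$, so the complementarity holds on $(h^\star,b)$ and the second entry of the maximum is $\leq 0$ there.

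The remaining and genuinely delicate point is $(\mathcal{L}_Y-r)\mathcal{V}_b\leq 0$ on the exercise region $[0,h^\star]$. This does not follow from Proposition \ref{prop:ass1} by merely replacing $\mathcal{V}_b$ with $\mathcal{G}_b$, because $\mathcal{L}_Y$ is nonlocal: the jump integral in (\ref{eq:generator}) samples $\mathcal{V}_b$ at the up-shifted points $y+|w|$, which lie in the continuation region where $\mathcal{V}_b>\mathcal{G}_b$ (and beyond $b$ in the killed region), so a naive pointwise substitution changes the sign of the correction in an uncontrolled way. Rather than force through this computation, I would obtain the inequality from the supermartingale property of the value process furnished by the optimality in Theorem \ref{theo:main}: since $\mathcal{V}_b$ is the value of (\ref{eq:OSP}), the discounted process $e^{-r(t\wedge\tau_b^+)}\mathcal{V}_b(Y_{t\wedge\tau_b^+})$ is a $\mathbb{P}_{\vert y}$-supermartingale, and Dynkin's formula combined with the $C^1$ (bounded variation) respectively $C^2$ (unbounded variation) regularity secured by the smooth pasting converts this into $(\mathcal{L}_Y-r)\mathcal{V}_b\leq 0$ almost everywhere, hence everywhere by continuity. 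I expect this to be the main obstacle, the friction coming from the nonlocality of the integral term, the killing boundary at $b$, and the reconciliation of the merely $C^1$ regularity with the form of $\mathcal{L}_Y$ used in the bounded variation case.

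Finally, for uniqueness I would argue by comparison within the natural class of sufficiently regular solutions that are consistent with the killing at $b$. If $U$ is any such solution, then $(\mathcal{L}_Y-r)U\leq 0$ makes $e^{-r(t\wedge\tau_b^+)}U(Y_{t\wedge\tau_b^+})$ a supermartingale, so $U(y)\geq\mathbb{E}_{\vert y}[e^{-r\theta}U(Y_\theta);\theta\leq\tau_b^+]\geq\mathbb{E}_{\vert y}[e^{-r\theta}\mathcal{G}_b(Y_\theta);\theta\leq\tau_b^+]$ for every admissible $\theta$, using $U\geq\mathcal{G}_b$; taking the supremum over $\theta$ gives $U\geq\mathcal{V}_b$. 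For the reverse bound I would run the same process up to $\tau_U:=\inf\{t:U(Y_t)=\mathcal{G}_b(Y_t)\}$, on which $(\mathcal{L}_Y-r)U=0$ upgrades the stopped process to a genuine martingale, yielding $U(y)=\mathbb{E}_{\vert y}[e^{-r\tau_U}\mathcal{G}_b(Y_{\tau_U});\tau_U\leq\tau_b^+]\leq\mathcal{V}_b(y)$. The two bounds force $U=\mathcal{V}_b$, establishing uniqueness.
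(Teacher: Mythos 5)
Your verification half is sound but takes a genuinely different route from the paper, and your caution about the stopping region is well placed. The paper disposes of verification in one sentence, ``straightforward to check following Propositions \ref{prop:propofv} and \ref{prop:ass1}'': on $(h^{\star},b]$ it uses $(\mathcal{L}_Y-r)\widetilde{\mathcal{V}}_b=0$ and the majorant property, while on $[0,h^{\star}]$ it implicitly makes exactly the substitution you refuse, replacing $\mathcal{V}_b$ by $\mathcal{G}_b$ inside the generator and invoking $(\mathcal{L}_Y-r)\mathcal{G}_b=r\gamma\leq 0$ from Proposition \ref{prop:ass1}. Your objection is mathematically correct: with the natural extensions above $b$ (namely $\mathcal{V}_b=0$ and $\mathcal{G}_b=\widetilde{\alpha}-\gamma$), one has, for $y$ in the interior of the stopping region, $(\mathcal{L}_Y-r)\mathcal{V}_b(y)=r\gamma+\int_{-\infty}^{0}\big(\mathcal{V}_b-\mathcal{G}_b\big)(y+|w|)\,\Pi(dw)$ with a nonnegative correction, so the pointwise identification with $r\gamma$ fails whenever jumps can cross $h^{\star}$, and the inequality $\leq 0$ genuinely needs an argument such as your Snell-envelope supermartingale plus Dynkin. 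One caution: you must source the supermartingale property from general optimal-stopping theory, as you do, and not from Proposition \ref{prop:CR}, because the paper's proof of that proposition rests on the very substitution you reject; your route is thus more robust than the paper's on this point.

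Your uniqueness half, however, has a concrete gap. The paper restricts a priori to threshold pairs $(U,d)$ with $U=\mathcal{G}_b$ on $[0,d]$, $U\geq\mathcal{G}_b$ and $U\geq 0$, writes the decomposition \eqref{eq:Ito2}, and derives a contradiction for $d\neq h^{\star}$ from the sign $\gamma\leq 0$ in \eqref{eq:ineq4}; you instead run the classical two-sided comparison in a wider class. But for the reflected process $Y$ the It\^o decomposition \eqref{eq:Ito} carries the local-time term $\int_0^{t\wedge\tau_b^+}e^{-ru}U^{\prime}(0)\mathbf{1}_{\{Y_u=0\}}dS_u$, so your claim that ``$(\mathcal{L}_Y-r)U\leq 0$ makes $e^{-r(t\wedge\tau_b^+)}U(Y_{t\wedge\tau_b^+})$ a supermartingale'' is false in general: the variational inequality \eqref{eq:varineq} contains no condition at the reflecting barrier $y=0$, and a candidate $U$ whose continuation set reaches $0$ with $U^{\prime}(0)>0$ defeats your bound $U\geq\mathcal{V}_b$. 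Symmetrically, your martingale-up-to-$\tau_U$ step silently assumes $Y$ is never reflected at $0$ before $\tau_U$, i.e.\ that the coincidence set $\{U=\mathcal{G}_b\}$ contains a right-neighbourhood of $0$. The paper's threshold hypothesis is precisely what neutralizes this term: with $U=\mathcal{G}_b$ on $[0,d]$, $d>0$, one gets $U^{\prime}(0)=\mathcal{G}_b^{\prime}(0)\leq 0$ by \eqref{eq:dervG}, and started from $y\geq h^{\star}$ the set $\{Y_u=0\}$ is not visited before $\tau_{h^{\star}}^-\wedge\tau_b^+$, so the $dS$ term drops from \eqref{eq:ineq4}. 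Your argument becomes correct once you add to the admissible class either $U^{\prime}(0)\leq 0$ or the requirement that $U=\mathcal{G}_b$ on a neighbourhood of $0$, mirroring the paper's hypothesis on $(U,d)$; with that repair, your comparison is cleaner and applies beyond threshold-type candidates, and unlike the paper's contradiction it does not use the sign of $\gamma$.
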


Note that the equation (\ref{eq:varineq}) may be used/extended to numerically solve the finite-maturity counter part of the optimal stopping problem (\ref{eq:OSP}).

The above theorem states an optimal solution to the credit default swaps by exercising the call option at reduced premium rate $\widehat{p}$ and lower default payment $\widehat{\alpha}$ when the reference asset is increasing subject to paying a cost $\gamma$.

\section{Optimality and uniqueness of the solution}
The following results are required to establish the main results of Section 3.
\begin{lem}\label{lem:martingale}
By the strong Markov property, for any $0\leq h<b$ the processes
\begin{align*}
\big\{e^{-u(t\wedge\tau_h^-\wedge \tau_b^+)}W^{(u)}(b-Y_{t\wedge\tau_h^-\wedge \tau_b^+})\big\}_{t\geq 0} \; ,\;\big\{ e^{-u(t\wedge\tau_h^-\wedge \tau_b^+)} Z^{(u)}(b-Y_{t\wedge\tau_h^-\wedge \tau_b^+})\big\}_{t\geq 0},
\end{align*}
are $\mathcal{F}_t-$martingale under the probability measure $\mathbb{P}_{\vert y}$, for $h\leq y<b$.
\end{lem}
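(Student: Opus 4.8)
The plan is to show that each process is a uniformly integrable martingale by identifying it as the product of a discounting factor with a function of the killed reflected process, and then invoking the first-passage identities already established. The key observation is that on the stopped interval $[0, t\wedge\tau_h^-\wedge\tau_b^+)$ the reflected process $Y$ stays strictly inside the strip $(h,b)$, so both $W^{(u)}(b-\cdot)$ and $Z^{(u)}(b-\cdot)$ are evaluated on the open interval $(0,b-h)$ where the scale functions are smooth (the regularity noted after Proposition~\ref{prop:ass1}). The natural approach is to verify the defining martingale property $\mathbb{E}_{\vert y}\big[M_{t}\,\big|\,\mathcal{F}_s\big]=M_s$ directly using the strong Markov property of $Y$, together with the two-sided exit identities~\eqref{eq:fptbelow2} and~\eqref{eq:fptabove} from Proposition~\ref{prop:fptbelow2}.

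First I would fix the first process $M_t^W:=e^{-u(t\wedge\tau_h^-\wedge\tau_b^+)}W^{(u)}(b-Y_{t\wedge\tau_h^-\wedge\tau_b^+})$. Writing $\tau:=\tau_h^-\wedge\tau_b^+$, the strong Markov property of $Y$ at the stopping time $s\wedge\tau$ reduces the conditional expectation $\mathbb{E}_{\vert y}[M_t^W\mid\mathcal{F}_{s\wedge\tau}]$ to an evaluation under $\mathbb{P}_{\vert Y_{s\wedge\tau}}$ of the quantity $\mathbb{E}_{\vert z}\big[e^{-u(\tau_h^-\wedge\tau_b^+)}W^{(u)}(b-Y_{\tau_h^-\wedge\tau_b^+})\big]$ for the starting point $z=Y_{s\wedge\tau}\in[h,b]$. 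Splitting this over the two exit events and using $Y_{\tau_h^-}=h$ and $Y_{\tau_b^+}=b$ (no positive jumps), the quantity becomes
\begin{equation*}
W^{(u)}(b-h)\,\mathbb{E}_{\vert z}\big[e^{-u\tau_h^-}\mathbf{1}_{\{\tau_h^-\leq\tau_b^+\}}\big]+W^{(u)}(0)\,\mathbb{E}_{\vert z}\big[e^{-u\tau_b^+}\mathbf{1}_{\{\tau_b^+\leq\tau_h^-\}}\big].
\end{equation*}
Since $W^{(u)}(0)=0$ when $X$ has unbounded variation, and in the bounded-variation case one retains the second term with~\eqref{eq:fptabove}, substituting~\eqref{eq:fptbelow2} gives exactly $W^{(u)}(b-z)$ in the unbounded case; I would then verify that the residual contributions cancel so that the product with $e^{-u(s\wedge\tau)}$ recovers $M_{s\wedge\tau}^W$. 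The same computation for $Z^{(u)}(b-\cdot)$ uses both~\eqref{eq:fptbelow2} and~\eqref{eq:fptabove} together with the boundary values $Z^{(u)}(0)=1$, and the algebra collapses to $Z^{(u)}(b-z)$ by the very structure of~\eqref{eq:fptabove}.

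The main obstacle I anticipate is handling the bounded-variation case cleanly, where $W^{(u)}(0)\neq0$ and the two exit terms must be combined so that the boundary contributions from the level $b$ cancel rather than corrupt the identity; this is precisely where~\eqref{eq:fptabove} is needed and where one must be careful that the scale functions are only $C^1$ rather than $C^2$. An alternative, and probably cleaner, route is to bypass the explicit exit identities and instead apply the Dynkin/It\^o formula to $e^{-ut}W^{(u)}(b-Y_t)$ on $(h,b)$, using the fact that $(\mathcal{L}_Y-u)W^{(u)}(b-\cdot)=0$ and $(\mathcal{L}_Y-u)Z^{(u)}(b-\cdot)=0$ on the interior together with the Skorokhod reflection term at $Y=0$ vanishing because the process is stopped at $\tau_h^->0$; the local martingale so obtained is bounded on the stopped interval (both scale functions are bounded on $[0,b-h]$), hence a genuine martingale. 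I would present the exit-identity argument as the primary proof since it is self-contained within the identities already proved, and remark that the generator computation gives the same conclusion.
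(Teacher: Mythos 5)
Your overall strategy---the strong Markov property combined with the two-sided exit identities of Proposition~\ref{prop:fptbelow2}---is essentially the paper's, but your evaluation of the boundary term at $\tau_b^+$ contains a genuine error. You assert $Y_{\tau_b^+}=b$ ``(no positive jumps)'': the absence of positive jumps is a property of $X$, not of $Y$. Since $Y=S-X$, it is the \emph{downward} motion of $Y$ that is continuous (whence $Y_{\tau_h^-}=h$ a.s., which you use correctly), while $Y$ inherits \emph{upward} jumps from the downward jumps of $X$ and in general overshoots the level $b$, so $Y_{\tau_b^+}>b$ with positive probability. The correct second term in your decomposition is $\mathbb{E}_{\vert z}\big[e^{-u\tau_b^+}W^{(u)}(b-Y_{\tau_b^+})\mathbf{1}_{\{\tau_b^+\leq\tau_h^-\}}\big]$, and it vanishes almost surely---not because it equals $W^{(u)}(0)$ times something, but because $b-Y_{\tau_b^+}<0$ a.s.\ and $W^{(u)}\equiv 0$ on $(-\infty,0)$: in the unbounded variation case $Y$ can only creep over $b$ through the Gaussian part, and then $W^{(u)}(0)=0$ anyway; in the bounded variation (Cram\'er--Lundberg) case the crossing occurs by a jump and, the L\'evy measure being atomless, the overshoot is strictly positive a.s. Your proposed fallback for the bounded variation case---retaining $W^{(u)}(0)\,\mathbb{E}_{\vert z}\big[e^{-u\tau_b^+}\mathbf{1}_{\{\tau_b^+\leq\tau_h^-\}}\big]$ and hoping that ``the residual contributions cancel''---fails: substituting \eqref{eq:fptbelow2} and \eqref{eq:fptabove} yields
\[
W^{(u)}(b-z)+W^{(u)}(0)\Big(Z^{(u)}(b-z)-\frac{Z^{(u)}(b-h)}{W^{(u)}(b-h)}W^{(u)}(b-z)\Big),
\]
whose second summand is strictly positive for $z\in(h,b)$ when $W^{(u)}(0)=1/c>0$, so no cancellation occurs and your computation would wrongly refute the martingale property in precisely the Cram\'er--Lundberg case the paper cares most about. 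The paper sidesteps this by encoding the boundary behaviour in the almost sure identity \eqref{eq:identity}, $\mathbf{1}_{\{\tau_h^-\leq\tau_b^+\}}=W^{(u)}(b-Y_{\tau_h^-\wedge\tau_b^+})/W^{(u)}(b-h)$---this is exactly where $W^{(u)}(x)=0$ for $x<0$ does the work---and then exhibits the stopped process as the conditional expectation of this terminal variable, i.e.\ a closed Doob martingale.

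Two further remarks. For the $Z$-process your boundary value is accidentally correct: $Z^{(u)}(b-Y_{\tau_b^+})=1$ holds despite the overshoot because $Z^{(u)}\equiv 1$ on $(-\infty,0]$, not because $Y_{\tau_b^+}=b$; with that repair your algebra does collapse to $Z^{(u)}(b-z)$, mirroring the paper's use of the companion identity $\mathbf{1}_{\{\tau_b^+\leq\tau_h^-\}}=Z^{(u)}(b-Y_{\tau_h^-\wedge\tau_b^+})-\frac{Z^{(u)}(b-h)}{W^{(u)}(b-h)}W^{(u)}(b-Y_{\tau_h^-\wedge\tau_b^+})$ together with linearity and the already-proved $W$-martingale. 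Your alternative generator route is sound in spirit (it is the machinery behind \eqref{eq:Ito}), but as a primary proof it is not self-contained here: the harmonicity $(\mathcal{L}_Y-u)W^{(u)}(b-\cdot)=0$ for the drawdown generator is nowhere established in the paper and is usually itself derived from the very martingale you are trying to prove, and the smoothness needed to apply the change-of-variable formula is delicate in the bounded variation case. The exit-identity argument, with the overshoot handled as above, is the clean and self-contained route.
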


\begin{proof}
The proof follows by adapting the approach of \cite{Avram2004} for drawdown L\'evy process. To be more precise, to show the martingale property of the process $\big\{e^{-u(t\wedge\tau_h^-\wedge \tau_b^+)}W^{(u)}(b-Y_{t\wedge\tau_h^-\wedge \tau_b^+})\big\}_{t\geq 0}$, recall that $W^{(u)}(x)=0$ for $x<0$ and the following $\mathbb{P}_{\vert y}-$almost surely equivalence
\begin{equation}\label{eq:identity}
\mathbf{1}_{\{\tau_h^-\leq \tau_b^+\}}=W^{(u)}(b-Y_{\tau_h^-\wedge \tau_b^+})/W^{(u)}(b-h).
\end{equation}
Thus, following the identity (\ref{eq:fptbelow2}), (\ref{eq:identity}) and the strong Markov property,
\begin{align*}
&\mathbb{E}_{\vert y}\Big[e^{-u(\tau_h^-\wedge \tau_b^+)} \frac{W^{(u)}(b-Y_{\tau_h^-\wedge \tau_b^+})}{W^{(u)}(b-h)} \Big\vert \mathcal{F}_t\Big]\\
&\hspace{2cm}=\mathbf{1}_{\{\tau_h^-\wedge \tau_b^+ \geq t\}}e^{-ut}
\mathbb{E}_{\vert Y_t}\Big[e^{-u(\tau_h^-\wedge \tau_b^+)} \frac{W^{(u)}(b-Y_{\tau_h^-\wedge \tau_b^+})}{W^{(u)}(b-h)} \Big]\\
&\hspace{2.75cm}+\mathbf{1}_{\{\tau_h^-\wedge \tau_b^+ < t\}} e^{-u(\tau_h^-\wedge \tau_b^+)} \frac{W^{(u)}(b-Y_{\tau_h^-\wedge \tau_b^+})}{W^{(u)}(b-h)} \\
&\hspace{2cm}=\mathbf{1}_{\{\tau_h^-\wedge \tau_b^+ \geq t\}} e^{-ut}
\frac{W^{(u)}(b-Y_t)}{W^{(u)}(b-h)} \\
&\hspace{2.75cm}+\mathbf{1}_{\{\tau_h^-\wedge \tau_b^+ < t\}} e^{-u(\tau_h^-\wedge \tau_b^+)} \frac{W^{(u)}(b-Y_{\tau_h^-\wedge \tau_b^+})}{W^{(u)}(b-h)} \\
&\hspace{2cm}=e^{-u(t\wedge\tau_h^-\wedge \tau_b^+)} \frac{W^{(u)}(b-Y_{t\wedge\tau_h^-\wedge \tau_b^+})}{W^{(u)}(b-h)}.
\end{align*}
Hence, $\{e^{-u(t\wedge\tau_h^-\wedge \tau_b^+)}W^{(u)}(b-Y_{t\wedge\tau_h^-\wedge \tau_b^+})\}_{t\geq 0}$ is $\mathbb{P}_{\vert y}$ $\mathcal{F}_t-$martingale. Given that
\begin{align*}
\mathbf{1}_{\{\tau_b^+\leq \tau_h^-\}}=Z^{(u)}(b-Y_{\tau_h^-\wedge \tau_b^+})-\frac{Z^{(u)}(b-h)}{W^{(u)}(b-h)}W^{(u)}(b-Y_{\tau_h^-\wedge \tau_b^+}),
\end{align*}
one can show using the identity (\ref{eq:fptabove}) and the strong Markov property that
\begin{align*}
\Big\{e^{-u(t\wedge\tau_h^-\wedge\tau_b^+)}\Big(Z^{(u)}(b-Y_{t\wedge \tau_h^-\wedge \tau_b^+})-\frac{Z^{(u)}(b-h)}{W^{(u)}(b-h)}W^{(u)}(b-Y_{t\wedge \tau_h^-\wedge \tau_b^+})\Big)\Big\}_{t\geq 0},
\end{align*}
is $\mathbb{P}_{\vert y}$ $\mathcal{F}_t-$martingale, and hence so is $\big\{ e^{-u(t\wedge\tau_h^-\wedge \tau_b^+)} Z^{(u)}(b-Y_{t\wedge\tau_h^-\wedge \tau_b^+})\big\}_{t\geq 0}$. \exit
\end{proof}

\begin{prop}\label{prop:martingale2}
For any $0\leq h<b$, $\big\{e^{-r(t\wedge\tau_h^-\wedge\tau_b^+)}\overline{C}_{\infty}(Y_{t\wedge\tau_h^-\wedge\tau_b^+},b; \widetilde{p},\widetilde{\alpha})
\big\}_{t\geq 0}$ is $\mathcal{F}_t-$martingale under the measure $\mathbb{P}_{\vert y}$, with $h\leq y\leq b$.
\end{prop}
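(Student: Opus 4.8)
The plan is to reduce the statement to the two martingales already established in Lemma~\ref{lem:martingale} by exploiting the explicit closed form of $\overline{C}_{\infty}$. First I would recall from \eqref{eq:eq1} that
\begin{equation*}
\overline{C}_{\infty}(y,b; \widetilde{p},\widetilde{\alpha})= \widetilde{\alpha}\, Z^{(r)}(b-y) - \frac{\widetilde{p}+r\widetilde{\alpha}\, W^{(r)}(b)}{W^{(r)\prime}(b)}\,W^{(r)}(b-y),
\end{equation*}
so that, viewed as a function of the state $y$, the payoff $\overline{C}_{\infty}(\cdot,b;\widetilde{p},\widetilde{\alpha})$ is nothing but a linear combination of $Z^{(r)}(b-\cdot)$ and $W^{(r)}(b-\cdot)$ whose coefficients $\widetilde{\alpha}$ and $-(\widetilde{p}+r\widetilde{\alpha}W^{(r)}(b))/W^{(r)\prime}(b)$ depend only on the fixed parameters $b,\widetilde{p},\widetilde{\alpha},r$ and not on $y$ or $t$.

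Next I would apply Lemma~\ref{lem:martingale} with $u=r$, which asserts that both
\begin{equation*}
\big\{e^{-r(t\wedge\tau_h^-\wedge \tau_b^+)}W^{(r)}(b-Y_{t\wedge\tau_h^-\wedge \tau_b^+})\big\}_{t\geq 0}
\;\text{and}\;
\big\{e^{-r(t\wedge\tau_h^-\wedge \tau_b^+)} Z^{(r)}(b-Y_{t\wedge\tau_h^-\wedge \tau_b^+})\big\}_{t\geq 0}
\end{equation*}
are $\mathcal{F}_t$-martingales under $\mathbb{P}_{\vert y}$ for $h\leq y<b$. The decisive observation is that the discount exponent $e^{-r(t\wedge\tau_h^-\wedge\tau_b^+)}$ appearing in the statement is common to both terms and agrees with the index $u=r$ of the scale functions entering $\overline{C}_{\infty}$; substituting $Y_{t\wedge\tau_h^-\wedge\tau_b^+}$ for $y$ in the display of \eqref{eq:eq1} therefore expresses the stopped, discounted process $\{e^{-r(t\wedge\tau_h^-\wedge\tau_b^+)}\overline{C}_{\infty}(Y_{t\wedge\tau_h^-\wedge\tau_b^+},b;\widetilde{p},\widetilde{\alpha})\}_{t\geq 0}$ as exactly the same deterministic-coefficient linear combination of these two martingales.

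The conclusion is then immediate: a finite linear combination of $\mathcal{F}_t$-martingales with constant coefficients is again an $\mathcal{F}_t$-martingale, so the process in question inherits the martingale property under $\mathbb{P}_{\vert y}$. I do not expect a genuine analytic obstacle here, since all integrability is inherited from Lemma~\ref{lem:martingale}: on the event up to $\tau_h^-\wedge\tau_b^+$ the argument $b-Y$ ranges over the compact set $[0,b-h]$, where $W^{(r)}$ and $Z^{(r)}$ are bounded. The only point that truly requires care, and the reason the representation \eqref{eq:eq1} is stated with $Z^{(r)}$ and $W^{(r)}$ rather than a generic index $u$, is the alignment of the discount rate $r$ with the superscript of the scale functions; were these mismatched, the linear combination would fail to be a martingale.
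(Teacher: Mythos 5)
Your proposal is correct and coincides with the paper's own argument: the paper's proof is exactly to apply Lemma~\ref{lem:martingale} with $u=r$ to the representation \eqref{eq:eq1} of $\overline{C}_{\infty}$ as a constant-coefficient linear combination of $Z^{(r)}(b-\cdot)$ and $W^{(r)}(b-\cdot)$. Your additional remarks on integrability and the alignment of the discount rate with the scale-function index are sound elaborations of what the paper leaves implicit.
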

\begin{proof}
The proof is straightforward from applying Lemma \ref{lem:martingale} to (\ref{eq:eq1}). \exit
\end{proof}

\begin{prop}\label{prop:propofv}
For all $y\in[h^{\star},b]$, the function $\widetilde{\mathcal{V}}_b(y)$ satisfies:
\begin{enumerate}
\item[(i)] $\widetilde{\mathcal{V}}_b^{\prime}(y)\leq 0$ and $\widetilde{\mathcal{V}}_b(y)\geq 0$ (for all $0\leq y\leq b$),
\item[(ii)] $(\mathcal{L}_Y-r)\widetilde{\mathcal{V}}_b(y) = 0,$
\item[(iii)] $\widetilde{\mathcal{V}}_b(y) \geq \mathcal{G}_b(y)$.
\end{enumerate}
\end{prop}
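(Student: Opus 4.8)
Proposition on properties of $\widetilde{\mathcal{V}}_b$ — proof plan.

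The plan is to verify the three claims by exploiting the explicit formula $\widetilde{\mathcal{V}}_b(y)=\mathcal{G}_b(h^{\star})\,W^{(r)}(b-y)/W^{(r)}(b-h^{\star})$ on $[h^{\star},b]$, together with the martingale property of Proposition~\ref{prop:martingale2} and Lemma~\ref{lem:martingale}. The three items are essentially independent, so I would treat them separately.

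For item (ii), the cleanest route is to avoid direct computation of the integro-differential operator. Since $\widetilde{\mathcal{V}}_b(y)$ on $[h^{\star},b]$ is a constant multiple of $W^{(r)}(b-y)$, and Lemma~\ref{lem:martingale} asserts that $\{e^{-r(t\wedge\tau_{h^{\star}}^-\wedge\tau_b^+)}W^{(r)}(b-Y_{t\wedge\tau_{h^{\star}}^-\wedge\tau_b^+})\}_{t\ge0}$ is an $\mathcal{F}_t$-martingale, I would apply Dynkin's formula to this stopped process. Writing the generator action and using that the drift term of the martingale must vanish on the continuation region $(h^{\star},b)$ forces $(\mathcal{L}_Y-r)W^{(r)}(b-\cdot)=0$ there, hence $(\mathcal{L}_Y-r)\widetilde{\mathcal{V}}_b(y)=0$ for $y\in[h^{\star},b]$. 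This sidesteps the need to differentiate the L\'evy integral term by hand.

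For item (i), monotonicity and nonnegativity follow from the sign of $\mathcal{G}_b(h^{\star})$ and the monotonicity of the scale function. Since $W^{(r)}$ is increasing and positive on $(0,b)$, the map $y\mapsto W^{(r)}(b-y)$ is positive and decreasing; differentiating gives $\widetilde{\mathcal{V}}_b^{\prime}(y)=-\mathcal{G}_b(h^{\star})\,W^{(r)\prime}(b-y)/W^{(r)}(b-h^{\star})$. Thus I must show $\mathcal{G}_b(h^{\star})\ge0$: because $h^{\star}$ solves \eqref{eq:bigroot}, we have $\mathcal{G}_b(h^{\star})=\overline{C}_{\infty}(h^{\star},b;\widetilde{p},\widetilde{\alpha})-\gamma$, and substituting \eqref{eq:eq1} at $y=h^{\star}$ together with the defining equation \eqref{eq:bigroot} pins down its sign; the remaining $\widetilde{p}$-dependent term must be handled using $\widetilde{p}<0$. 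Nonnegativity on all of $[0,b]$ then combines this with $\widetilde{\mathcal{V}}_b=\mathcal{G}_b$ on $[0,h^{\star}]$, where $\mathcal{G}_b\ge\mathcal{G}_b(h^{\star})\ge0$ by the monotonicity established in Proposition~\ref{prop:fptbelow}.

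For item (iii), the inequality $\widetilde{\mathcal{V}}_b(y)\ge\mathcal{G}_b(y)$ on $[h^{\star},b]$ is the substantive part and I expect it to be the main obstacle. The idea is to set $\Delta(y):=\widetilde{\mathcal{V}}_b(y)-\mathcal{G}_b(y)$ and show $\Delta\ge0$. At $y=h^{\star}$ continuous pasting gives $\Delta(h^{\star})=0$, and smooth pasting (Theorem~\ref{theo:main}) gives $\Delta^{\prime}(h^{\star})=0$. Applying $(\mathcal{L}_Y-r)$ to $\Delta$ and using item (ii) together with Proposition~\ref{prop:ass1}, namely $(\mathcal{L}_Y-r)\mathcal{G}_b=r\gamma\le0$, yields $(\mathcal{L}_Y-r)\Delta(y)=-r\gamma\ge0$ on $[h^{\star},b]$. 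The hard step is converting this differential inequality, with the two pasting conditions at $h^{\star}$, into a pointwise sign conclusion; I would argue that $h\mapsto J_h(y)$ is maximized at $h^{\star}$ (established in Proposition~\ref{prop:fptbelow}), so that for $y\in[h^{\star},b]$ the value $\widetilde{\mathcal{V}}_b(y)=\sup_h J_h(y)\ge J_y(y)=\mathcal{G}_b(y)$, the last equality holding since $\tau_y^-=0$ and $\tau_b^+\ge0$ under $\mathbb{P}_{\vert y}$. This optimization characterization delivers (iii) directly and is cleaner than forcing the maximum-principle argument through the integro-differential inequality.
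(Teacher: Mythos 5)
Your plans for items (i) and (ii) track the paper's proof essentially exactly: for (i) you recover the sign of $\mathcal{G}_b(h^{\star})$ by substituting the defining equation (\ref{eq:bigroot}) into (\ref{eq:eq1}) and using $\widetilde{p},\widetilde{\alpha}<0$ together with the monotonicity (\ref{eq:monotone}); for (ii) you use the martingale of Lemma \ref{lem:martingale} and the decomposition (\ref{eq:Ito}), concluding that the drift term must vanish on $[h^{\star},b]$ — exactly the paper's argument (and indeed, since $h^{\star}>0$, the reflection term $\widetilde{\mathcal{V}}_b^{\prime}(0)\mathbf{1}_{\{Y_u=0\}}dS_u$ never charges the stopped path, which is the paper's remark about $\{t:t<\tau_{\{0\}}\}$ having zero $dS_t$-measure). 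Where you genuinely diverge is (iii): the paper proves the majorant property by the derivative comparison
\begin{equation*}
\widetilde{\mathcal{V}}_b^{\prime}(y)-\mathcal{G}_b^{\prime}(y)=r\widetilde{\alpha}\,W^{(r)\prime}(b-y)\Big[\frac{W^{(r)}(b-y)}{W^{(r)\prime}(b-y)}-\frac{W^{(r)}(b-h^{\star})}{W^{(r)\prime}(b-h^{\star})}\Big]\geq 0 \quad \textrm{on } [h^{\star},b],
\end{equation*}
integrated from the pasting point $\widetilde{\mathcal{V}}_b(h^{\star})=\mathcal{G}_b(h^{\star})$, whereas you use the variational characterization $\widetilde{\mathcal{V}}_b(y)=\sup_h J_h(y)\geq J_y(y)=\mathcal{G}_b(y)$, reading $J_y(y)=\mathcal{G}_b(y)$ off (\ref{hy}) at $h=y$ (equivalently, $\tau_y^-=0$ a.s.\ under $\mathbb{P}_{\vert y}$, which holds because downward subordinators are excluded, so $X$ is regular upwards at $0$). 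Your route is arguably cleaner — it avoids differentiating at all — but it buys this at the cost of needing the supremum in Proposition \ref{prop:fptbelow} to be \emph{global} in $h$; the paper's proof there verifies only the first-order condition and local concavity at $h^{\star}$, so for completeness you should observe that $\partial J_h(y)/\partial h$ carries the sign of $f(h)$ in (\ref{eq:funfh}), which by Proposition \ref{prop:unique} is strictly decreasing with a single zero at $h^{\star}$, making $h^{\star}$ the global maximizer. The paper's derivative argument, by contrast, is self-contained given (\ref{eq:monotone}).

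Two small repairs. First, you attribute the monotonicity of $\mathcal{G}_b$ (needed to get $\widetilde{\mathcal{V}}_b\geq 0$ on $[0,h^{\star}]$) to Proposition \ref{prop:fptbelow}; it is not proved there — the paper establishes it inside the proof of this very proposition, via (\ref{eq:dervG}), and you would need to derive it yourself from (\ref{eq:eq1}), the signs $\widetilde{p},\widetilde{\alpha}<0$, and (\ref{eq:monotone}). Second, your preliminary sketch for (iii) invokes smooth pasting from Theorem \ref{theo:main}; that would be circular, since the theorem's proof relies on the present proposition (in particular on (iii) via Proposition \ref{prop:CR} and the supermartingale inequality). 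Fortunately your final argument for (iii) discards that route and uses only Proposition \ref{prop:fptbelow}, which precedes this proposition in the paper's logical order, so no circularity remains in what you actually propose.
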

\begin{proof}
\begin{enumerate}
\item[(i)] The proof is straightforward following the definition of $\widetilde{\mathcal{V}}_b(y)$ (\ref{eq:optimsol}), (\ref{eq:monotone}), and the fact that $0<W^{(r)}(x)$, increasing $\forall x\geq 0$ and for $\widetilde{\alpha},\widetilde{p}<0$,
\begin{equation*}\label{eq:positiveg}
\begin{split}
\mathcal{G}_b(h^{\star})=r\widetilde{\alpha}W^{(r)}(b-h^{\star})\Big[\frac{W^{(r)}(b-h^{\star})}{W^{(r)\prime}(b-h^{\star})} - \frac{W^{(r)}(b)}{W^{(r)\prime}(b)}\Big] - \widetilde{p} \frac{W^{(r)}(b-h^{\star})}{W^{(r)\prime}(b)} >0
\end{split}
\end{equation*}
and the payoff function $\mathcal{G}_b(y)$ is monotone decreasing for all $0\leq y\leq b$ as
\begin{align}\label{eq:dervG}
\mathcal{G}_b^{\prime}(y)=\widetilde{p}\frac{W^{(r)\prime}(b-y)}{W^{(r)\prime}(b)}
+r\widetilde{\alpha}\frac{\big(W^{(r)}(b)\big)^2}{W^{(r)\prime}(b)}\frac{d}{db}\Big(\frac{W^{(r)}(b-y)}{W^{(r)}(b)}\Big) \leq 0. \quad \exit
\end{align}

\item[(ii)] By Lemma \ref{lem:martingale}, $\big\{e^{-r(t\wedge\tau_{h^{\star}}^-\wedge\tau_b^+)} \widetilde{\mathcal{V}}_b(Y_{ t \wedge \tau_{h^{\star}}^-\wedge\tau_b^+})\big\}_{t\geq 0}$ is $\mathcal{F}_t-$martingale. Hence, on account that the event $\{t:t<\tau_{\{0\}}\}$ has zero Stieltjes measure $dS_t$ under $\mathbb{P}_{\vert y}$, it implies that $(\mathcal{L}_Y-r) \widetilde{\mathcal{V}}_b(y)=0$ for all $y\in[h^{\star},b]$, see (\ref{eq:Ito}).

\item[(iii)]
The proof follows from definition of $h^{\star}$ (\ref{eq:bigroot}) and (\ref{eq:monotone}) by which we have
\begin{align*}
\widetilde{\mathcal{V}}_b^{\prime}(y)-\mathcal{G}_b^{\prime}(y)=r\widetilde{\alpha}W^{(r)\prime}(b-y)\Big[\frac{W^{(r)}(b-y)}{W^{(r)\prime}(b-y)} - \frac{W^{(r)}(b-h^{\star})}{W^{(r)\prime}(b-h^{\star})}\Big]\geq 0.
\end{align*}
The claim on the majorant property follows as $\widetilde{\mathcal{V}}_b(h^{\star}) - \mathcal{G}_b(h^{\star})=0$. \exit
\end{enumerate}
\end{proof}

\begin{prop}\label{prop:CR}
The process $\big\{e^{-r(t\wedge \tau_b^+)}\widetilde{\mathcal{V}}_b(Y_{t\wedge\tau_b^+})\big\}_{t\geq 0}$ is $\mathcal{F}_t-$supermartingale.
\end{prop}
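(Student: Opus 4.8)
The plan is to establish the supermartingale property through the It\^o--Dynkin decomposition of the discounted value process, combined with the two sign conditions already collected for $\widetilde{\mathcal{V}}_b$. First I would record that $\widetilde{\mathcal{V}}_b$ is regular enough for the change-of-variables formula (\ref{eq:Ito}) to apply: on each of the two pieces in (\ref{eq:optimsol}) it inherits the smoothness of the scale function ($C^1(\mathbb{R}_+)$ in the bounded-variation case, $C^2(\mathbb{R}_+)$ when $\sigma>0$), while the continuous and smooth pasting at $h^{\star}$ asserted in Theorem \ref{theo:main} guarantees that $\widetilde{\mathcal{V}}_b$ and $\widetilde{\mathcal{V}}_b'$ match across the boundary, so there is no singular contribution at $y=h^{\star}$.

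Next I would apply (\ref{eq:Ito}) to $e^{-rt}\widetilde{\mathcal{V}}_b(Y_t)$ stopped at $\tau_b^+$. Writing $Y=S-X$ and using that $dS_t$ is carried on $\{t:Y_t=0\}$, this yields the representation
\begin{align*}
e^{-r(t\wedge\tau_b^+)}\widetilde{\mathcal{V}}_b(Y_{t\wedge\tau_b^+})
&=\widetilde{\mathcal{V}}_b(y)+M_{t\wedge\tau_b^+}
+\int_0^{t\wedge\tau_b^+}e^{-rs}(\mathcal{L}_Y-r)\widetilde{\mathcal{V}}_b(Y_s)\,ds\\
&\quad +\int_0^{t\wedge\tau_b^+}e^{-rs}\widetilde{\mathcal{V}}_b'(0^+)\,dS_s,
\end{align*}
where $M$ is a local martingale and the final reflection term arises because the only increase of $S$ occurs at $Y=0$.

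The core of the argument is then to check that both finite-variation terms are non-increasing. For the $ds$-integrand I would split according to (\ref{eq:optimsol}): on $[0,h^{\star}]$ we have $\widetilde{\mathcal{V}}_b=\mathcal{G}_b$, whence $(\mathcal{L}_Y-r)\widetilde{\mathcal{V}}_b(y)=r\gamma\le 0$ by Proposition \ref{prop:ass1} together with the standing assumption $\gamma\le 0$; on $[h^{\star},b]$ it vanishes by Proposition \ref{prop:propofv}(ii). For the $dS_s$-integrand, Proposition \ref{prop:propofv}(i) gives $\widetilde{\mathcal{V}}_b'(0^+)\le 0$ while $dS_s\ge 0$, so that term is non-positive as well. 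Hence the stopped process equals a local martingale plus a non-increasing adapted process.

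Finally I would upgrade $M$ to a genuine martingale: since $Y_{t\wedge\tau_b^+}\in[0,b]$ and $\widetilde{\mathcal{V}}_b$ is continuous, hence bounded, on $[0,b]$, the stopped process is bounded, so $M$ is a bounded local martingale and therefore a true martingale; combined with the non-increasing finite-variation part this shows that $\{e^{-r(t\wedge\tau_b^+)}\widetilde{\mathcal{V}}_b(Y_{t\wedge\tau_b^+})\}_{t\ge 0}$ is an $\mathcal{F}_t$-supermartingale under $\mathbb{P}_{\vert y}$. I expect the only delicate point to be the justification of (\ref{eq:Ito}) in the bounded-variation regime, where the scale function is merely $C^1$ and one must invoke the finite-variation change-of-variables formula rather than the classical It\^o formula, along with the correct sign and support of the reflection term $dS_t$.
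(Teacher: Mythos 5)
Your proposal is correct and takes essentially the same route as the paper's proof: both apply the change-of-variables decomposition (\ref{eq:Ito}) to the stopped discounted process and conclude by checking that the $du$-integrand equals $r\gamma\leq 0$ on $[0,h^{\star}]$ (Proposition \ref{prop:ass1} with the standing assumption $\gamma\leq 0$), vanishes on $[h^{\star},b]$ (Proposition \ref{prop:propofv}(ii)), and that the reflection term is non-positive because $\widetilde{\mathcal{V}}_b^{\prime}(0)=\mathcal{G}_b^{\prime}(0)\leq 0$ by (\ref{eq:dervG}). Your added step upgrading $M$ from a local martingale to a true martingale via boundedness of $\widetilde{\mathcal{V}}_b$ on $[0,b]$ is a slightly more careful treatment than the paper's appeal to Doob's optional stopping theorem, but it does not change the argument.
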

\begin{proof}
Given the smoothness of the scale function $W^{(r)}(x)$, we have by applying the change-of-variable formula for the trivariate process $(t,S_t,X_t)$, see Theorem 33 in Protter \cite{Protter}, applied to the discounted process $e^{-r(t\wedge\tau_b^+)}\widetilde{\mathcal{V}}_b(Y_{t\wedge\tau_b^+})$, the L\'evy-It\^o sample paths decomposition of the discounted process given for $t\geq 0$ by
\begin{equation}\label{eq:Ito}
\begin{split}
e^{-r(t\wedge \tau_b^+)}\widetilde{\mathcal{V}}_b(Y_{t\wedge \tau_b^+)})=& \widetilde{\mathcal{V}}_b(y) + \int_0^{t\wedge \tau_b^+} e^{-ru} \widetilde{\mathcal{V}}_b^{\prime}(0)\mathbf{1}_{\{Y_u=0\}}dS_u \\
&\hspace{1.5cm}+ \int_0^{t\wedge \tau_b^+} e^{-ru}\big(\mathcal{L}_Y -r)\widetilde{\mathcal{V}}_b(Y_u)du + M_{t\wedge \tau_b^+},
\end{split}
\end{equation}
under $\mathbb{P}_{\vert y}$, with $0\leq y\leq b$, where by Doob's optional stopping theorem, $M_{t\wedge \tau_b^+}$ is $\mathcal{F}_t-$martingale with $\mathbb{E}_{\vert y}\big[M_{t\wedge \tau_b^+}\big]=0$. By $(ii)$ of Proposition \ref{prop:propofv} and that $\widetilde{\mathcal{V}}_b(y)=\mathcal{G}_b(y)$ for $0\leq y\leq h^{\star}$, the claim is established on account of (\ref{eq:ass1})
and (\ref{eq:dervG}) by which it follows by definition (\ref{eq:optimsol}) of $\widetilde{\mathcal{V}}_b(y)$ that $\widetilde{\mathcal{V}}_b^{\prime}(0)=\mathcal{G}_b^{\prime}(0)\leq 0$. \exit
\end{proof}

\subsection{Proof of Proposition \ref{prop:ass1}}
On account of the fact that the event $\{t<\tau_{h}^-\wedge\tau_b^+\}$ has zero Stieltjes measure $dS_t$ under $\mathbb{P}_{\vert y}$, with $0\leq h\leq y\leq b$, it follows from Proposition \ref{prop:martingale2} and the paths decomposition (\ref{eq:Ito}) for the process $\big\{e^{-r(t\wedge\tau_h^-\wedge\tau_b^+)}\overline{C}_{\infty}(Y_{t\wedge\tau_h^-\wedge\tau_b^+},b; \widetilde{p},\widetilde{\alpha})\big\}_{t\geq 0}$ that $(\mathcal{L}_Y-r)\mathcal{G}_b(y)=r\gamma$ for all $y\in[h,b]$. The claim (\ref{eq:ass1}) is established given that $h$ is arbitrary.\exit

\subsection{Proof of Theorem \ref{theo:main}}
Recall following (\ref{eq:eq2}) that the value function $\mathcal{V}_b(y)$ satisfies the majorant property over the payoff function $\mathcal{G}_b(y)$, i.e., $\mathcal{V}_b(y)\geq\mathcal{G}_b(y)$ for all $y\in[0,b]$. More precisely, following (\ref{eq:OSP}) we have $\mathcal{V}_b(y)\geq \mathbb{E}_{\vert y}\big[e^{-r\theta} \mathcal{G}_b(Y_{\theta});\theta\leq \tau_b^+\big]$ for all stopping time $\theta\in\mathcal{T}_{[0,\infty)}$. Since $0\in\mathcal{T}_{[0,\infty)}$ and $\mathbb{P}_{\vert y}\{\tau_b^+\geq 0\}=1$, the claim follows for $\theta=0$. Moreover, $\mathcal{V}_b(y)=\mathcal{G}_b(y)$ holds for some $0\leq y \leq b$ such that $\mathbb{P}_{\vert y}\{\theta=0\}=1$. The set $\mathcal{S}=\{0\leq y \leq b: \mathcal{V}_b(y)=\mathcal{G}_b(y)\}$ corresponds to the stopping region of the problem (\ref{eq:eq2}). If the value function $\mathcal{V}_b(y)$ is continuous, $\mathcal{S}$ is a closed set. The complement $\mathcal{C}$ of the set $\mathcal{S}$ refers to the continuation region of (\ref{eq:eq2}), i.e., $\mathcal{C}=\{0\leq y \leq b: \mathcal{V}_b(y)>\mathcal{G}_b(y)\}$. To show that the stopping problem (\ref{eq:OSP}) can be reduced under (\ref{eq:ass1}) to the first-passage below a level of drawdown process $Y$, let us rewrite without loss of generality the problem (\ref{eq:OSP}) as follows:
\begin{equation*}
\mathcal{V}_b(y)=\sup_{\theta\in\mathcal{T}_{[0,\tau_b^+)}}\mathbb{E}_{\vert y}\big[e^{-r\theta}\mathcal{G}_b(Y_{\theta})\big].
\end{equation*}
Note following (\ref{eq:OSP}) that $\mathcal{V}_b(y)\geq \widetilde{\mathcal{V}}_b(y)$ for all $y\in[0,b)$. Moreover, we have $Y_t\in [0,b]$ for $t\leq \tau_b^+$.
To show the reverse inequality we will use Optional Stopping Theorem.

First, we check that the continuous pasting is satisfied at the boundary $y=h^{\star}$. To show the smooth pasting condition, recall that the right derivative of the function $\mathcal{V}_b(y)=\mathcal{G}_b(h^{\star})W^{(r)}(b-y)/W^{(r)}(b-h^{\star})$ at the point $y=h^{\star}$ is given by $\mathcal{V}_b^{\prime}(h^{\star})=-\mathcal{G}_b(h^{\star})W^{(r)\prime}(b-h^{\star})/W^{(r)}(b-h^{\star})$. By evaluating the latter on account of the fact that $h^{\star}$ solves the equation (\ref{eq:bigroot}) leads to
\begin{equation*}
\mathcal{V}_b^{\prime}(h^{\star})=-r\widetilde{\alpha} W^{(r)}(b-h^{\star})+\frac{\big(\widetilde{p}+r\widetilde{\alpha}W^{(r)}(b)\big)}{W^{(r)\prime}(b)}W^{(r)\prime}(b-h^{\star})=\mathcal{G}_b^{\prime}(h^{\star}). \quad
\end{equation*}
As a result, we see that the continuous and smooth pasting conditions are satisfied regardless of the regularity condition on the sample paths of the L\'evy process.

Recall following Proposition \ref{prop:CR} that the process $\{e^{-r(t\wedge\tau_b^+)}\widetilde{\mathcal{V}}_b(Y_{t \wedge\tau_b^+}), t\geq 0\}$ is supermartingale. Hence, by Lemma 7 of Palmowski and Tumilewicz \cite{Palmowski2018}, positivity of $\widetilde{\mathcal{V}}_b(y)$, and $(iii)$ of Proposition \ref{prop:propofv} we have for any stopping time $\theta$,
\begin{align}
\widetilde{\mathcal{V}}_b(y)&
\geq \mathbb{E}_{\vert y}\big[e^{-r(\theta\wedge\tau_b^+)}\widetilde{\mathcal{V}}_b(Y_{\theta\wedge\tau_b^+})\big]  \geq  \mathbb{E}_{\vert y}\big[e^{-r\theta} \mathcal{G}_b(Y_{\theta}); \theta \leq \tau_b^+\big].  \label{eq:ineq2}
\end{align}

We used above the majorant property $\widetilde{\mathcal{V}}_b(y)\geq \mathcal{G}_b(y)$ that holds $\forall y\in[0,b]$.
Taking supremum over all stopping times on the right hand side of \eqref{eq:ineq2}
completes the proof of the first assertion.
\exit

\subsection{Proof of Proposition \ref{prop:variational}}
It is straightforward to check following Propositions \ref{prop:propofv} and \ref{prop:ass1} that the value function $\mathcal{V}_b(y)$ of the optimal stopping (\ref{eq:OSP}) satisfies the variational inequality (\ref{eq:varineq}). Let $(U,d)$ be a pair solution to (\ref{eq:varineq}) such that $U(y)\geq 0$ for all $0\leq y\leq b$, $U(y)=\mathcal{G}_b(y)$ for all $0\leq y\leq d$ and $U(y)\geq \mathcal{G}_b(y)$, otherwise. Assume that $U$ has degree of smoothness such that the L\'evy-It\^o decomposition (\ref{eq:Ito}) applies, i.e.,
\begin{equation}\label{eq:Ito2}
\begin{split}
e^{-r(t\wedge\tau_b^+)}U(Y_{t\wedge\tau_b^+})=&U(y)+\int_0^{t\wedge \tau_b^+} e^{-ru} U^{\prime}(0)\mathbf{1}_{\{Y_u=0\}}dS_u\\
&+r\gamma\int_0^{t\wedge\tau_b^+}e^{-ru}\mathbf{1}_{\{0\leq Y_u\leq d\}}du + M_{t\wedge\tau_b^+}.
\end{split}
\end{equation}
Notice that we have applied the result of Proposition \ref{prop:ass1}. Since $(\mathcal{V}_b,h^{\star})$ is a pair of optimal solution to the stopping problem (\ref{eq:OSP}), we have for all $0\leq y\leq b$,
\begin{align}\label{eq:ineq3}
\mathcal{V}_b(y)\geq U(y),
\end{align}
which in turn implies that $h^{\star}\leq d$. Next, for a given $y\in[h^{\star},d]$, we have after replacing $t$ by $\tau_{h^{\star}}^-$ in the decomposition (\ref{eq:Ito2}) and taking expectation $\mathbb{E}_{\vert y}$ that
\begin{align}\label{eq:ineq4}
\mathbb{E}_{\vert y}\big[e^{-r(\tau_{h^{\star}}^-\wedge\tau_b^+)}U(Y_{\tau_{h^{\star}}^-\wedge\tau_b^+})\big]=U(y) + r\gamma\mathbb{E}_{\vert y}\Big[\int_0^{\tau_{h^{\star}}^-\wedge\tau_b^+}e^{-ru}\mathbf{1}_{\{0\leq Y_u\leq d\}}du\Big].
\end{align}
By positivity of $U(y)$ on $0\leq y\leq b$ and that $U(y)=\mathcal{G}_b(y)$ for $0\leq y\leq d$ along with the fact that $\mathbb{E}_{\vert y}\big[e^{-r\tau_{h^{\star}}^-}\mathcal{G}_b(Y_{\tau_{h^{\star}}^-});\tau_{h^{\star}}^-\leq \tau_b^+\big]=\mathcal{V}_b(y),$ we then obtain
\begin{align*}
U(y) + r\gamma\mathbb{E}_{\vert y}\Big[\int_0^{\tau_{h^{\star}}^-\wedge\tau_b^+}e^{-ru}\mathbf{1}_{\{0\leq Y_u\leq d\}}du\Big]\geq \mathcal{V}_b(y),
\end{align*}
which by (\ref{eq:ineq3}) leads to contrary given that $\gamma\leq 0$. Hence, $\{h^{\star}\leq d\}$ is an empty set which in turn it follows that $h^{\star}=d$ and $U(y)=\mathcal{V}_b(y)$ for all $0\leq y\leq b$.

Similar arguments may be adapted to deal with a finite maturity counterpart of the stopping problem (\ref{eq:OSP}) for which case the proof of unique solution to (\ref{eq:varineq}) is reduced to showing uniqueness of curved stopping boundary $h^{\star}(t)$ solving nonlinear integral equation (\ref{eq:ineq4}). This approach was used in Jacka \cite{Jacka}, Peskir \cite{Peskir} and Surya \cite{Surya2007} for the case of pricing American put option. \exit

\section{Numerical examples}

To exemplify the main results, we discuss some numerical examples for one-sided jump-diffusion process $X$ with Laplace exponent $\psi(\lambda)=\mu\lambda+\frac{\sigma^2}{2}\lambda^2-\frac{a\lambda}{\lambda+c}$ for all $\lambda\in\mathbb{R}$ s.t. $\lambda\neq -c$. See Example \ref{ex:Ex1} for the corresponding scale function. We set $\mu=0.075$, $a=0.5$ and $c=9$ (on average once every two years the firm suffers an instantaneous loss of $10\%$ of its value). We assume that the firm's default level is $b=\log(5)$ and $r=10\%$. The issuer calls the existing contract with a new one offering $\widetilde{\alpha}=-\$5$ less default coverage with lower premium rate $\widetilde{p}=-2.5\%$ than the existing credit default swap, subject to the switching cost $\gamma=-\$1$.

\begin{figure}[ht!]
\centering
\begin{tabular}{cc}
%\hline
\subf{\includegraphics[width=72.5mm]{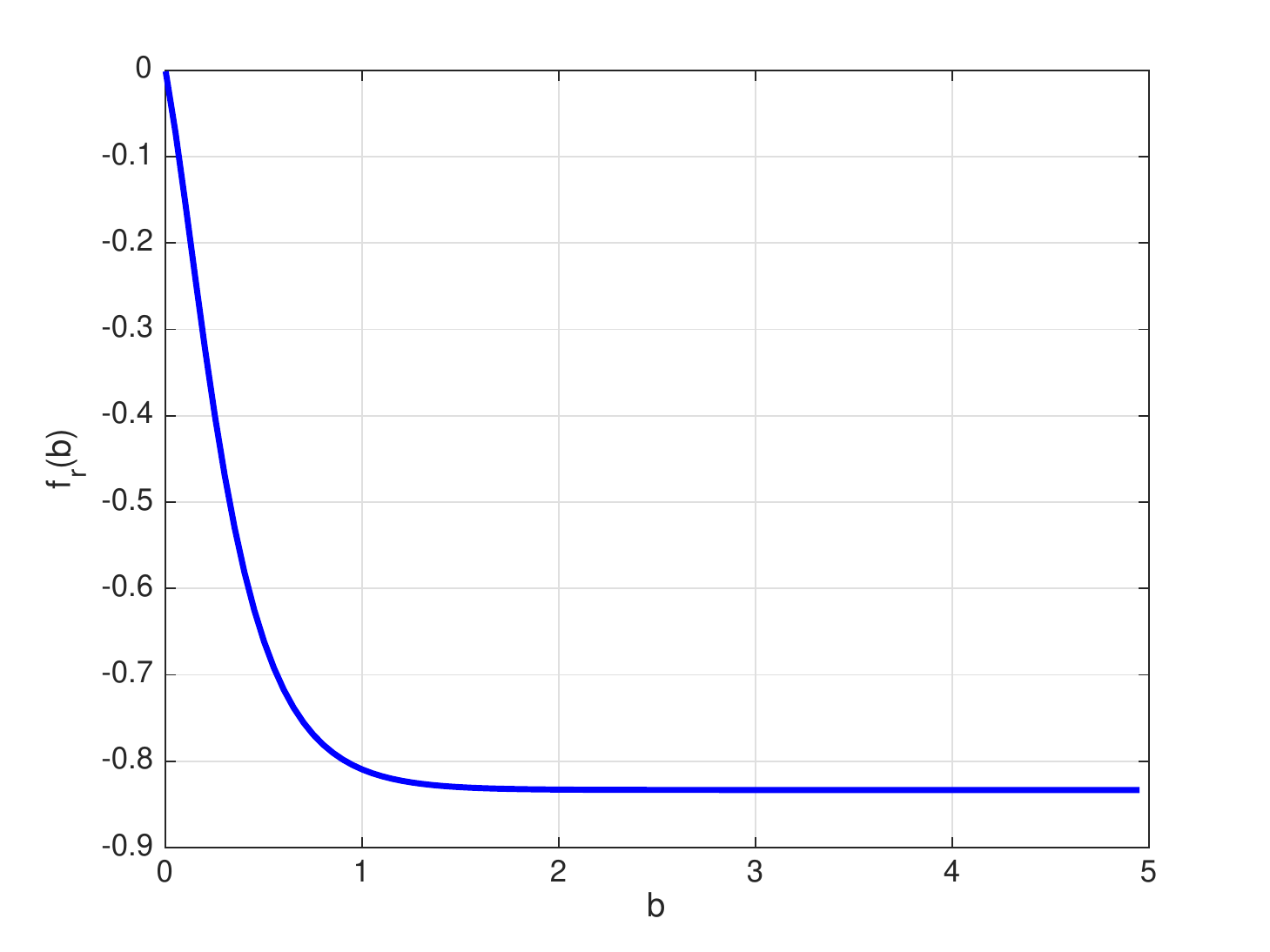}}
       {the case $\sigma=0$.}

&
\subf{\includegraphics[width=72.5mm]{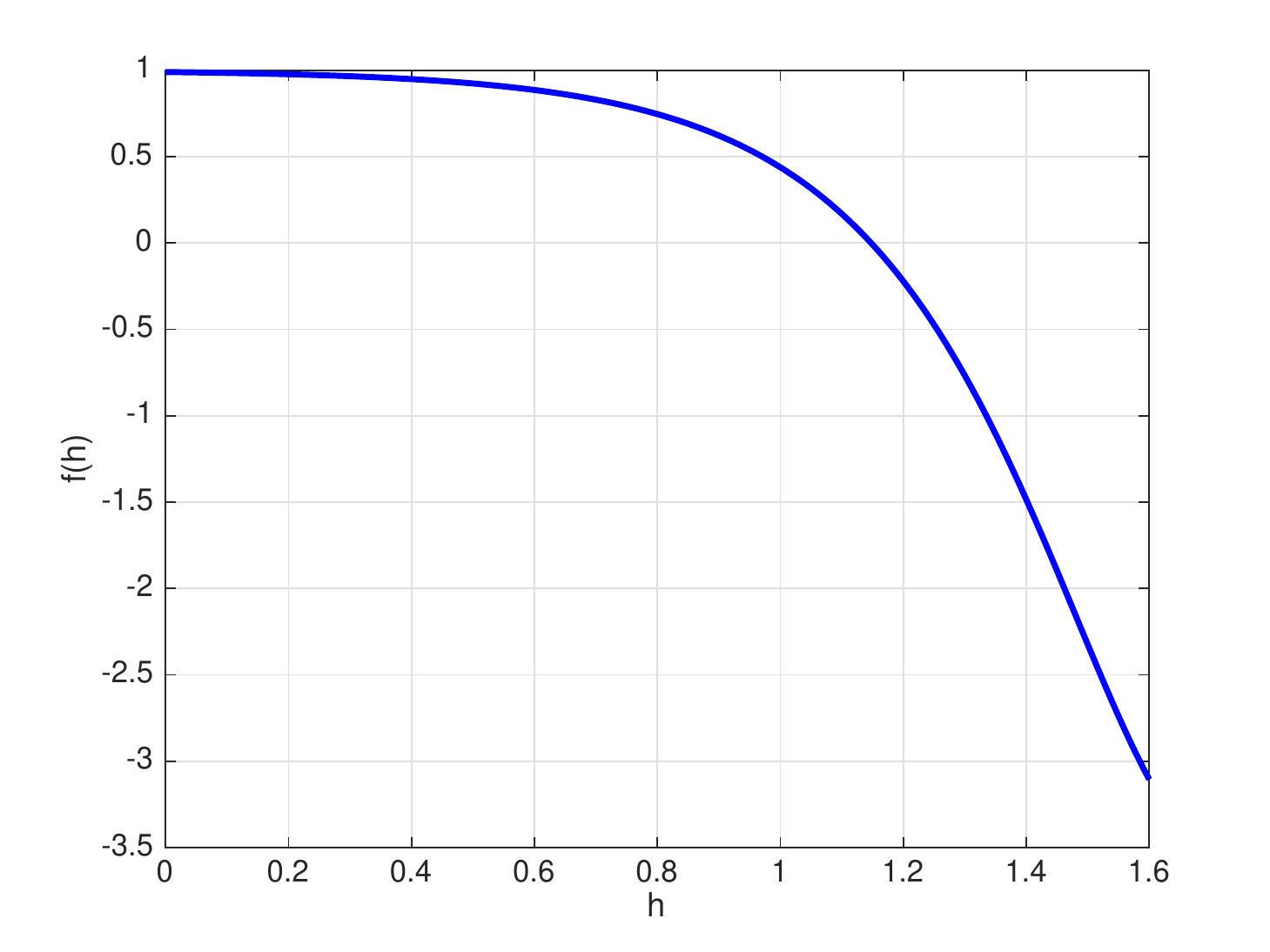}}
     {the case $\sigma=0$.}
\\
%\hline
\subf{\includegraphics[width=72.5mm]{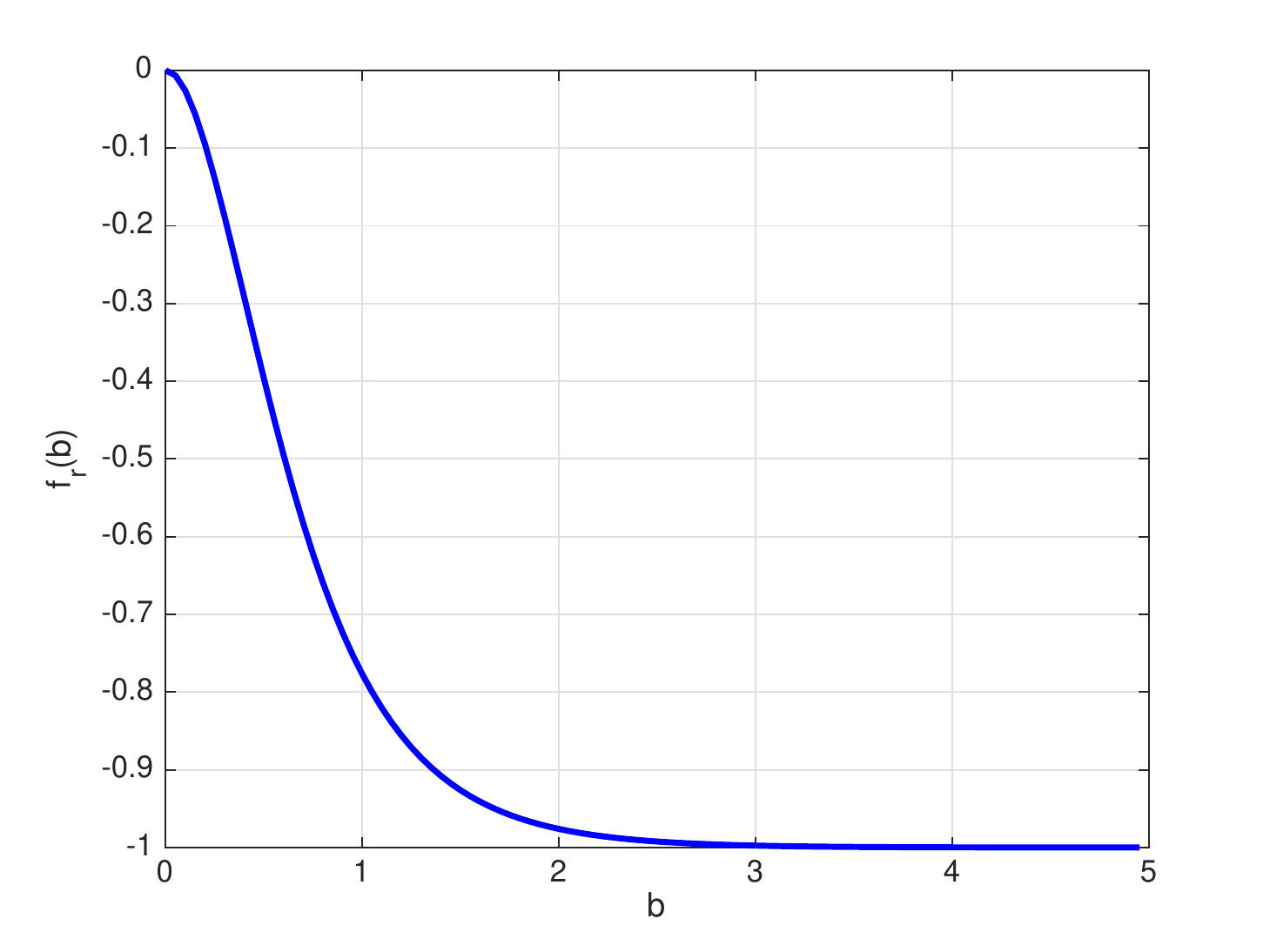}}
       {the case $\sigma=0.2$.}
&
\subf{\includegraphics[width=72.5mm]{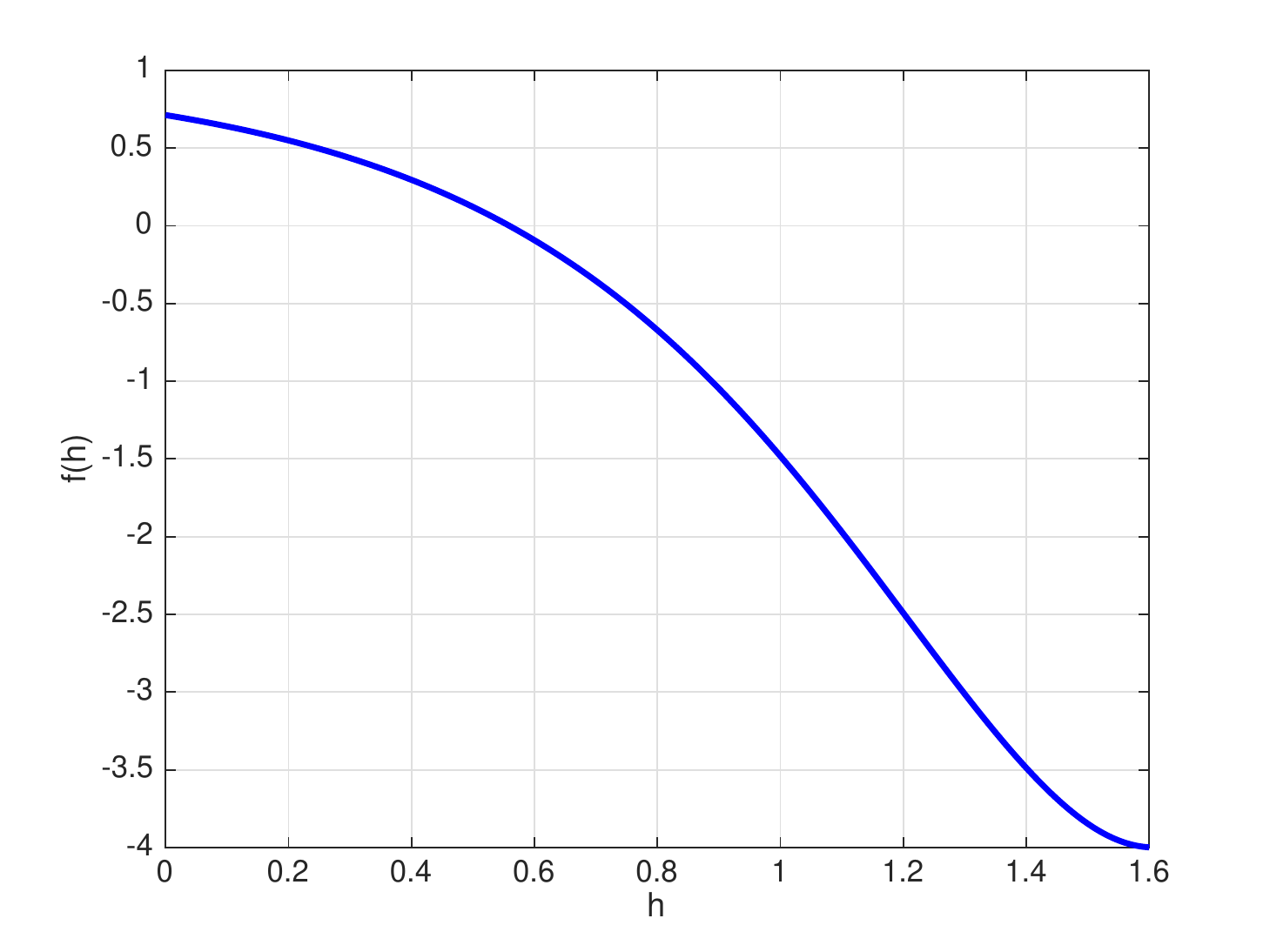}}
     {the case $\sigma=0.2$.}
\\
%\hline
\end{tabular}
\caption{Plots of the functions $f_r(b)$ (\ref{eq:FrB}) and $f(h)$, the left hand side of (\ref{eq:bigroot}).}\label{fig:gam1}
\end{figure}

We consider two cases: $\sigma=0$ and $\sigma=0.2$. The first case corresponds to the underlying process $X$ of the firm value having paths of bounded variation, whereas the other with unbounded variation. Figure \ref{fig:gam1} displays the function $f_r(b)$ (\ref{eq:FrB}) introduced in the proof of uniqueness of the solution to eqn. (\ref{eq:bigroot}), and the function $f(h)$, the left hand side of (\ref{eq:bigroot}). In both cases we notice that the two functions exhibit decreasing property which is required in the proof, in particular the function $f(h)$ has a unique root $h=h^{\star}$ below which $f(h)$ is negative.

\begin{figure}[ht!]
\centering
\begin{tabular}{cc}
%\hline
\subf{\includegraphics[width=72.5mm]{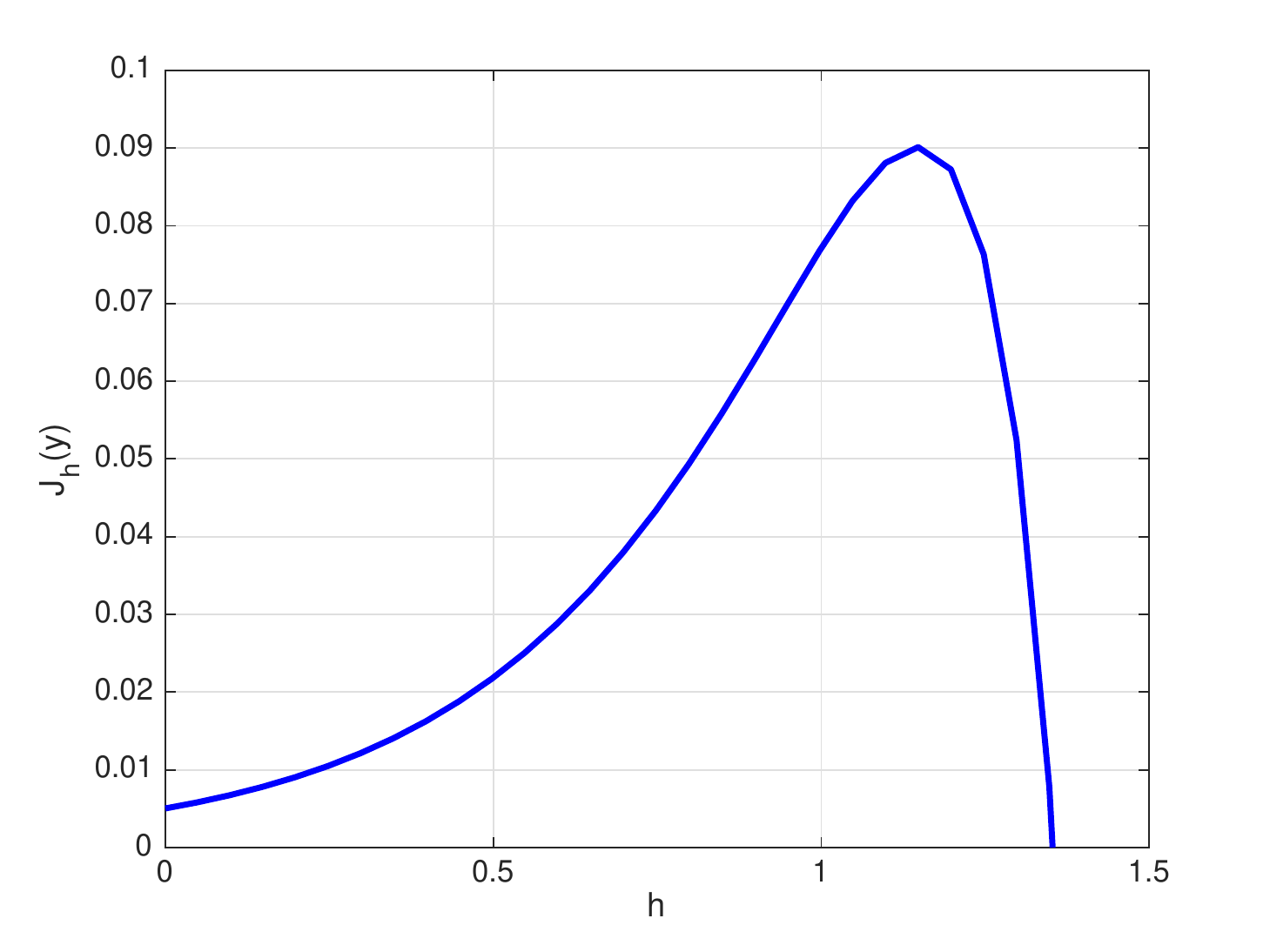}}
       {the case $\sigma=0$.}

&
\subf{\includegraphics[width=72.5mm]{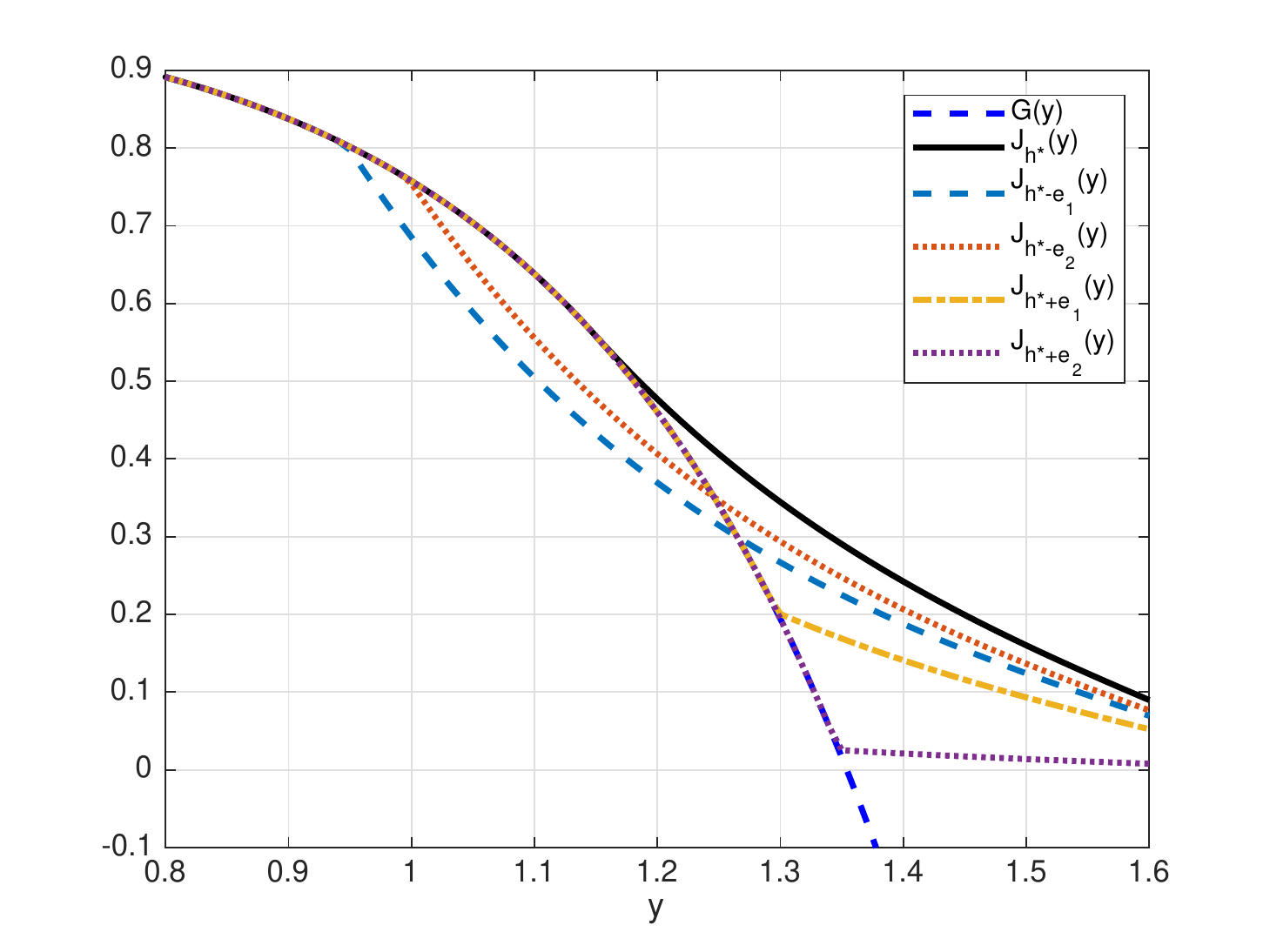}}
     {the case $\sigma=0$.}
\\
%\hline
\subf{\includegraphics[width=72.5mm]{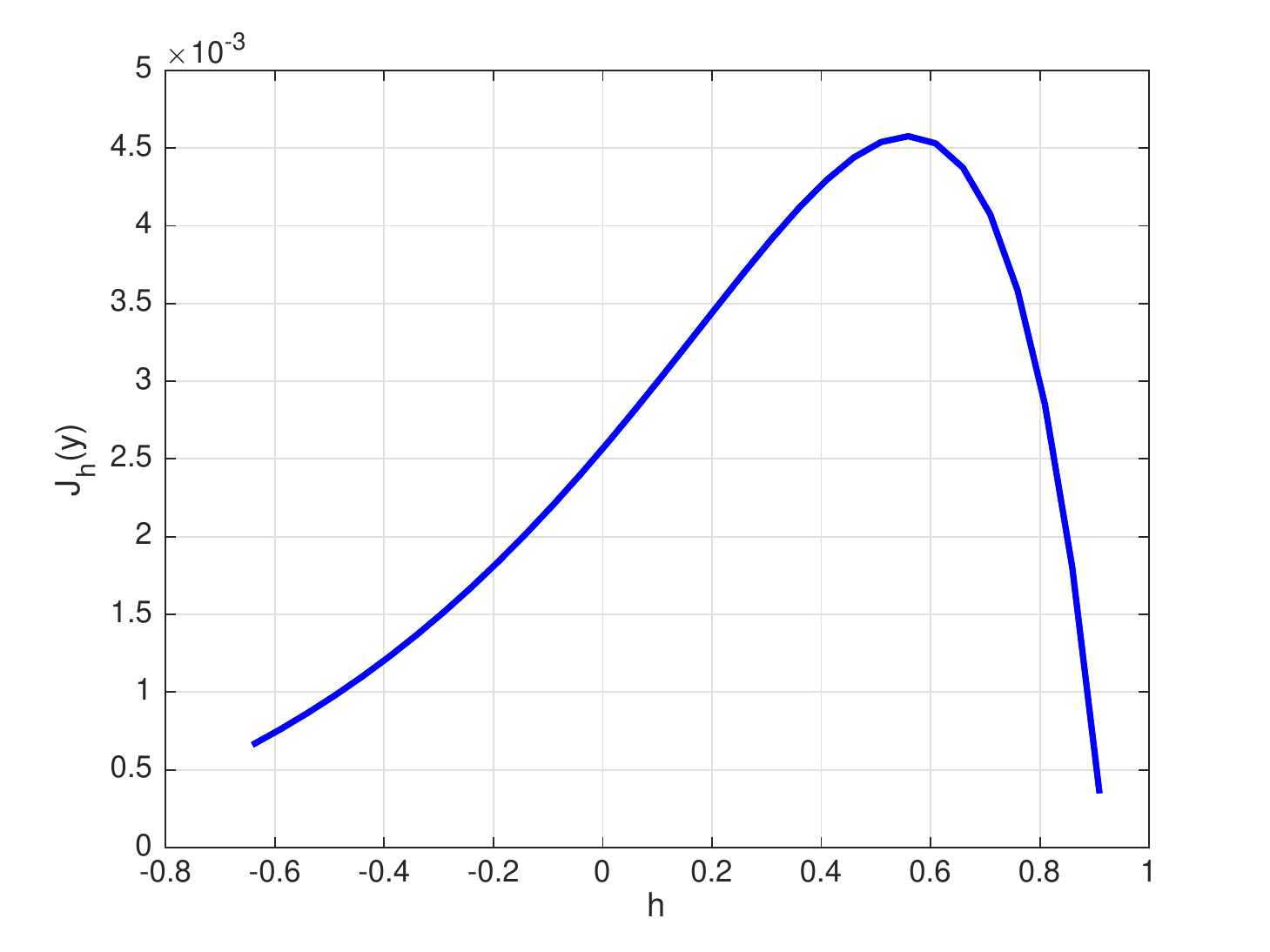}}
       {the case $\sigma=0.2$.}
&
\subf{\includegraphics[width=72.5mm]{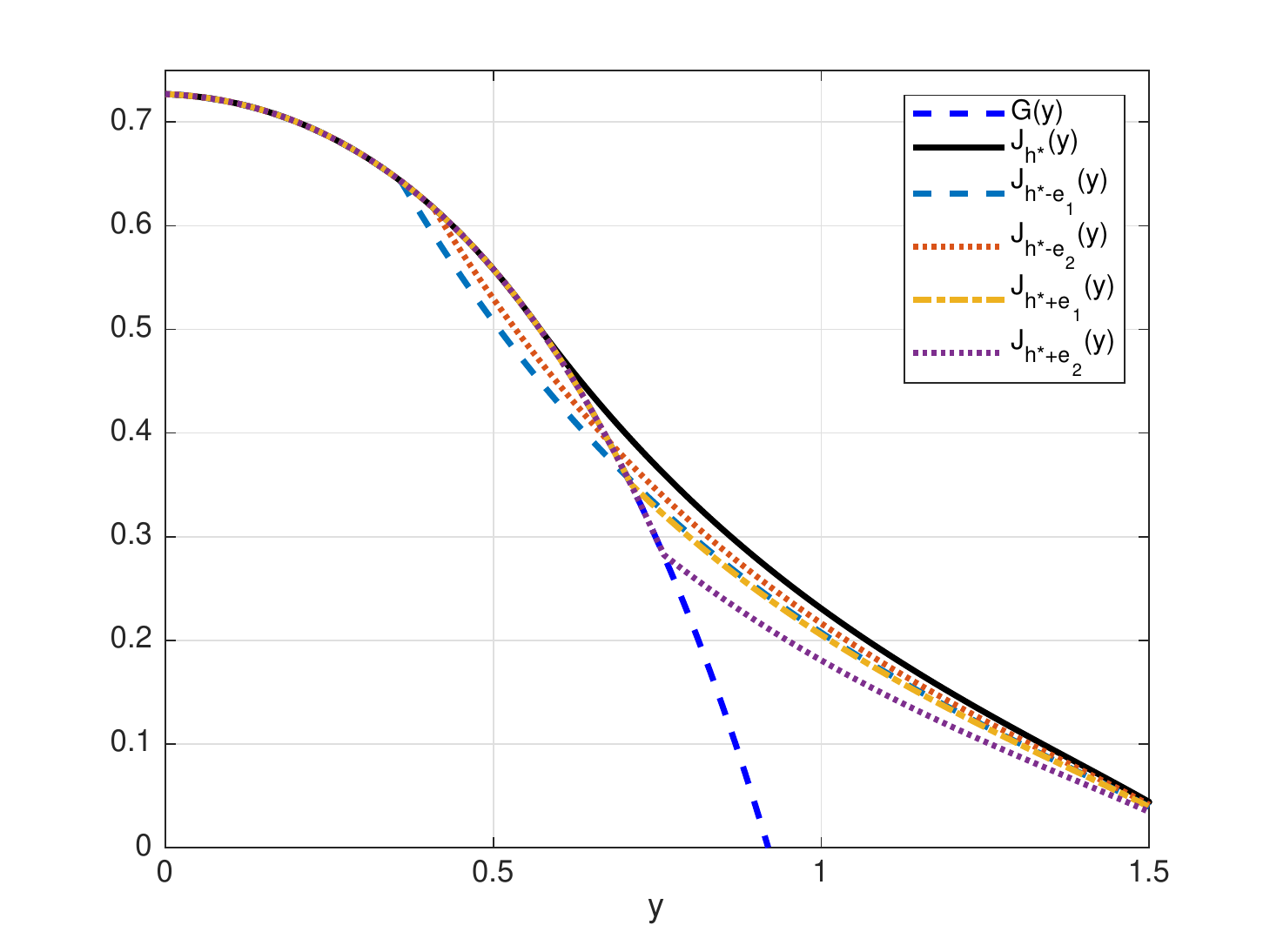}}
     {the case $\sigma=0.2$.}
\\
%\hline
\end{tabular}
\caption{The function $h\rightarrow J_h(y)$ and the value function of (\ref{eq:OSP}).}\label{fig:gam2}
\end{figure}

Figure \ref{fig:gam2} presents the shape of the function $J_h(y)$ (\ref{hy}) for various values of $h<y$ and the value function $\widetilde{\mathcal{V}}_b(y)=J_{h^{\star}}(y)$. The optimal stopping level for the case $\sigma=0$ is $h^{\star}=1.1476$, whereas $h^{\star}=0.5590$ for $\sigma=0.2$. In both cases, we see that the function $h\rightarrow J_h(y)$ achieves the maximum value at $h=h^{\star}$. As a result, the value function $\mathcal{V}_{b}(y)$ of the optimal stopping (\ref{eq:OSP}) dominates the pay-off function $\mathcal{G}_b(y)$ and sub-optimal solution $J_{h^{\star}\pm\varepsilon}(y)$, for $\varepsilon>0$, of the stopping problem (\ref{eq:OSP}) for all values of $0\leq y\leq b$. The value function $\widetilde{\mathcal{V}}_b(y)$ is positively valued and is decreasing. Unlike sub-optimal solutions $J_{h^{\star}\pm\varepsilon}(y)$, $\widetilde{\mathcal{V}}_b(y)$ satisfies both continuous and smooth pasting conditions at the optimal boundary $h^{\star}$ for both cases $\sigma=0$ and $\sigma=0.2$, all of which confirm the main results.

Applying the infinitesimal generator $\mathcal{L}_Y$ (\ref{eq:generator}) to the payoff function $\mathcal{G}_b(y)$, expressed in terms of the scale function (\ref{eq:ScaleF}), the function $(\mathcal{L}_Y-r)\mathcal{G}_b(y)$ is plotted for all $0\leq y\leq b$ in Figure \ref{fig:gam3}. The graph shows that the function has the same value $-0.1$ for all $y\in[0,b]$, which is indeed equal to $r\gamma$ (\ref{eq:ass1}).

\begin{figure}[ht!]
\centering
\begin{tabular}{cc}
%\hline
\subf{\includegraphics[width=72.5mm]{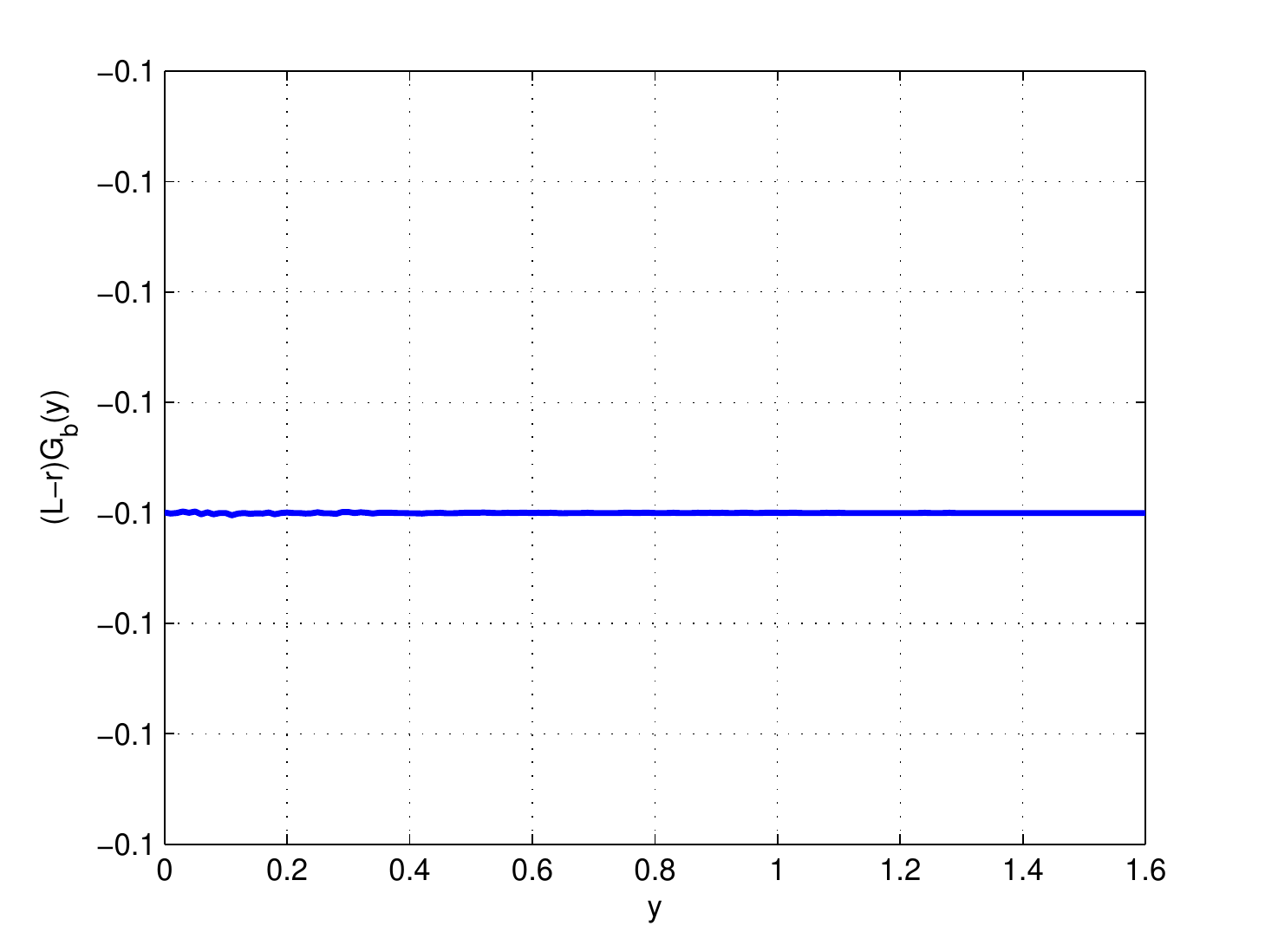}}
       {the case $\sigma=0$.}

&
\subf{\includegraphics[width=72.5mm]{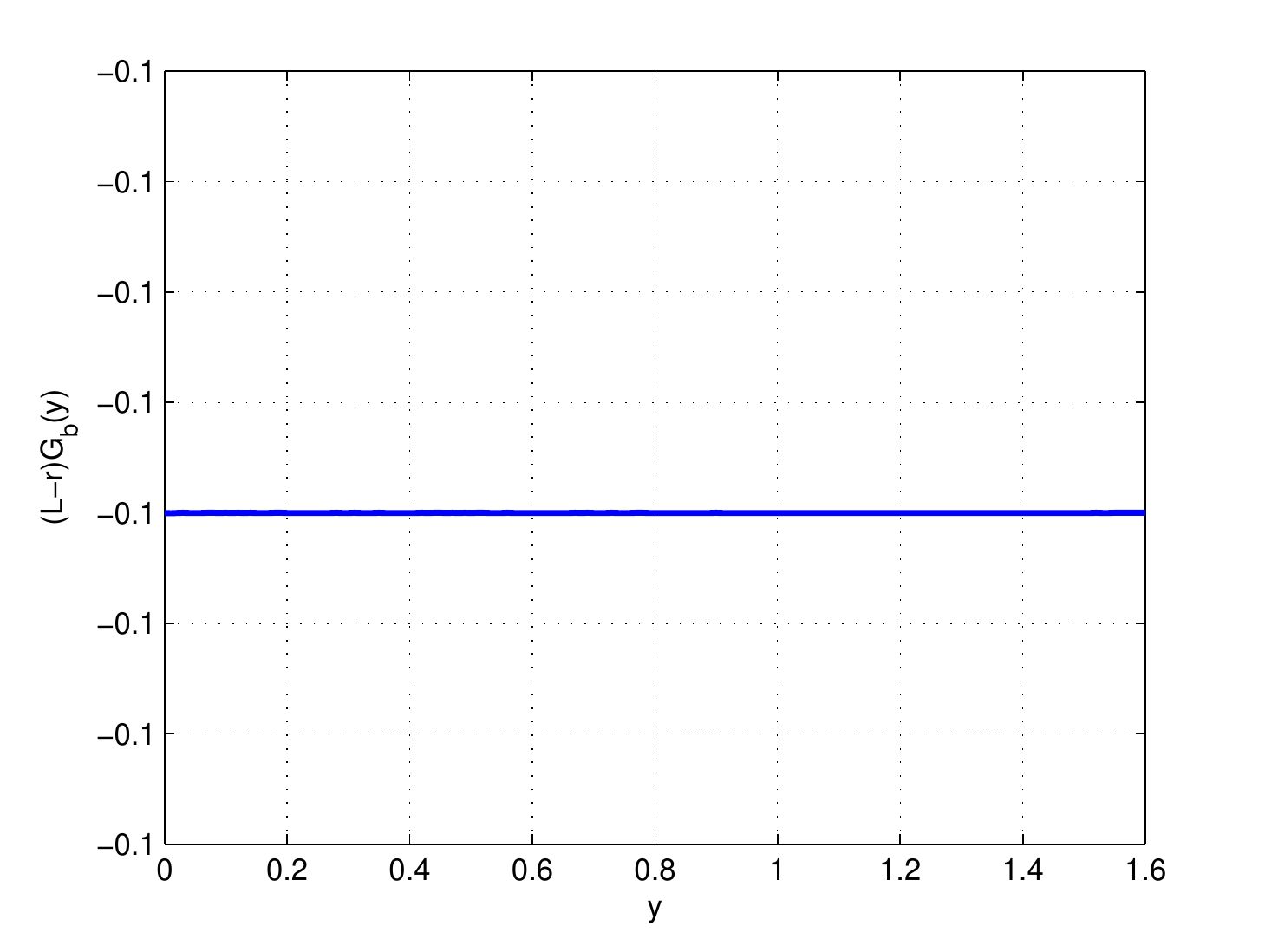}}
     {the case $\sigma=0.2$.}
\\
%\hline
\end{tabular}
\caption{The function $(\mathcal{L}_Y-r)\mathcal{G}_b(y)$ for $0\leq y\leq b$.}\label{fig:gam3}
\end{figure}

\section{Conclusion}
This paper presents optimal valuation of American call option for credit default swaps under drawdown of L\'evy process with only downward jumps. The option gives an opportunity for the issuer to call back the existing swaps contract by replacing it with a new one at reduced premium payment rate with slightly lower default coverage subject to paying some costs. The valuation is formulated in terms of optimal stopping which a risk-neutral protection buyer solves over a class of stopping times adapted to the natural filtration of the asset price. Solution to the optimal stopping problem exists under some constraints imposed on the new premium rate, default coverage and the costs to call the contract. The solution is given explicitly in terms of the scale function of the L\'evy process. Optimality and uniqueness of the solution are established using martingale approach for drawdown and convexity of the scale function under Esscher transform of measure. Numerical examples are presented to confirm the main results that the solution of the stopping problem (the fair value of the call option) is positively valued, decreasing and
has majorant property over the payoff function. Furthermore, it satisfies
both continuous and smooth pasting conditions which holds regardless of the regularity of the sample paths of the L\'evy process.

\end{document}